\declaretheorem{theorem}
\declaretheorem[sibling=theorem]{lemma}
\declaretheorem[sibling=theorem]{corollary}
\declaretheorem[style=definition]{definition}
\newcommand{\declarecolor}[2]{\definecolor{#1}{RGB}{#2}\expandafter\newcommand\csname #1\endcsname[1]{\textcolor{#1}{##1}}}
\title{Edge-Weighted Online Bipartite Matching\footnote{
This paper merges and refines the results in
\href{https://arxiv.org/abs/1704.05384}{arXiv:1704.05384v2},
\href{https://arxiv.org/abs/1910.02569}{arXiv:1910.02569},
and
\href{https://arxiv.org/abs/1910.03287}{arXiv:1910.03287}.
In particular,
we fix a bug in
\href{https://arxiv.org/abs/1910.03287}{arXiv:1910.03287}
and have a smaller competitive ratio as a result. 
Appendix~\ref{sec:origin} discusses the
connections between the primal-dual algorithm in this work
and the original algorithm of Fahrbach and Zadimoghaddam.
}}
\author{
    Matthew Fahrbach%
    \thanks{Google Research.
    Email: \href{mailto:fahrbach@google.com}{fahrbach@google.com}.}
    \and
    Zhiyi Huang%
    \thanks{The University of Hong Kong.
    Email: \href{mailto:zhiyi@cs.hku.hk}{zhiyi@cs.hku.hk}.}
    \and
    Runzhou Tao%
    \thanks{Columbia University.
    Email: \href{mailto:runzhou.tao@columbia.edu}{runzhou.tao@columbia.edu}.}
    \and
    Morteza Zadimoghaddam%
    \thanks{Google Research.
    Email: \href{mailto:zadim@google.com}{zadim@google.com}.}
}
\date{\today}
\newcommand{\primal}{\textsc{P}}
\newcommand{\dual}{\textsc{D}}
\newcommand{\type}{\tau}
\newcommand{\chosen}{\textrm{\em selected}}
\newcommand{\notchosen}{\textrm{\em not selected}}
\newcommand{\nonadapt}{\textrm{\em unknown}}
\newcommand{\sender}{\textrm{\em sender}}
\newcommand{\receiver}{\textrm{\em receiver}}
\newcommand{\kmin}{k_{\min}}
\newcommand{\kmax}{k_{\max}}
\newcommand{\exante}{\text{\em ex-ante}}
\newcommand{\expost}{\text{\em ex-post}}
\newcommand{\defeq}{\stackrel{\text{def}}{=}}
\newcommand{\E}{\mathbb{E}}
\newcommand{\argmax}{\arg\max}
\begin{document}

\begin{titlepage}
    \thispagestyle{empty}
    \maketitle
    \begin{abstract}
        \thispagestyle{empty}
        Online bipartite matching and its variants are among the most fundamental
problems in the online algorithms literature.  Karp, Vazirani, and
Vazirani~(STOC 1990) introduced an elegant algorithm for the unweighted problem
that achieves an optimal competitive ratio of $1-\sfrac{1}{e}$.  Later, Aggarwal
et al.~(SODA 2011) generalized their algorithm and analysis to the
vertex-weighted case.  Little is known, however, about the most general
edge-weighted problem aside from the trivial $\sfrac{1}{2}$-competitive greedy
algorithm.  In this paper, we present the first online algorithm that breaks
the long-standing $\sfrac{1}{2}$ barrier and achieves a competitive ratio of at least
$0.5086$.  In light of the hardness result of Kapralov, Post, and
Vondr\'ak~(SODA 2013) that restricts beating a~$\sfrac{1}{2}$ competitive ratio for the
more general problem of monotone submodular welfare maximization, our result
can be seen as strong evidence that edge-weighted bipartite matching is
strictly easier than submodular welfare maximization in the online setting.

The main ingredient in our online matching algorithm is a novel subroutine
called \emph{online correlated selection} (OCS), which takes a sequence of
pairs of vertices as input and selects one vertex from each pair.  Instead of
using a fresh random bit to choose a vertex from each pair, the OCS negatively
correlates decisions across different pairs and provides a quantitative measure
on the level of correlation.  We believe our OCS technique is of
independent interest and will find further applications in other online
optimization problems.

    \end{abstract}
\end{titlepage}

\newpage
\tableofcontents
\pagenumbering{gobble}
\clearpage
\pagenumbering{arabic}


\section{Introduction}
\label{sec:introduction}

Matchings are fundamental structures in graph theory that play
an indispensable role in combinatorial optimization.
For decades, there have been tremendous and ongoing efforts to
design more efficient algorithms for finding maximum matchings in terms of
their cardinality, and more generally, their total weight.
In particular, matchings in bipartite graphs have found countless applications
in settings where it is desirable to assign entities from one set to
those in another set (e.g., matching students to schools, physicians to hospitals,
computing tasks to servers, and impressions in online media to advertisers).
Due to the enormous growth of matching markets in digital domains,
efficient online matching algorithms have become increasingly important.
In particular, search engine companies
have created opportunities for online matching algorithms to
have a massive impact in multibillion-dollar advertising markets.
Motivated by these applications, we consider the problem of matching a set of
impressions that arrive one by one to a set of advertisers that are known in advance.
When an impression arrives, its edges to the advertisers are revealed
and an irrevocable decision has to be made about
to which advertiser the impression should be assigned.
Karp, Vazirani, and Vazirani~\cite{karp1990optimal} gave an elegant online algorithm called
\textsc{Ranking} to find matchings in unweighted bipartite graphs
with a \emph{competitive ratio} of $1-\sfrac{1}{e}$.
They also proved that this is the best achievable competitive ratio.
Further, Aggarwal et al.~\cite{aggarwal2011online} generalized their
algorithm to the vertex-weighted online bipartite matching problem
and showed that the $1-\sfrac{1}{e}$ competitive ratio is still attainable.

The edge-weighted case, however, is much more nebulous.
This is partly due to the fact that no competitive algorithm
exists without an additional assumption.
To see this, consider two instances of the edge-weighted problem, each with one advertiser and two impressions.
The edge weight of the first impression is $1$ in both instances,
and the weight of the second impression is $0$ in the first instance and $W$ in the second instance,
for some arbitrarily large $W$.
An online algorithm cannot distinguish between the two instances when the
first impression arrives,
but it has to decide whether or not to assign this impression to the advertiser.
Not assigning it gives a competitive ratio of $0$ in the first instance,
and assigning it gives an arbitrarily small competitive ratio of $\sfrac{1}{W}$ in the second.
This problem cannot be tackled unless assigning both impressions to the advertiser is an option.


In display advertising, assigning more impressions to an advertiser than they
paid for only makes them happier.
In other words, we can assign multiple impressions to any
given advertiser.
However, instead of achieving the weights of all the edges assigned to it,
we only acknowledge the maximum weight
(i.e., the objective equals the sum of the heaviest edge weight assigned to each advertiser).
This is equivalent to allowing the advertiser to dispose of previously matched
edges for free to make room for new, heavier edges.
This assumption is commonly known as the \emph{free disposal model}.
In the display advertising literature~\cite{feldman2009online,korula2013bicriteria},
the free-disposal assumption is well received and widely applied 
because of its natural economic interpretation.
Finally, edge-weighted online bipartite matching with free disposal
is a special case of the monotone submodular welfare maximization problem,
where we can apply known $\sfrac{1}{2}$-competitive greedy algorithms~\cite{fisher1978analysis,lehmann2006combinatorial}.

\subsection{Our Contributions}

Despite thirty years of research in online matching since the seminal work of
Karp et al.~\cite{karp1990optimal},
finding an algorithm for edge-weighted online bipartite matching that
achieves a competitive ratio greater than $\sfrac{1}{2}$
has remained a tantalizing open problem.
This paper gives a new online algorithm
and answers the question affirmatively,
breaking the long-standing $\sfrac{1}{2}$ barrier (under free disposal).

\begin{theorem}
There is a 0.5086-competitive algorithm for
edge-weighted online bipartite matching.
\end{theorem}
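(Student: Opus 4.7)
The plan is a primal-dual argument that uses the online correlated selection (OCS) as a black box. I would first set up the natural LP relaxation of edge-weighted bipartite matching under free disposal, whose dual has variables $\alpha_v$ for each offline vertex and $\beta_u$ for each online vertex subject to $\alpha_v + \beta_u \ge w_{uv}$ on every edge. The goal is to construct, from the algorithm's execution, random dual assignments whose expected sum upper-bounds $F \cdot \text{ALG}$ for some $F \le 1/0.5086$, and whose expectations approximately satisfy the dual constraints; weak duality then yields the claimed competitive ratio of $0.5086$.

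For the algorithm itself, I would maintain a ``fill level'' $y_v \in [0, 1]$ for each offline vertex $v$, tracking the pseudo-revenue accumulated on $v$ so far. When an online vertex $u$ arrives, a gain-sharing function $g : [0, 1] \to [0, 1]$ would score each neighbor by a quantity proportional to $w_{uv}\bigl(1 - g(y_v)\bigr)$; the algorithm identifies the two best-scoring candidates and hands the pair to the OCS, which selects one of them while guaranteeing that its decisions are \emph{negatively correlated} with its past picks involving the same offline vertex. After the OCS returns a vertex $v^{\star}$, its tentative match is updated under the free-disposal rule, $y_{v^{\star}}$ grows by the corresponding marginal weight, $\beta_u$ is credited the portion $g(y_{v^{\star}}) \cdot w_{u v^{\star}}$ of the realized weight, and the complementary portion is banked into $\alpha_{v^{\star}}$.

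To verify approximate dual feasibility for an arbitrary edge $(u, v)$, I would fix the realization on all other vertices and lower bound $\mathbb{E}[\alpha_v]$ by translating the OCS negative-correlation guarantee into an integral estimate on the growth of $y_v$ across the rounds in which $v$ has been a candidate: the probability that $v$ has been selected at least once in any such window strictly exceeds the independent-sampling bound by a quantifiable margin, and this margin feeds directly into the lower bound on $\alpha_v$. The sum $\alpha_v + \beta_u$ would then be compared against $w_{uv}$ case by case depending on $v$'s history.

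The main obstacle will be variational: the function $g$ must be chosen so that $\alpha_v + \beta_u \ge c \cdot w_{uv}$ simultaneously for every possible history of $v$'s fill level, which reduces to a functional equation relating $g$ to its own integral against the OCS correlation parameter. Solving this fixed-point equation, and plugging in the explicit OCS bound that (I assume is) proved earlier in the paper, determines the optimal constant; a careful numerical evaluation of the resulting formula yields $c \ge 0.5086$. The remaining steps, once the functional equation is pinned down, are routine but delicate calculus.
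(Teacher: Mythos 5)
Your high-level architecture (primal-dual, OCS as a black box, translate the negative-correlation guarantee into a lower bound on the primal/dual) matches the paper, but the concrete machinery you sketch has a gap that is fatal for the \emph{edge-weighted} setting specifically. You propose to track a single scalar fill level $y_v \in [0,1]$ per offline vertex and to drive the dual splits by a continuous function $g(y_v)$. That is the right shape for unweighted or vertex-weighted matching, but under free disposal the ``value'' of an offline vertex is the maximum edge weight matched to it, which is a random variable whose full distribution matters: two vertices can have the same expected heaviest edge (same scalar fill) yet require very different decisions when a new edge of intermediate weight arrives. The paper sidesteps this by decomposing each offline vertex across \emph{weight-levels}, maintaining the CCDF $y_i(w) = \Pr[\text{heaviest edge matched to } i \text{ has weight} \ge w]$ and a weight-level-indexed dual $\alpha_i(w)$ with $\alpha_i = \int_0^\infty \alpha_i(w)\,dw$. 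The gain-sharing parameters are then indexed by the \emph{integer} count $k_i(w)$ of randomized rounds in which $i$ appeared with edge weight at least $w$ — matching the discrete form of the OCS bound $1 - 2^{-k}(1-\gamma)^{k-1}$ — rather than by a real-valued fill level; the optimization becomes a finite factor-revealing LP, not a functional fixed-point equation.

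Two further pieces are missing. First, because the offered increment $\Delta_i^R \beta_j$ is computed by integrating the per-level gain $b(k_i(w))$ only over $w \le w_{ij}$, the paper must \emph{subtract} a correction term $\tfrac{1}{2}\int_{w_{ij}}^\infty \sum_{\ell < k_i(w)} a(\ell)\,dw$ from each candidate's offer to cancel the help from $\alpha_i(w)$ at weight-levels above $w_{ij}$; without this cancellation, approximate dual feasibility fails in the case analysis. Your update rule (credit $g(y_{v^\star}) w_{uv^\star}$ to $\beta_u$, bank the rest into $\alpha_{v^\star}$) has no analogue of this. Second, you only describe handing pairs to the OCS, but the algorithm must also support a \emph{deterministic} round (matching to a single best neighbor, scaled by a factor $\kappa \in (1,2)$) and an unmatched round; deterministic rounds are what justify the boundary condition $\sum_\ell a(\ell) \ge \Gamma$ and Case~4 of dual feasibility, and without them the case where one candidate dominates cannot be handled.
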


\noindent
Given the hardness result of Kapralov, Post, and Vondr\'ak~\cite{kapralov2013online}
that restricts beating a competitive ratio of $\sfrac{1}{2}$ for monotone submodular welfare maximization,
our algorithm shows that edge-weighted bipartite matching is strictly easier
than submodular welfare maximization in the online setting.

From now on,
we will use the more formal terminologies of offline and online vertices
in a bipartite graph instead of advertisers and impressions.
One of our main technical contributions is a novel algorithmic ingredient called
\emph{online correlated selection} (OCS), which is
an online subroutine that takes a sequence of pairs of vertices as input
and selects one vertex from each pair.
Instead of using a fresh random bit to make each of its decisions,
the OCS asks to what extent the decisions across different pairs can be negatively correlated,
and ultimately guarantees that a vertex appearing in~$k$ pairs is selected at least once
with probability strictly greater than $1 - 2^{-k}$.
See Section~\ref{sec:ocs-intro} for a short introduction
and Section~\ref{sec:ocs} for the full details.

Given an OCS,
we can achieve a better than $\sfrac{1}{2}$ competitive ratio for
unweighted online bipartite matching
with the following (barely) randomized algorithm.
For each online vertex, either pick a pair of offline neighbors
and let the OCS select one of them, or choose one offline neighbor deterministically.
More concretely, among the neighbors that have not been matched deterministically,
find the least-matched ones (i.e., those that have appeared in the least number of pairs).
Pick two if there are at least two of them;
otherwise, choose one deterministically.
We analyze this algorithm in Appendix~\ref{app:unweighted}.

Although the competitive ratio of the algorithm above is far worse than
the optimal $1-\sfrac{1}{e}$ ratio by Karp et al.~\cite{karp1990optimal},
it benefits from improved generalizability.
To extend this algorithm to the edge-weighted problem,
we need a reasonable notion of ``least-matched'' offline neighbors.
Suppose one neighbor's heaviest edge weight is either $1$ or $4$
each with probability $\sfrac{1}{2}$,
another neighbor's heaviest edge is $2$ with certainty,
and their edge weights with the current online vertex are both~$3$.
Which one is less matched?
To remedy this, we use the online primal-dual framework for matching problems
by Devanur, Jain, and Kleinberg~\cite{devanur2013randomized}, along with
an alternative formulation of the edge-weighted online bipartite matching problem
by Devanur et al.~\cite{DevanurHKMY/TEAC/2016}.
In short, we account for the contribution of each offline vertex
by weight-levels, and at each weight-level we consider the probability
that the heaviest edge matched to the vertex has weight at least this level.
This is the complementary cumulative distribution function (CCDF) of the heaviest edge weight,
and hence we call this the CCDF viewpoint.
Then for each offline neighbor,
we utilize the dual variables to compute an offer at each weight-level,
should the current online vertex be matched to it.
The neighbor with the largest net offer aggregating over all weight-levels is considered the ``least-matched''.
We introduce the online primal-dual framework and the CCDF
viewpoint in Section~\ref{sec:preliminary}.
Then we formally present our edge-weighted matching algorithm
in Section~\ref{sec:edge-weighted},
followed by its analysis.
Lastly, Appendix~\ref{app:hardness} includes hard instances that show
the competitive ratio of our algorithm is nearly tight.

\subsection{Related Works}

The literature of online weighted bipartite matching algorithms is extensive, but most of these works are devoted to achieving competitive ratios greater than $\sfrac{1}{2}$ by assuming that offline vertices have large capacities or that some stochastic information about the online vertices is known in advance.
Below we list the most relevant works and refer interested readers to the excellent survey of Mehta~\cite{mehta2013online}.
We note that there have recently been several significant advances in more general settings, including different arrival models and general (non-bipartite) graphs~\cite{huang2018match, gamlath2019beating,gamlath2019online, huang2019tight}. 

\paragraph{Large Capacities.}
%
The capacity of an offline vertex is the number of online vertices that can be assigned to it.
Exploiting the large-capacity assumption to beat $\sfrac{1}{2}$ dates back
two decades ago to Kalyanasundaram and Pruhs~\cite{kalyanasundaram2000optimal}. 
Feldman et al.~\cite{feldman2009online} gave a $(1-\sfrac{1}{e})$-competitive algorithm for Display Ads, which is equivalent to edge-weighted online bipartite matching assuming large capacities.
Under similar assumptions, the same competitive ratio was obtained for 
AdWords~\cite{mehta2005adwords, buchbinder2007online}, in which offline vertices have some budget constraint on the total weight that can be assigned to them rather than the number of impressions.
From a theoretical point of view, one of the primary goals in the online matching literature is to provide algorithms with competitive ratio greater than $\sfrac{1}{2}$ without making any assumption on the capacities of offline vertices.

\paragraph{Stochastic Arrivals.}
If we have knowledge about the arrival patterns of online vertices, we can often leverage this information to design better algorithms.
Typical stochastic assumptions include
assuming the online vertices are drawn from some known or unknown distribution~\cite{feldman2009online2,karande2011online,devanur2011near,haeupler2011online,
manshadi2012online,
mehta2012online,jaillet2013online},
or that they arrive in a random order~\cite{goel2008online,devanur2009adwords,feldman2010online, mahdian2011online,
mirrokni2012simultaneous,mehta2015online,huang2019online}.
These works achieve a $1-\varepsilon$ competitive ratio if the large capacity assumption holds in addition to the stochastic assumptions,
or at least $1-\sfrac{1}{e}$ for arbitrary capacities.
Korula, Mirrokni, and Zadimoghaddam~\cite{korula2018online} showed that the greedy algorithm is $0.505$-competitive for the more general problem of submodular welfare maximization if the online vertices arrive in a random order, without any assumption on the capacities.
The random order assumption is particularly justified because
Kapralov et al.~\cite{kapralov2013online} proved that beating $\sfrac{1}{2}$ for submodular welfare maximization in the oblivious adversary model implies $\textbf{NP} = \textbf{RP}$.





\section{Preliminaries}
\label{sec:preliminary}

The \emph{edge-weighted online matching} problem considers a bipartite graph $G
= (L, R, E)$, where $L$ and $R$ are the sets of vertices on the left-hand side
(LHS) and right-hand side (RHS), respectively, and $E \subseteq L \times R$ is
the set of edges.  Every edge $(i, j) \in E$ is associated with a nonnegative
weight $w_{ij} \ge 0$, and we can assume without loss of generality that this
is a complete bipartite graph, i.e., $E = L \times R$, by assigning zero
weights to the missing edges.

The vertices on the LHS are offline in that they are all known to the algorithm
in advance.  The vertices on the RHS, however, arrive online one at a time.
When an online vertex $j \in R$ arrives, its incident edges and their weights
are revealed to the algorithm, who must then irrevocably match $j$ to an
offline neighbor.  Each offline vertex can be matched any number of times, but
only the weight of its heaviest edge counts towards the objective.  This is
equivalent to allowing a matched offline vertex $i$, say, to $j$, to be
rematched to a new online vertex $j'$ with edge weight $w_{ij'} > w_{ij}$,
disposing of vertex $j$ and edge $(i, j)$ for free.
This assumption is known as the \emph{free disposal model}.

The goal is to maximize the total weight of the matching.  A randomized
algorithm is \emph{$\Gamma$-competitive} if its expected objective value
is at least $\Gamma$ times the offline optimal in hindsight,
for any instance of edge-weighted online matching.
We refer to $0 \le \Gamma \le 1$ as the
\emph{competitive ratio} of the algorithm.

\subsection{Complementary Cumulative Distribution Function Viewpoint}
\label{sec:histogram-view}

Next we describe an alternative formulation of the edge-weighted online matching
problem due to Devanur et al.~\cite{DevanurHKMY/TEAC/2016}
that captures the contribution of each offline vertex $i \in L$
to the objective in terms of the complementary cumulative
distribution function (CCDF) of the heaviest edge weight matched to $i$.
We refer to this approach as the \emph{CCDF viewpoint}.

For any offline vertex $i \in L$ and any weight-level $w \ge 0$, let $y_i(w)$ be
CCDF of the weight of the heaviest edge matched to $i$, i.e., the probability
that $i$ is matched to at least one online vertex~$j$ such that $w_{ij} \ge w$.
Then, $y_i(w)$ is a non-increasing function of $w$ that takes values between
$0$ and $1$.  Observe that $y_i(w)$ is a step function with polynomially many
pieces, because the number of pieces is at most the number of incident edges.
Hence, we will be able to maintain $y_i(w)$ in polynomial time.

The expected weight of the heaviest edge matched to $i$ then equals the area
under $y_i(w)$, i.e.:
\begin{equation}
    \label{eqn:histogram-expected-weight}
    \int_0^\infty y_i(w) dw~.
\end{equation}

\noindent
This follows from an alternative formula for the expected value
of a nonnegative random variable involving only its
cumulative distribution function.

We illustrative this idea with an example in Figure~\ref{fig:ccdf_viewpoint}.
Suppose an offline vertex $i$ has four online neighbors $j_1$, $j_2$, $j_3$,
and $j_4$ with edge weights $w_1 < w_2 < w_3 < w_4$.
Further, suppose that $j_1$ is matched to $i$ with certainty,
while $j_2$, $j_3$, and $j_4$ each have some probability of
being matched to $i$. (The latter events may be correlated.)
Next, suppose a new neighbor arrives whose edge weight is also $w_3$. 
The values of $y_i(w)$ are then increased for $w_1 < w \le w_3$ accordingly,
and the total area of the shaded regions is the increment in the
expected weight of the heaviest edge matched to vertex $i$.

\begin{figure}[t]
    \centering
    \begin{subfigure}{.4\textwidth}
    \includegraphics[width=0.85\textwidth]{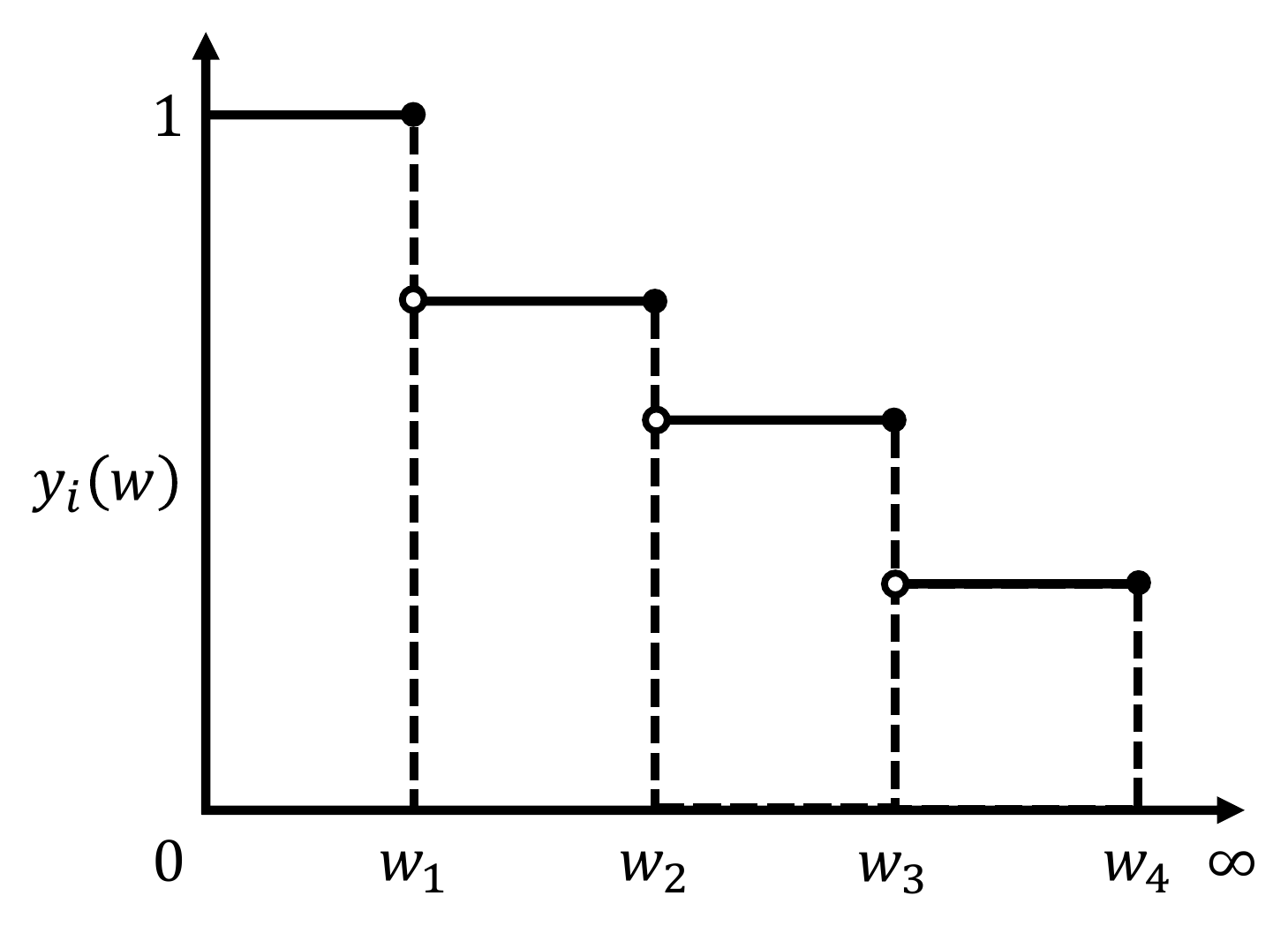}
    \label{fig:histogram}
    \end{subfigure}
    \begin{subfigure}{.4\textwidth}
    \includegraphics[width=0.85\textwidth]{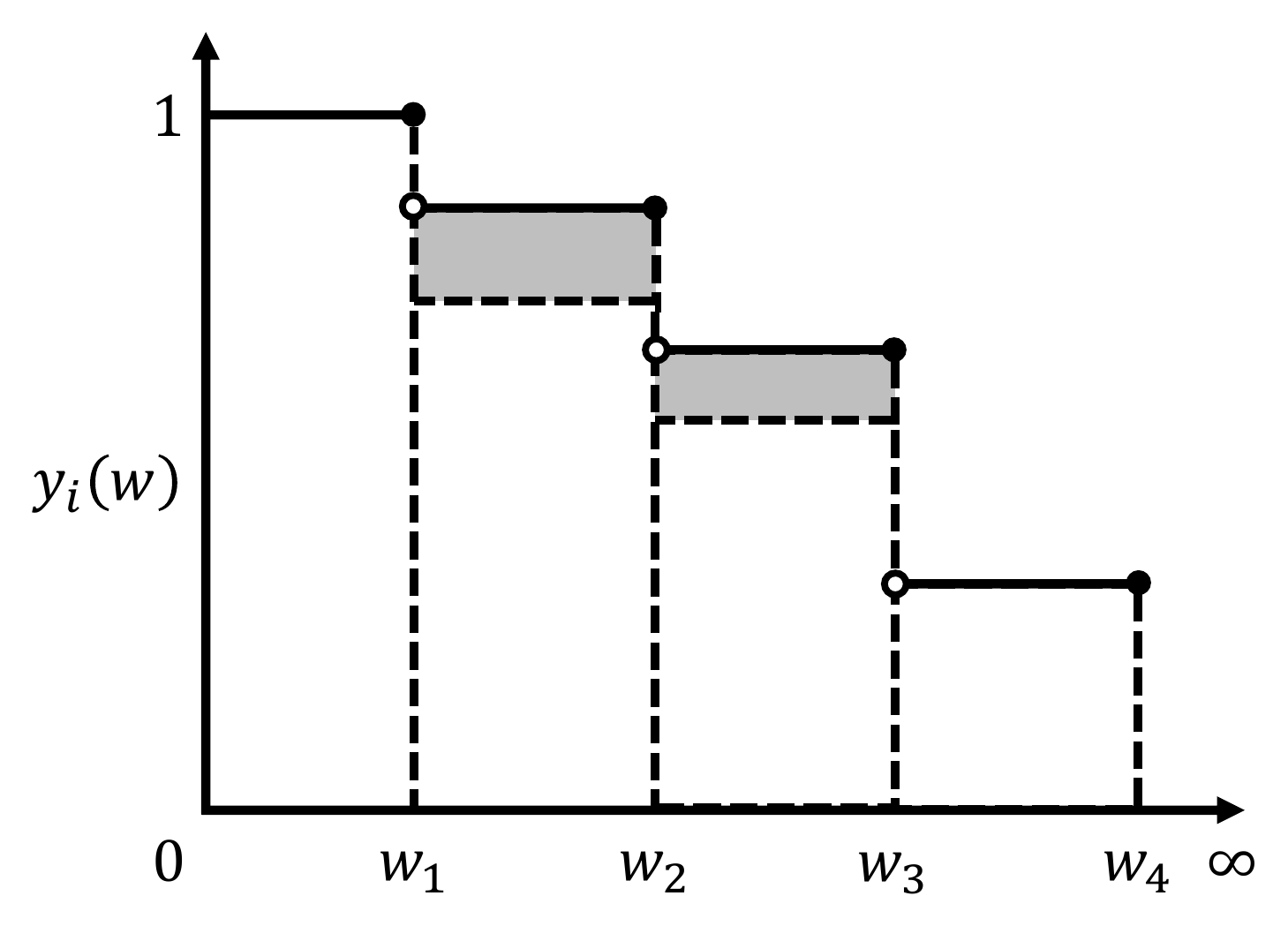}
    \label{fig:histogram-increment}
    \end{subfigure}
    \caption{
      Complementary cumulative distribution function (CCDF) viewpoint.
      The first function is the CCDF of vertex $i$,
      and the second function demonstrates how the CCDF of vertex $i$ is updated.
    }
    \label{fig:ccdf_viewpoint}
\end{figure}

\subsection{Online Primal-Dual Framework}

We analyze our algorithms using a linear program (LP) for
edge-weighted matching under the online primal-dual framework.  Consider the
standard matching LP and its dual below.
We interpret the primal variables $x_{ij}$
as the probability that $(i, j)$ is the heaviest edge matched to
vertex $i$.
\vspace{0.25cm}
\begin{minipage}{0.5\textwidth}
\begin{align*}
    \max \quad 
    & 
    \sum_{i \in L} \sum_{j \in R} w_{ij} x_{ij} \\
    \textrm{s.t.} \quad
    &
    \sum_{j \in R} x_{ij} \le 1 && \forall i \in L \\
    &
    \sum_{i \in L} x_{ij} \le 1 && \forall j \in R \\
    &
    x_{ij} \ge 0 && \forall i \in L, \forall j \in R
\end{align*}
\end{minipage}
\begin{minipage}{0.5\textwidth}
\begin{align*}
    \min \quad
    &
    \sum_{i \in L} \alpha_i + \sum_{j \in R} \beta_j \\[.5ex]
    \textrm{s.t.} \quad
    &
    \alpha_i + \beta_j \ge w_{ij} && \forall i \in L, \forall j \in R \\[2.5ex]
    &
    \alpha_i \ge 0 && \forall i \in L \\[2.5ex]
    &
    \beta_j \ge 0 && \forall j \in R
\end{align*}
\end{minipage}
\vspace{0.10cm}

\noindent
Let $\primal$ denote the primal objective.
If $x_{ij}$ is the probability that $(i, j)$ is the heaviest edge matched to
$i$, then $\primal$ also equals the objective of the algorithm.
Let $\dual$ denote the dual objective.

Online algorithms under the online primal-dual framework maintain not only a
matching but also a dual assignment (not necessarily feasible)
at all times subject to the conditions summarized below.

\begin{lemma}
    \label{lem:online-primal-dual}
    Suppose an online algorithm simultaneously maintains primal and dual
    assignments such that for some constant $0 \le \Gamma \le 1$, the following
    conditions hold at all times:
    \begin{enumerate}
        \item \textbf{\em Approximate dual feasibility:~}
          For any $i \in L$ and any $j \in R$,
          we have $\alpha_i + \beta_j \ge \Gamma \cdot w_{ij}$.
        \item \textbf{\em Reverse weak duality:~}
        The objectives of the primal and dual assignments 
        satisfy $\primal \ge \dual$.
    \end{enumerate}
    Then, the algorithm is $\Gamma$-competitive.
\end{lemma}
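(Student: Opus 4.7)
The plan is to chain the two inequalities $\primal \ge \dual \ge \Gamma \cdot \mathrm{OPT}$, where $\mathrm{OPT}$ denotes the weight of an offline optimal matching. The first inequality is handed to us directly as reverse weak duality; the second will come from ordinary LP weak duality after a rescaling of the dual variables.

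In more detail, first I would observe that the stated matching LP is a relaxation of the offline matching problem: every integral offline matching is a feasible primal solution whose primal objective equals its weight, so the primal LP optimum is at least $\mathrm{OPT}$. Second, I would form the rescaled dual assignment $(\alpha_i/\Gamma,\, \beta_j/\Gamma)$. Approximate dual feasibility gives $\alpha_i/\Gamma + \beta_j/\Gamma \ge w_{ij}$ for every $i \in L$ and $j \in R$, and nonnegativity is clearly preserved (since $0 < \Gamma \le 1$), so this rescaled assignment is feasible for the dual LP. By weak LP duality, its objective upper-bounds the primal LP optimum, hence $\dual / \Gamma \ge \mathrm{OPT}$, i.e., $\dual \ge \Gamma \cdot \mathrm{OPT}$. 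Combining with reverse weak duality yields $\primal \ge \Gamma \cdot \mathrm{OPT}$.

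Finally, since the paper interprets $x_{ij}$ as the probability that $(i, j)$ is the heaviest edge matched to $i$, the primal objective $\primal = \sum_{i,j} w_{ij} x_{ij}$ coincides with the expected objective value of $\alg$ by linearity of expectation. If the maintained dual variables are themselves random, one reads the two conditions as holding in every realization of the algorithm's randomness and then takes expectations; either way the chain of inequalities survives and gives $\E[\alg] \ge \Gamma \cdot \mathrm{OPT}$, establishing $\Gamma$-competitiveness.

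There is no genuine technical obstacle here: the statement is a clean instantiation of the online primal-dual template, and all of the difficulty has been pushed into the two hypotheses that the algorithm designer must arrange. The substantive work in the remainder of the paper will go into constructing dual assignments that simultaneously satisfy approximate dual feasibility and reverse weak duality against the matching produced by the OCS-based algorithm, which is where the competitive ratio $\Gamma = 0.5086$ will eventually be extracted.
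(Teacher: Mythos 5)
Your proof takes exactly the same route as the paper: rescale the dual by $\Gamma^{-1}$ to obtain a feasible dual assignment, invoke LP weak duality to bound the offline optimum, and then chain with reverse weak duality. The extra observations you spell out (the LP is a relaxation so its optimum dominates $\mathrm{OPT}$; $\primal$ equals the expected algorithm value; the $\Gamma=0$ edge case is handled separately before dividing) are precisely the steps the paper's terse two-line proof leaves implicit, so this is the same argument with the gaps filled in.
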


\begin{proof}
    By the first condition, the values $\Gamma^{-1} \alpha_i$ and
    $\Gamma^{-1} \beta_j$ form a feasible dual assignment whose objective
    equals $\Gamma^{-1} \dual$.
    By weak duality of linear programming, the objective of any feasible
    dual assignment upper bounds the optimal
    (i.e., $\dual$ is at least $\Gamma$ times
    the optimal).
    Applying the second condition now proves the lemma.
\end{proof}

\paragraph{Online Primal-Dual in the CCDF Viewpoint.}
In light of the CCDF viewpoint, for any offline vertex $i \in L$ and any
weight-level $w > 0$, we introduce and maintain new variables $\alpha_i(w)$
that satisfy:
\begin{equation}
    \label{eqn:alpha-by-weight} 
    \alpha_i = \int_0^\infty \alpha_i(w) dw~.
\end{equation}

\noindent
Accordingly, we rephrase approximate dual feasibility in
Lemma~\ref{lem:online-primal-dual} in the CCDF viewpoint as:
\begin{equation}
    \label{eqn:approximate-dual-feasible}
    \int_0^{\infty} \alpha_i(w) dw + \beta_j \ge \Gamma \cdot w_{ij}~.
\end{equation}

Concretely, at each step of our primal-dual algorithm,
$\alpha_i(w)$ is a piecewise constant function
with possible discontinuities at the weight-levels
$w \in \{w_{ij} \in E : \text{online vertex $j$ has arrived}\}$.
Initially, all of the $\alpha_i(w)$'s are the zero function.
Then, as each online vertex $j \in R$ arrives, if $j$ is potentially matched to
an offline candidate $i \in L$, the function values of $\alpha_{i}(w)$ are
systematically increased
according to the dual update rules in Section~\ref{sec:edge-weighted-algorithm}.
In contrast, each dual variable $\beta_j$ 
is a scalar value that is initialized to zero and increased only once
during the algorithm, at the time when $j$ arrives.

\section{Online Correlated Selection: An Introduction}
\label{sec:ocs-intro}

This section introduces our novel ingredient for online algorithms,
which we believe to be widely-applicable and of independent interest.
To motivate this technique, consider the following thought experiment 
in the case of \emph{unweighted} online matching, i.e., $w_{ij} \in \{0, 1\}$
for any $i \in L$ and any $j \in R$.

\paragraph{Deterministic Greedy.}
We first recall why all deterministic greedy algorithms that match each
online vertex to an unmatched offline neighbor are at most $\sfrac{1}{2}$-competitive.
Consider an instance with a graph that has two offline and two online vertices.
The first online vertex is adjacent to both offline vertices, and
the algorithm deterministically chooses one of them.
The second online vertex, however,
is only adjacent to the previously matched vertex.

\paragraph{Two-Choice Greedy with Independent Random Bits.}
We can avoid the problem above by matching the first online vertex randomly,
which improves the expected matching size from $1$ to $1.5$.
In this spirit, consider the following two-choice greedy algorithm.
When an online vertex arrives,
identify its neighbors that are least likely
to be matched (over the randomness in previous rounds).
If there is more than one such neighbor, choose any two,
e.g., lexicographically, and match to one with a fresh random bit.
Otherwise, match to the least-matched neighbor deterministically.
We refer to the former as a \emph{randomized round}
and the latter as a \emph{deterministic round}.
Since each randomized round uses a fresh random bit, this is equivalent to
matching to neighbors that have been chosen in the least number of randomized
rounds and in no deterministic round.
Unfortunately, this algorithm is also $\sfrac{1}{2}$-competitive
due to upper triangular graphs.  We defer this standard
example to Appendix~\ref{app:hardness}.

\paragraph{Two-choice Greedy with Perfect Negative Correlation.}
The last algorithm in this thought experiment is an imaginary variant of
two-choice greedy that perfectly and negatively correlates the randomized
rounds so that each offline vertex is matched with certainty after being a
candidate in two randomized rounds.
This is infeasible in general.
Nevertheless, if we assume feasibility then
this algorithm is $\sfrac{5}{9}$-competitive~\cite{huang2019unweighted}.
In fact, it is effectively the
$2$-matching algorithm of Kalyanasundaram and Pruhs~\cite{kalyanasundaram2000optimal}, by having two
copies of each online vertex and allowing offline vertices to be matched twice.

\begin{quote}
\emph{Can we use partial negative correlation to retain feasibility
and break the $\sfrac{1}{2}$ barrier?}
\end{quote}

We answer this question affirmatively by introducing an algorithmic ingredient
called \emph{online correlated selection} (OCS),
which allows us to quantify the negative correlation among randomized rounds.
Appendix~\ref{app:unweighted} provides an analysis of the two-choice greedy
algorithm powered by OCS in the unweighted case.
Furthermore,
Section~\ref{sec:edge-weighted} generalizes this approach to edge-weighted
online matching, achieving the first algorithm with a competitive
ratio that is provably greater than $\sfrac{1}{2}$.

\begin{definition}[$\gamma$-semi-OCS]
  Consider a set of ground elements.
  For any $\gamma \in [0, 1]$, a $\gamma$-semi-OCS is an online algorithm that takes as input a sequence of pairs of elements, and selects one per pair
  such that if an element appears in $k \ge 1$ pairs, it is selected at least once with probability at least:
  \[
      1 - 2^{-k} (1 - \gamma)^{k-1}
      ~.
  \]
\end{definition}

Using independent random bits is a $0$-semi-OCS,
and the perfect negative correlation in the
thought experiment corresponds to a $1$-semi-OCS,
although it is typically infeasible.
Our algorithms satisfy a stronger definition,
which considers any collection of pairs containing an element $i$.
This stronger definition is useful for generalizing to the edge-weighted
bipartite matching problem.

In the following definition, a subsequence (not necessarily contiguous)
of pairs containing element~$i$
is \emph{consecutive} if it includes all the pairs that contain element $i$
between the first and last pair in the subsequence.
Further, two subsequences
of pairs are \emph{disjoint} if no pair belongs to both of them.
For example, consider the sequence
$(\{a,i\}, \{b,i\}, \{c,d\}, \{e,i\}, \{i,z\})$.
The subsequences
$(\{a,i\},\{b,i\})$ and $(\{i,z\})$ are consecutive and disjoint,
but the subsequence $(\{a,i\}, \{b,i\}, \{i,z\})$
is not consecutive because it does not include the pair $\{e,i\}$.

\begin{definition}[$\gamma$-OCS]
  \label{def:ocs}
  Consider a set of ground elements.
  For any $\gamma \in [0, 1]$, a $\gamma$-OCS is an online algorithm that takes as input a sequence of pairs of elements, and selects one per pair such that for any element $i$ and any disjoint subsequences of $k_1, k_2, \dots, k_m$ consecutive pairs containing $i$, $i$ is selected in at least one of these pairs with probability at least:
  \[
      1 - \prod_{\ell=1}^m 2^{-k_\ell} (1 - \gamma)^{k_\ell-1}
      ~.
  \]
\end{definition}

\begin{theorem}
  \label{thm:ocs-main}
  There exists a $\frac{13\sqrt{13}-35}{108} > 0.1099$-OCS.
\end{theorem}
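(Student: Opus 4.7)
The approach is to construct an explicit online procedure that maintains some internal state tracking recent appearances of each element and, with a carefully tuned probability $p$, induces negative correlation between consecutive pairs that share an element. Concretely, I envision a \emph{sender--receiver} scheme: upon receiving a new pair $\{a,b\}$, the algorithm attempts to couple its selection with a suitable earlier pair containing $a$ or $b$; when the coupling succeeds, the two selections at the shared element are forced to be opposite, and otherwise the pair falls back to an independent fair coin. The challenge is to balance the probability of a successful coupling against the combinatorial obstructions that arise when the same pair is implicated in couplings for both of its elements.

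First, I would formalize the scheme. Classify each pair as either \emph{oblivious} (selection driven by a fresh bit) or \emph{adaptive} (selection driven by a bit negatively correlated with a chosen predecessor). Within an adaptive pair, one element is designated as the \emph{receiver}; the partner pair's matching element plays the role of \emph{sender}. The tuning parameter is the probability $p$ that an incoming pair attempts to be adaptive. Conflicts --- where the candidate partner pair is unavailable because it was already engaged in a prior coupling, or where both elements $a,b$ simultaneously wish to act as receivers --- are resolved by a tie-breaking rule that may force the pair back to oblivious behavior, and the probability loss introduced by these fallbacks must be carefully accounted for.

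Next, I would analyze a single element $i$ that appears in $k_1, k_2, \ldots, k_m$ disjoint consecutive subsequences. For a single consecutive subsequence of length $k$, I would argue by induction on $k$ that the probability $i$ is never selected is at most $2^{-k}(1-\gamma)^{k-1}$, where $\gamma$ is the function of $p$ resulting from the conflict analysis. The inductive step examines the last pair in the subsequence and splits into the oblivious case (a factor of $\tfrac{1}{2}$) and the adaptive case (a factor strictly less than $\tfrac{1}{2}$, witnessing the $(1-\gamma)$ gain). Across different subsequences, the gap between them must contain pairs not involving $i$, which breaks any sender--receiver chain for $i$, so the individual bounds combine multiplicatively to match Definition~\ref{def:ocs}.

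The main obstacle will be the conflict analysis underlying the adaptive-case bound. The subtle point is that the randomness used to couple an adaptive pair with its predecessor is not independent of what happens elsewhere in the sequence: when both elements of a pair are also elements of pending sender slots, the pair can serve as receiver for only one of them, which induces correlations between different elements' coupling chains and may retroactively weaken the guarantee. Quantifying the worst-case loss, conditioning on the history seen by~$i$'s chain, and showing that the effective per-step improvement is at least $(1-\gamma)$ is where the heart of the argument lies. Optimizing $p$ to maximize the resulting $\gamma$ should reduce to a low-degree polynomial equation whose optimal root is precisely $\gamma = \frac{13\sqrt{13}-35}{108}$, yielding the claimed bound.
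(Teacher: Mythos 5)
Your proposal correctly identifies the high-level approach used in the paper: a sender--receiver coupling scheme with tuning parameter $p$, analyzed via a per-element chain of consecutive pairs, with disjoint subsequences combining multiplicatively because the gap between them severs any coupling chain for $i$. You also correctly localize the difficulty to the conflict analysis. However, the proposal stops exactly at the point where the proof requires actual content, and the inductive framing you sketch is also subtly off.

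First, the step-by-step bound you propose --- ``a factor of $\tfrac{1}{2}$ in the oblivious case and a factor strictly less than $\tfrac{1}{2}$ in the adaptive case, yielding $2^{-k}(1-\gamma)^{k-1}$'' --- is not a first-order recurrence in disguise. The gain from an adaptive pair $j_s$ is only realized if the \emph{previous} pair $j_{s-1}$ is a sender that picked the arc towards $j_s$, and this event is not independent of whether $j_{s-1}$ itself already contributed adaptively. The paper handles this by factoring the ``never selected'' probability as $2^{-k}\cdot g_k$, where $g_k$ is the probability that no arc $(j_s, j_{s+1})_i$ is realized in the ex-post dependence graph, and then deriving the \emph{second-order} recurrence $g_k = g_{k-1} - \tfrac{1}{8}p(1-p)(4-p)\,g_{k-2}$ via an auxiliary subevent ($G_k$ together with ``$j_k$ is a sender picking the out-arc due to $i$''). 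Your first-order per-step factor is what the recurrence \emph{implies} (since $g_{k-2}\ge g_{k-1}$), not something you can prove directly by conditioning only on the last pair. Second, the ``conflict'' you flag --- a pair wanting to act as receiver for two elements simultaneously, or the intended partner being unavailable --- is exactly what the paper's Lemma~\ref{lem:improved-OCS-irregular} addresses with three separate coupling arguments (type-1, type-2, type-3 violations) showing that the clean binding case, where all auxiliary in-neighbors $\hat j^\ell_s$ exist and are distinct, is the worst case for $\Pr(F)$. You acknowledge this is ``where the heart of the argument lies'' but leave it entirely unresolved, so the key quantity $\gamma = \tfrac{1}{8}p(1-p)(4-p)$ and the optimal $p = \tfrac{5-\sqrt{13}}{3}$ never emerge. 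As written, the proposal is a plausible outline that mirrors the paper's structure but does not constitute a proof: the recurrence derivation, the tie-breaking rule (the paper's receiver checks \emph{both} in-arcs, which is what improves on $\sfrac{1}{16}$), and the coupling argument are all missing.
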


We defer the design and analysis of this OCS to Section~\ref{sec:ocs},
and instead describe a weaker $\sfrac{1}{16}$-OCS,
which is already sufficient for breaking the $\sfrac{1}{2}$ barrier
in edge-weighted online bipartite matching.

\begin{proof}[Proof Sketch of a $\sfrac{1}{16}$-OCS]
    Consider two sequences of random bits.
    The first set is used to construct a random matching among the pairs, where any two consecutive pairs (with respect to some element) are matched with probability $\sfrac{1}{16}$.
    Concretely, each pair is consecutive to at most four pairs, one before it and one after it for each of its two elements.
    For each pair, choose one of its consecutive pairs,
    each with probability $\sfrac{1}{4}$.
    Two consecutive pairs are matched if they choose each other.

    The second random sequence is used to select elements from the pairs.
    For any unmatched pair, choose one of its elements with a fresh random bit.
    For any two matched pairs, use a fresh random bit to choose an element
    in the first pair, and then make the opposite selection for the later one
    (i.e., select the common element if it is not selected in the earlier pair, and vice versa).
    Observe that even if two matched pairs are identical, there is no ambiguity in the opposite selection.

    Next, fix any element $i$ and any disjoint subsequences of $k_1, k_2, \dots, k_m$ consecutive pairs containing~$i$.
    We bound the probability that $i$ is never selected.
    If any two of the pairs are matched,
    $i$ is selected once in the two pairs.
    Otherwise, the selections from the pairs are independent,
    and the probability that $i$ is never selected
    is $\prod_{\ell=1}^m 2^{-k_\ell}$.
    Applying the law of total probability to the event that~$i$ is
    in a matched pair,
    it remains to upper bound the probability of having no such matched pairs by $\prod_{\ell=1}^m (1 - \sfrac{1}{16})^{k_\ell-1}$.
    Intuitively, this is because there are $k_\ell-1$ choices of two consecutive pairs within the $\ell$-th subsequence,
    each of which is matched with probability $\sfrac{1}{16}$.
    Further, these events are negatively dependent and therefore,
    the probability that none of them happens is upper bounded by the independent case.
    The formal analysis in Section~\ref{sec:ocs} substantiates this claim.
\end{proof}

\section{Edge-Weighted Online Matching}
\label{sec:edge-weighted}

This section presents an online primal-dual algorithm for the edge-weighted
online bipartite matching problem.
The algorithm uses a $\gamma$-OCS as a black box,
and its competitive ratio depends on the value of $\gamma$.
For $\gamma = \sfrac{1}{16}$ (as sketched in Section~\ref{sec:ocs-intro})
it is $0.505$-competitive, and for
$\gamma \approx 0.1099$ (as in Theorem~\ref{thm:ocs-main})
it is $0.5086$-competitive,
proving our main result about edge-weighted online matching.

\subsection{Online Primal-Dual Algorithm}
\label{sec:edge-weighted-algorithm}

The algorithm is similar to the two-choice greedy in the previous section.
It maintains an OCS with the offline vertices as the ground elements.
For each online vertex, the algorithm either
(1) matches it deterministically to one offline neighbor,
(2) chooses a pair of offline neighbors and matches to the
one selected by the OCS,
or (3) leaves it unmatched. 
We refer to the first case as
a \emph{deterministic} round, the second as a \emph{randomized} round, and the
third as an \emph{unmatched} round.

How does the algorithm decide whether it is a randomized, deterministic or unmatched round,
and how does it choose the candidate offline vertices?
%
We leverage the online primal-dual framework.
When an online vertex $j$ arrives,
it calculates for every offline vertex $i$ how much the dual
variable~$\beta_j$ would gain if $j$ is matched to $i$ in a deterministic round,
denoted as $\Delta_i^D \beta_j$, and similarly $\Delta_i^R \beta_j$ for a randomized round.
Then it finds $i^*$ with the maximum $\Delta_i^D \beta_j$, and $i_1, i_2$ with the maximum $\Delta_i^R \beta_j$.
If both $\Delta_{i_1}^R \beta_j + \Delta_{i_2}^R \beta_j$ and $\Delta_{i^*}^D \beta_j$ are negative, it leaves $j$ unmatched. 
If $\Delta_{i_1}^R \beta_j + \Delta_{i_2}^R \beta_j$ is nonnegative and greater than $\Delta_{i^*}^D \beta_j$,
it matches $j$ in a randomized round with $i_1$ and $i_2$ as the candidates using its OCS.
Finally, if $\Delta_{i^*}^D \beta_j$ is nonnegative and greater than $\Delta_{i_1}^R \beta_j + \Delta_{i_2}^R \beta_j$,
it matches $j$ to $i^*$ in a deterministic round.
See Algorithm~\ref{alg:primal-dual} for the formal definition of the algorithm.

It remains to explain how $\Delta_i^D \beta_j$ and $\Delta_i^R \beta_j$ are calculated.
For any offline vertex $i \in L$ and any weight-level $w > 0$,
let $k_i(w)$ be the number of randomized rounds in which $i$ has been
chosen and has edge weight at least $w$.
The values of $k_i(w)$ may change over time,
so we consider these values at the beginning of each online round.
The increments to the dual variables $\alpha_{i}(w)$ and $\beta_j$
depend on the values of $k_i(w)$
via the following \emph{gain-sharing} parameters,
which we determine later
using a factor-revealing LP to optimize the competitive ratio.
The gain-sharing values are listed at the end of this section in
Table~\ref{tab:lp-solution}.

\begin{itemize}
    \item $a(k)$ : Amortized increment in the dual variable $\alpha_i(w)$ 
      if $i$ is chosen as one of the two candidates in a randomized round
      in which its edge weight is at least $w$ and $k_i(w) = k$.
    \item $b(k)$ : Increment in the dual variable $\beta_j$ due to an offline
      vertex $i$ at weight-level $w \le w_{ij}$
      if~$j$ is matched in a randomized round with $i$ as one of the
      two candidates and $k_i(w) = k$.
\end{itemize}

\noindent
Note that these gain-sharing values $a(k)$ and $b(k)$
are instance independent
(i.e., they do not depend on the underlying graph)
and defined for all $k \in \mathbb{Z}_{\ge 0}$.
We interpret these parameters according to a gain-splitting rule.
If $i$ is one of the two candidates to be matched to $j$ in a randomized round,
the increase in the expected weight of the heaviest edge matched to $i$ equals
the integration of $y_i(w)$'s increments, for $0 < w \le w_{ij}$,
which can be related to the values of the $k_i(w)$'s.
We then lower bound the gain due to the increment of $y_i(w)$
using the definition of a $\gamma$-OCS
and split the gain into two parts,
$a(k_i(w))$ and $b(k_i(w))$.
The former is assigned to $\alpha_i(w)$ and the latter goes to $\beta_j$.

\begin{algorithm}[t]
    \caption{Online primal-dual edge-weighted bipartite matching algorithm.}
    \label{alg:primal-dual}
    \begin{algorithmic}
        \medskip
        \STATEx \textbf{State variables:}
        \begin{itemize}
            \item $k_i(w) \ge 0$ : The number of randomized rounds in which $i$ is a candidate
              and its edge weight is at least $w$;
            $k_i(w) = \infty$ if it has been chosen in a deterministic round in which its edge weight is at least $w$.
        \end{itemize}
        \smallskip
        \STATEx \textbf{On the arrival of an online vertex $j \in R$:}
        \begin{enumerate}
            \item For every offline vertex $i \in L$, compute $\Delta_i^R \beta_j$ and $\Delta_i^D \beta_j$ according to Eqn.~\eqref{eqn:beta-increment-randomized} and \eqref{eqn:beta-increment-deterministic}.
            \item Find $i_1, i_2$ with the maximum $\Delta_i^R \beta_j$.
            \item Find $i^*$ with the maximum $\Delta_i^D \beta_j$.
            \item If $0 > \Delta_{i_1}^R \beta_j + \Delta_{i_2}^R \beta_j$ and $\Delta_{i^*}^D \beta_j$, leave $j$ unmatched.
            \hspace*{\fill}
            \textbf{(unmatched)}
            \item If $\Delta_{i_1}^R \beta_j + \Delta_{i_2}^R \beta_j \ge \Delta_{i^*}^D \beta_j$ and $0$, let the OCS pick one of $i_1$ and $i_2$.
            \hspace*{\fill}
            \textbf{(randomized)}
        \item If $\Delta_{i^*}^D \beta_j > \Delta_{i_1}^R \beta_j$ and $\ge 0$, match $j$ to $i^*$.
            \hspace*{\fill}
            \textbf{(deterministic)}
            \item Update the $k_i(w)$'s accordingly.
        \end{enumerate}
    \end{algorithmic}
    \smallskip
\end{algorithm}

In fact, 
we prove at the end of this subsection
the following invariant about how the dual variables $\alpha_i(w)$ are incremented:
%
\begin{equation}
    \label{eqn:invariant-alpha}
    \alpha_i(w) \ge \sum_{0 \le \ell < k_i(w)} a(\ell)
    ~.
\end{equation}

Next, define $\Delta^R_i \beta_j$ to be:
\begin{equation}
    \label{eqn:beta-increment-randomized}
    \Delta_i^R \beta_j \defeq 
    \int_0^{w_{ij}} b\left(k_i(w)\right) dw - \frac{1}{2} \int_{w_{ij}}^{\infty} \sum_{0 \le \ell < k_i(w)} a(\ell) dw 
    ~.
\end{equation}

We should think of $\Delta_i^R \beta_j$ as the increase in the dual
variable $\beta_j$ due to offline vertex $i$,
if $i$ is chosen as one of the two candidates for $j$ in an randomized round.
The first term in Eqn.~\ref{eqn:beta-increment-randomized}
follows from the interpretation of $b(k)$ above (and would be
the only term in the unweighted case).
The second term is designed to cancel out the extra help we get from
the $\alpha_i(w)$'s at weight-levels $w > w_{ij}$
in order to satisfy approximate dual feasibility for the edge $(i,j)$.
Concretely, if $j$ is matched in a randomized round to two candidates at least
as good as $i$, our choice of $b(k)$'s ensures approximate dual feasibility
between $i$ and $j$ (i.e., the following inequality holds):
\[
    \int_0^{\infty} \alpha_i(w) dw + 2 \cdot \Delta_i^R \beta_j \ge \Gamma \cdot w_{ij}
    ~.
\]

%

Finally, for some $1 < \kappa < 2$, define the value of $\Delta_i^D \beta_j$ to be:
\begin{equation}
    \label{eqn:beta-increment-deterministic}
    \Delta_i^D \beta_j 
    \defeq \kappa \cdot \Delta_i^R \beta_j \\[1ex]
    = \kappa \int_0^{w_{ij}} b\left(k_i(w)\right) dw - \frac{\kappa}{2} \int_{w_{ij}}^{\infty} \sum_{0 \le \ell < k_i(w)} a(\ell) dw 
    ~.
\end{equation}

\noindent
For concreteness, readers can assume $\kappa = 1.5$.
The competitive ratio, however, is insensitive to the choice of $\kappa$
as long as it is neither too close to $1$ nor to $2$.
On the one hand, $\kappa > 1$ ensures that if the algorithm chooses a
randomized round with offline vertex $i_1$ and another vertex $i_2$ as the candidates,
the contribution from $i_2$ to $\beta_j$ must be at least a $\kappa
- 1$ fraction of what $i_1$ offers; otherwise, the algorithm would have preferred
a deterministic round with $i_1$ alone.
On the other hand, we have $\kappa < 2$ because otherwise a randomized round
would always be inferior to a deterministic round.
We further explain the definitions of
$\Delta_{i}^R \beta_j$ and $\Delta_{i}^D \beta_j$
in Subsection~\ref{sec:approximate-dual-feasible},
and we demonstrate how their terms interact 
when proving that the dual assignments always satisfy approximate dual
feasibility.

\paragraph{Primal Increments.}
We have defined the primal algorithm and, implicitly,
how the dual algorithm updates the $\beta_j$'s.
It remains to define the updates to $\alpha_i(w)$'s.
Before that, we first need to characterize the primal increment since the dual
updates are driven by it.
Recall that by the CCDF viewpoint:
\[
    \primal = \sum_{i \in L} \int_0^\infty y_i(w) dw
    ~.
\]

Since it is difficult to account for the exact CCDF $y_i(w)$ due to complicated correlations in the selections,
we instead consider a lower bound for it given by the $\gamma$-OCS.
A critical observation here is that the decisions made by the primal-dual
algorithm are deterministic, except for the randomness in the OCS.
In particular, its choices of $i_1$, $i_2$, $i^*$ and
the decisions about whether a round is unmatched, randomized, or deterministic
are independent of the selections in the OCS and
therefore
\emph{deterministic quantities governed solely by the
input graph and arrival order of the online vertices}.
Hence, we may view the sequence of pairs of candidates
as fixed.

For any offline vertex $i$ and any weight-level $w > 0$, consider the
randomized rounds in which $i$ is a candidate and has edge weight at least $w$.
Decompose these rounds into disjoint collections of, say, $k_1, k_2, \dots, k_m$ consecutive rounds.
By Definition~\ref{def:ocs}, vertex $i$ is selected by the $\gamma$-OCS
in at least one of these rounds with probability at least:
\begin{equation}
\label{eqn:y-bar-def}
    \bar{y}_i(w) \defeq 1 - \prod_{\ell=1}^m 2^{-k_\ell} \left(1 - \gamma\right)^{k_\ell-1}
    ~.
\end{equation}

\noindent
Accordingly, we will use the following surrogate primal objective:
\[
    \bar{\primal} = \sum_{i \in L} \int_0^\infty \bar{y}_i(w) dw
    ~.
\]

\begin{lemma}
    \label{lem:surrogate-primal}
    The primal objective is lower bounded by the surrogate, i.e., $\bar{\primal} \le \primal$.
\end{lemma}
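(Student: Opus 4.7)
The plan is to establish the stronger pointwise inequality $y_i(w) \ge \bar{y}_i(w)$ for every offline vertex $i \in L$ and every weight-level $w > 0$, and then recover the lemma by integrating over $w$ and summing over $i$, using the CCDF formula \eqref{eqn:histogram-expected-weight} for both $\primal$ and $\bar{\primal}$.

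Before unpacking the pointwise bound, I would lean on the observation already stressed in the paragraph preceding \eqref{eqn:y-bar-def}: the sequence of candidate pairs in randomized rounds, the choice of $i^*$ in each deterministic round, and the classification of each round as randomized, deterministic, or unmatched, are all deterministic functions of the input graph and the arrival order. The only randomness lives inside the OCS. So, fixing $i$ and $w$, I can condition on this deterministic ``skeleton'' and treat the pairs fed to the OCS as fixed.

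For the pointwise inequality I would split into two cases. If some deterministic round matches $i$ to a vertex $j$ with $w_{ij} \ge w$, then $y_i(w) = 1$, while $\bar{y}_i(w) \le 1$ by definition, so we are done. Otherwise, restrict attention to the randomized rounds in which $i$ is a candidate with edge weight at least $w$, and partition them into maximal consecutive (in the sense of Definition~\ref{def:ocs}) blocks of sizes $k_1, \dots, k_m$; this is precisely the decomposition used to define $\bar{y}_i(w)$ in \eqref{eqn:y-bar-def}. By Definition~\ref{def:ocs}, the probability that the $\gamma$-OCS selects $i$ in at least one round of this collection is at least $1 - \prod_{\ell=1}^m 2^{-k_\ell}(1-\gamma)^{k_\ell-1} = \bar{y}_i(w)$. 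Any such selection gives $i$ an incident matched edge of weight at least $w$, so $y_i(w) \ge \bar{y}_i(w)$.

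The step that requires real care, and which I view as the main obstacle, is verifying that the block decomposition genuinely satisfies the ``consecutive and disjoint'' hypothesis of Definition~\ref{def:ocs}. Disjointness follows from maximality, but consecutivity is more delicate because it is phrased in terms of the sequence of OCS input pairs containing $i$. Here I would remark that the OCS only receives pairs from randomized rounds, so consecutivity with respect to $i$ is broken precisely by those randomized rounds in which $i$ is a candidate yet the associated edge weight is strictly less than $w$; deterministic rounds, unmatched rounds, and randomized rounds not involving $i$ produce no pair containing $i$ and therefore do not affect consecutivity. Partitioning so as to cut at exactly the weight-$<w$ pairs yields blocks that are maximal contiguous sub-runs of $i$-containing OCS pairs whose edge weight is at least $w$, matching the definition. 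Once this technicality is justified, integrating $y_i(w) \ge \bar{y}_i(w)$ in $w$ and summing over $i \in L$ yields $\primal \ge \bar{\primal}$, as desired.
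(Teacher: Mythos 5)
Your proof is correct and follows the same approach the paper takes in the prose preceding the lemma statement: fix the deterministic ``skeleton,'' establish the pointwise bound $y_i(w) \ge \bar{y}_i(w)$ via Definition~\ref{def:ocs} applied to the maximal consecutive blocks of randomized rounds (plus the trivial deterministic-round case), then integrate and sum. The paper leaves the verification that the block decomposition meets the ``disjoint, consecutive'' hypothesis implicit, so your explicit check---that only randomized rounds with $i$ as a candidate produce OCS pairs containing $i$, hence consecutivity is broken exactly at pairs of weight less than $w$---is a useful elaboration rather than a departure.
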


\noindent
It will often be more convenient to consider the following characterization of $\bar{y}_i(w)$:
\begin{itemize}
    \item Initially, let $\bar{y}_i(w) = 0$.
    \item If $i$ is matched in a deterministic round in which its edge weight is at least $w$, let $\bar{y}_i(w) = 1$.
    \item If $i$ is chosen in a randomized round
      in which its edge weight is at least $w$,
      further consider~$w'$, its edge weight in the previous round involving $i$;
      let $w' = 0$ if it is the first randomized round involving~$i$.
    Then, decrease the gap $1 - \bar{y}_i(w)$
    by a $\sfrac{1}{2} ( 1 - \gamma )$ factor if $w' \ge w$, i.e., if it is the second or later pair of a collection of consecutive pairs containing $i$ with edge weight at least~$w$; 
    otherwise, decrease $1 - \bar{y}_i(w)$ by $\sfrac{1}{2}$,
    to account for the $-1$ in the exponent of $1 - \gamma$
    in Eqn~\ref{eqn:y-bar-def}.
\end{itemize}

\begin{lemma}
    \label{lem:unmatched-portion-lower-bound}
    For any offline vertex $i$ and any weight-level $w > 0$, we have:
    \[
        1 - \bar{y}_i(w) \ge 2^{-k_i(w)} \left(1 - \gamma\right)^{\max\{k_i(w)-1, 0\}}
        ~.
    \]
\end{lemma}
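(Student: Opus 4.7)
The plan is to use the incremental characterization of $\bar{y}_i(w)$ given right before the lemma and track how the quantity $1 - \bar{y}_i(w)$ evolves as online vertices arrive. I would proceed by a case analysis on what happens in the rounds that affect $\bar{y}_i(w)$.

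First, handle the degenerate cases. If $k_i(w) = 0$, then no randomized or deterministic update with edge weight at least $w$ has occurred, so $\bar{y}_i(w) = 0$ and the right-hand side equals $2^0 (1-\gamma)^0 = 1$, so equality holds. If $i$ has been matched in a deterministic round with edge weight at least $w$, then by convention $k_i(w) = \infty$ and both sides are $0$, so the inequality holds trivially. For the remainder, assume $k_i(w)$ is a positive finite integer.

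Next, consider the chronological sequence of randomized rounds in which $i$ is a candidate and has edge weight at least $w$. Partition these $k_i(w)$ rounds into the maximal blocks of \emph{consecutive} (in the sense of Definition~\ref{def:ocs}) pairs containing $i$ with edge weight at least $w$; call the block sizes $k_1, k_2, \dots, k_m$, so $\sum_{\ell=1}^m k_\ell = k_i(w)$ and $m \ge 1$. The key observation is that in the update rule for $\bar{y}_i(w)$, the first round of each block satisfies $w' < w$ (either because it is the very first randomized round involving $i$, or because the previous such round had edge weight smaller than $w$, which is precisely what makes the current round start a new block), and so contributes a shrinking factor of $\tfrac{1}{2}$ to $1 - \bar{y}_i(w)$. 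Every later round in a block satisfies $w' \ge w$ and contributes a factor of $\tfrac{1}{2}(1-\gamma)$. Multiplying these factors together starting from the initial value $1 - \bar{y}_i(w) = 1$ gives
\[
  1 - \bar{y}_i(w) = \prod_{\ell=1}^m \tfrac{1}{2} \cdot \left(\tfrac{1}{2}(1-\gamma)\right)^{k_\ell - 1} = 2^{-k_i(w)} (1-\gamma)^{k_i(w) - m}.
\]

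Finally, since $m \ge 1$ and $1 - \gamma \in [0,1]$, we have $(1-\gamma)^{k_i(w) - m} \ge (1-\gamma)^{k_i(w) - 1}$, and because $k_i(w) \ge 1$ the exponent matches $\max\{k_i(w) - 1, 0\}$. This yields the stated bound. The only subtle point, and the one I would spell out most carefully, is verifying that ``first round of a block'' in the update rule lines up exactly with $w' < w$ in the chronological decomposition; this is really a definitional bookkeeping check, so I do not anticipate any serious obstacle beyond being precise about what ``consecutive'' means when restricted to edge weights at or above $w$.
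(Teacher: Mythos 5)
Your proof is correct and follows essentially the same route as the paper's: track the multiplicative shrinkage of $1 - \bar{y}_i(w)$ over the $k_i(w)$ relevant rounds. The only cosmetic difference is that you pass through the exact block decomposition $k_1, \dots, k_m$ to get the identity $1 - \bar{y}_i(w) = 2^{-k_i(w)}(1-\gamma)^{k_i(w)-m}$ and then invoke $m \ge 1$, whereas the paper's two-sentence proof bounds each step directly: the first relevant round contributes a factor of exactly $\sfrac{1}{2}$ and each of the remaining $k_i(w) - 1$ rounds contributes a factor of at least $\sfrac{1}{2}(1-\gamma)$. Both are sound; yours makes the bookkeeping with $m$ explicit, which is a harmless elaboration.
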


\begin{proof}
    Initially, $1 - \bar{y}_i(w)$ equals $1$.
    Then, it decreases by $\sfrac{1}{2}$ in the first randomized round involving $i$ with edge weight at least $w$, and by at most $\sfrac{1}{2} \left( 1 - \gamma \right)$ in each of the subsequent $k_i(w) - 1$ rounds.
\end{proof}

This is equivalent to a lower bound of the increment in $y_i(w)$ in a deterministic round.

\begin{lemma}
    \label{lem:primal-increment-deterministic}
    For any offline vertex $i$ and any weight-level $w > 0$, if $i$ is matched in a deterministic round in which its edge weight is at least $w$, the increment in $\bar{y}_i(w)$ is at least:
    \[
        2^{-k_i(w)} \left(1 - \gamma\right)^{\max\{k_i(w)-1,0\}}
        ~.
    \]
\end{lemma}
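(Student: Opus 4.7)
The plan is to derive this lemma as a near-immediate consequence of Lemma~\ref{lem:unmatched-portion-lower-bound} together with the characterization of $\bar{y}_i(w)$ laid out just before it. Since the statement concerns a single deterministic round, I would freeze the state of the algorithm at the moment immediately before that round and compare the value of $\bar{y}_i(w)$ before and after.

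First I would observe that $k_i(w)$, as used on the right-hand side, refers to the value at the beginning of the round (consistent with the convention stated where $k_i(w)$ is defined). Second, I would recall that according to the characterization of $\bar{y}_i(w)$, the effect of a deterministic round in which $i$ is matched and its edge weight is at least $w$ is to raise $\bar{y}_i(w)$ all the way up to $1$. Hence the increment in $\bar{y}_i(w)$ in that round is exactly
\[
1 - \bar{y}_i(w),
\]
evaluated at the start of the round.

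Third, I would invoke Lemma~\ref{lem:unmatched-portion-lower-bound} applied to this pre-round state, giving
\[
1 - \bar{y}_i(w) \;\ge\; 2^{-k_i(w)} \, (1-\gamma)^{\max\{k_i(w)-1,\,0\}},
\]
which is precisely the lower bound claimed. Chaining these two equalities/inequalities completes the proof.

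There is essentially no obstacle here; the main thing to be careful about is the bookkeeping convention for $k_i(w)$ — namely that the quantities in Lemma~\ref{lem:unmatched-portion-lower-bound} and in the current statement both refer to $k_i(w)$ at the start of the round (before the update in the final line of Algorithm~\ref{alg:primal-dual}). Once this is noted, the lemma is a one-line consequence of the previous one and merits only a brief remark in the write-up rather than a separate argument.
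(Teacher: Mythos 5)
Your proof is correct and matches the paper's own (implicit) argument: the paper treats this lemma as an immediate restatement of Lemma~\ref{lem:unmatched-portion-lower-bound}, noting only that the increment in a deterministic round is $1 - \bar{y}_i(w)$. You spell out the same two observations — the deterministic-round rule sets $\bar{y}_i(w)$ to $1$, and the pre-round gap is bounded below by the previous lemma — with the appropriate care about the timing convention for $k_i(w)$.
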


\begin{lemma}
    \label{lem:primal-increment-randomized}
    For any offline vertex $i$ and any weight-level $w > 0$, if $i$ is chosen as a candidate in a randomized round in which its edge weight is at least $w$, the increment in $\bar{y}_i(w)$ is at least:
    \[ 
        2^{-k_i(w)-1} \left(1 - \gamma\right)^{\max\{k_i(w)-1, 0\}} 
        ~.
    \]
    Suppose further that vertex $i$'s edge weight is also at least $w$
    in the last randomized round involving $i$.
    Then, it follows that $k_i(w) \ge 1$ and the increment in $\bar{y}_i(w)$
    is at least:
    \[
        2^{-k_i(w)-1} \left(1 - \gamma\right)^{k_i(w)-1} \left(1 + \gamma\right)
        ~.
    \]
\end{lemma}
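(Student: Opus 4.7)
The plan is to prove both bounds directly from the recursive characterization of $\bar{y}_i(w)$ given just before Lemma~\ref{lem:unmatched-portion-lower-bound}, combined with the lower bound on $1 - \bar{y}_i(w)$ from that same lemma. Write $\bar{y}_i^{\text{old}}(w)$ and $\bar{y}_i^{\text{new}}(w)$ for the values immediately before and after the randomized round in question, and let $w'$ denote the edge weight of $i$ in the previous randomized round involving $i$ (with $w' = 0$ if there is none). By the characterization, $1 - \bar{y}_i^{\text{new}}(w) = c \cdot (1 - \bar{y}_i^{\text{old}}(w))$ where $c = \sfrac{1}{2}$ if $w' < w$ and $c = \sfrac{1}{2}(1-\gamma)$ if $w' \ge w$. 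Hence the increment is
\[
    \bar{y}_i^{\text{new}}(w) - \bar{y}_i^{\text{old}}(w) = (1 - c)\bigl(1 - \bar{y}_i^{\text{old}}(w)\bigr).
\]

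For the first bound, I would observe that $1 - c \ge \sfrac{1}{2}$ in both cases, and then invoke Lemma~\ref{lem:unmatched-portion-lower-bound} to conclude
\[
    (1-c)\bigl(1 - \bar{y}_i^{\text{old}}(w)\bigr) \;\ge\; \tfrac{1}{2} \cdot 2^{-k_i(w)}(1-\gamma)^{\max\{k_i(w)-1, 0\}} \;=\; 2^{-k_i(w)-1}(1-\gamma)^{\max\{k_i(w)-1, 0\}},
\]
where $k_i(w)$ is the value at the start of the current round (consistent with its use in Lemma~\ref{lem:unmatched-portion-lower-bound}). For the second bound, the extra hypothesis is exactly that $w' \ge w$, so $c = \sfrac{1}{2}(1-\gamma)$ and therefore $1 - c = \sfrac{1}{2}(1 + \gamma)$. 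Moreover, the very existence of a previous randomized round involving $i$ with edge weight at least $w$ means that this previous round already contributed to the counter, so $k_i(w) \ge 1$ and $\max\{k_i(w)-1,0\} = k_i(w)-1$. Plugging this into the formula above yields
\[
    (1-c)\bigl(1 - \bar{y}_i^{\text{old}}(w)\bigr) \;\ge\; \tfrac{1}{2}(1+\gamma) \cdot 2^{-k_i(w)}(1-\gamma)^{k_i(w)-1} \;=\; 2^{-k_i(w)-1}(1-\gamma)^{k_i(w)-1}(1+\gamma).
\]

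There is no real obstacle here: once one accepts the multiplicative-update view of $\bar{y}_i(w)$ and Lemma~\ref{lem:unmatched-portion-lower-bound}, the proof is a two-line computation in each case. The only points requiring care are (i)~being explicit that the $k_i(w)$ appearing in the lemma refers to the pre-round value, so that the counter has not yet been incremented by this round, and (ii)~justifying $k_i(w) \ge 1$ in the second part from the hypothesis that there was a previous randomized round involving $i$ with edge weight at least $w$, which is needed to drop the $\max$ in the exponent of $1-\gamma$.
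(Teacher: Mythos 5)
Your proposal is correct and follows essentially the same argument as the paper: you use the recursive characterization of $\bar{y}_i(w)$ to express the increment as $(1-c)\bigl(1-\bar{y}_i^{\text{old}}(w)\bigr)$ with $c \in \{\sfrac{1}{2}, \sfrac{1}{2}(1-\gamma)\}$, then combine with Lemma~\ref{lem:unmatched-portion-lower-bound}, exactly as the paper does. The only difference is that you spell out the intermediate steps (the explicit old/new notation, the observation $1 - c \ge \sfrac{1}{2}$, and the justification of $k_i(w) \ge 1$) that the paper leaves implicit.
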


\begin{proof}
    By definition, $1 - \bar{y}_i(w)$ decreases by a factor of either $\sfrac{1}{2} (1 - \gamma)$ or $\sfrac{1}{2}$ in a randomized round, depending on whether vertex $i$'s edge weight is at least $w$ the last time it is chosen in a randomized round.
    Therefore, the increment in $\bar{y}_i(w)$ is either a $\sfrac{1}{2} (1 + \gamma)$ fraction of $1 - \bar{y}_i(w)$, or a $\sfrac{1}{2}$ fraction. 
    Putting this together with the lower bound for
    $1 - \bar{y}_i(w)$ in Lemma~\ref{lem:unmatched-portion-lower-bound} proves the lemma.
\end{proof}

\paragraph{Dual Updates to Online Vertices.}
Consider any online vertex $j \in R$ at the time of its arrival.
The dual variable $\beta_j$ will only increase at the end of this round,
depending on the type of assignment.
If $j$ is left unmatched, then the value of~$\beta_j$ remains zero.
If $j$ is matched in a randomized round,
set $\beta_j = \Delta_{i_1}^R \beta_j + \Delta_{i_2}^R \beta_j$.
Lastly, if $j$ is matched in a deterministic round, set
$\beta_j = \Delta_{i^*}^D \beta_j$.

\paragraph{Dual Updates to Offline Vertices: Proof of Eqn.~\eqref{eqn:invariant-alpha}.}
%
Fix any offline vertex $i \in L$.
Suppose that $i$ is matched in a \emph{deterministic round} in which its edge weight is $w_{ij}$.
Then, for any weight-level $w > w_{ij}$, the value of $k_i(w)$ stays the same,
so we leave $\alpha_i(w)$ unchanged.
On the other hand,
for any weight-level $w \le w_{ij}$,
the value of $k_i(w)$ becomes $\infty$ by definition.
Therefore, to maintain the invariant in Eqn.~\eqref{eqn:invariant-alpha},
we increase $\alpha_i(w)$ for each weight-level $w \le w_{ij}$ by:
\begin{equation}
    \label{eqn:alpha-increment-deterministic}
    \sum_{\ell = k_i(w)}^{\infty} a(\ell)
    ~.
\end{equation}

The updates in \emph{randomized rounds} are more subtle.
Suppose $i$ is one of the two candidates in a randomized round in which its edge weight is $w_{ij}$.
Further consider $i$'s edge weight the last time it was chosen in a randomized round, denoted as $w'$; 
let $w' = 0$ if this is the first randomized round involving vertex $i$.
Then, $w_{ij}$ and $w'$ partition the weight-levels $w > 0$ into up to three subsets,
each of which requires a different update rule for $\alpha_i(w)$.
Concretely, the algorithm increase $\alpha_i(w)$ by:
\begin{equation}
    \label{eqn:alpha-increment-randomized}
    \begin{cases}
        a\left(k_i(w)\right) 
        &
        \text{if } 0 < w \le w_{ij}, w' \textrm{ or } k_i(w) = 0 ~; \\
        a\left(k_i(w)\right) - 2^{-k_i(w)-1} \left(1 - \gamma\right)^{k_i(w)-1} \gamma 
        &
        \text{if } w' < w \le w_{ij} \textrm{ and } k_i(w) \ge 1 ~; \\
        2^{-k_i(w)-1} \left(1 - \gamma\right)^{k_i(w)-1} \gamma
        &
        \text{if } w > w_{ij} \textrm{ and } k_i(w) \ge 1 ~.
    \end{cases}
\end{equation}

The first case is straightforward---we simply increase $\alpha_i(w)$ by $a\left(k_i(w)\right)$
to maintain the invariant in Eqn.~\eqref{eqn:invariant-alpha}.
Observe that this is the only case in the unweighted problem.

For a weight-level $w$ that falls into the second case (if there is any),
the increment in $\alpha_i(w)$ is smaller than the first case
by $2^{-k_i(w)-1} (1 - \gamma)^{k_i(w)-1} \gamma$.
This is the difference between the lower bounds for the increments
in $\bar{y}_i(w)$ in Lemma~\ref{lem:primal-increment-randomized}, depending on
whether $i$'s edge weight was at least~$w$ the last time it
was chosen in a randomized round.
Since the increase in the surrogate primal objective $\bar{\primal}$ due to
vertex $i$ and weight-level $w$ (when $w' < w$)
is less than the first case of Eqn.~\eqref{eqn:alpha-increment-randomized},
we subtract this difference
from the increment in $\alpha_i(w)$
so that the update to $\beta_j$ is unaffected.

How can we still maintain the invariant in Eqn.~\eqref{eqn:invariant-alpha}
given the subtraction in the second case?
Observe that if the second case happens, the same weight-level must fall into the third case in the previous randomized round in which $i$ is involved. 
Thus, an equal amount is prepaid to each $\alpha_i(w)$ in the previous round.
This give-and-take in the offline dual vertex updates
becomes clear when we prove reverse weak duality in the next subsection.

\subsection{Online Primal-Dual Analysis: Reverse Weak Duality}
\label{sec:reverse-weak-duality}

This subsection derives a set of sufficient conditions under which the
increment in the surrogate primal $\bar{\primal}$ is at least that of the
dual solution $\dual$.
Reverse weak duality then follows from $\primal \ge \bar{\primal} \ge \dual$.

\paragraph{Deterministic Rounds.}
Suppose $j$ is matched to $i$ in a deterministic round.
Using the lower bound for the increase of $\bar{\primal}$
in Lemma~\ref{lem:primal-increment-deterministic},
the increase of the $\alpha_i(w)$'s in Eqn.~\eqref{eqn:alpha-increment-deterministic},
and a lower bound for $\beta_j$ by dropping the second term in Eqn.~\eqref{eqn:beta-increment-deterministic}, we need:
\[
    \int_0^{w_{ij}} \sum_{\ell = k_i(w)}^\infty a(\ell) dw + \kappa \int_0^{w_{ij}} b\left(k_i(w)\right) dw
    \le 
    \int_0^{w_{ij}} 2^{-k_i(w)} \left(1-\gamma\right)^{\max\{k_i(w)-1, 0\}} dw
    ~.
\]

\noindent
We will ensure the inequality locally at every weight-level,
so it suffices to have:
\begin{equation}
    \label{eqn:gain-split-deterministic}
    \forall k \ge 0 \quad : \quad 
    \sum_{\ell = k}^\infty a(\ell) + \kappa \cdot b(k)
    \le
    2^{-k} \left(1-\gamma\right)^{\max\{k-1, 0\}}
    ~.
\end{equation}

\paragraph{Randomized Rounds.}
Now suppose $j$ is matched 
with candidates $i_1, i_2$ in a randomized round.
We show that the increment in $\bar{\primal}$ due to $i_1$ is at least
the increase in the $\alpha_{i_1}(w)$'s plus its contribution to $\beta_j$
(i.e., $\Delta_{i_1}^R \beta_j$).
This also holds for $i_2$ by symmetry, and
together they prove reverse weak duality.

Let $w_1$ be the edge weight of $i \gets i_1$ in this round,
and let $w_1'$ be its edge weight the last time it was chosen in a randomized round;
set $w_1' = 0$ if this has not happened.
Then, $w_1$ and $w_1'$ partition the weight-levels $w > 0$
into three subsets corresponding to the three cases
for incrementing the dual variables $\alpha_i(w)$ in a randomized round,
as in Eqn.~\eqref{eqn:alpha-increment-randomized}

The \emph{first case} is when $w \le w_1, w_1'$ or $k_{i}(w) = 0$.
By Lemma~\ref{lem:primal-increment-randomized}, the increase in $\bar{\primal}$
due to vertex~$i$ at weight-level $w$ is at least:
\[
    \begin{cases}
      \frac{1}{2} & \text{if } k_{i}(w) = 0 ~; \\
        2^{-k_{i}(w)-1} \left(1 - \gamma\right)^{k_{i}(w)-1} \left(1 + \gamma\right)
          & \text{if } k_{i}(w) \ge 1 \textrm{ and } w \le w_1, w_1' ~.
    \end{cases}
\]

\noindent
By the first case of Eqn.~\eqref{eqn:alpha-increment-randomized},
the increase in $\alpha_{i}(w)$ is $a(k_{i}(w))$.
Finally, the contribution to the first term of
$\beta_j = \Delta_{i}^R \beta_j + \Delta_{i_2}^R \beta_j$,
at weight-level $w$, in Eqn.~\eqref{eqn:beta-increment-randomized} is $b(k_{i}(w))$.
Hence, it suffices to ensure:
\begin{equation}
    \label{eqn:gain-split-randomized}
    a(0) + b(0) \le \frac{1}{2}
    \quad\text{and}\quad
    \forall k \ge 1: a(k) + b(k) \le 2^{-k-1} \left(1 - \gamma\right)^{k-1} \left(1 + \gamma\right)
    ~.
\end{equation}

The \emph{second case} is when $w_1' < w \le w_1$ and $k_{i}(w) \ge 1$.
By Lemma~\ref{lem:primal-increment-randomized}, the increment in $\bar{\primal}$
due to~$i$ at weight-level $w$ is at least $2^{-k_{i}(w)-1} (1 - \gamma)^{k_{i}(w)-1}$.
By the second case of Eqn.~\eqref{eqn:alpha-increment-randomized},
the increase in $\alpha_{i}(w)$ is $a(k_{i}(w)) - 2^{-k_{i}(w)-1} (1 - \gamma)^{k_{i}(w)-1} \gamma$.
Finally, the contribution to the first term of $\beta_j$, at weight-level $w$, is $b(k_{i}(w))$.
Hence, we need:
\[
  a\left(k_{i}(w)\right) - 2^{-k_{i}(w)-1} \left(1 - \gamma\right)^{k_{i}(w)-1} \gamma + b\left(k_{i}(w)\right) 
    \le 
    2^{-k_{i}(w)-1} \left(1 - \gamma\right)^{k_{i}(w)-1} 
    ~.
\]

\noindent
Rearranging the second term to the RHS gives us the same conditions
as the second part of Eqn.~\eqref{eqn:gain-split-randomized}.

The \emph{third case} is when $w > w_1$ and $k_{i}(w) \ge 1$.
The increment in $\bar{\primal}$ due to $i$ at weight-level $w$ is~0.
By the last case of Eqn.~\eqref{eqn:alpha-increment-randomized},
the increase in $\alpha_{i}(w)$ is $2^{-k_{i}(w)-1} (1 - \gamma)^{k_{i}(w)-1} \gamma$.
The negative contribution from the second term of
$\beta_j$, at weight-level $w$, is $\frac{1}{2} \sum_{0 \le \ell < k_{i}(w)} a(\ell)$.
Hence, we need:
\[
  2^{-k_{i}(w)-1} \left(1 - \gamma\right)^{k_{i}(w)-1} \gamma - \frac{1}{2} \sum_{0 \le \ell < k_{i}(w)} a(\ell) \le 0
    ~.
\]
The first term is decreasing in $k_{i}(w)$ and the second is increasing,
so it suffices to consider $k_{i}(w) = 1$:
\begin{equation}
    \label{eqn:gain-split-prepaid}
    a(0) \ge \frac{\gamma}{2}
    ~.
\end{equation}

\subsection{Online Primal-Dual Analysis: Approximate Dual Feasibility}
\label{sec:approximate-dual-feasible}

This subsection derives a set of conditions that are sufficient
for approximate dual feasibility, i.e., Eqn.~\eqref{eqn:approximate-dual-feasible}.
Start by fixing any $i \in L$ and any $j \in R$, and also the
values of the $k_i(w)$'s when $j$ arrives.

\paragraph{Boundary Condition at the Limit.}
%
First, it may be the case that $k_i(w) = \infty$ for all $0 < w \le w_{ij}$ and $j$ is unmatched.
This means $\beta_j = 0$
in this round and thus, the contribution from the $\alpha_i(w)$'s alone
must ensure approximate dual feasibility.
To do so, we will ensure that the value of $\alpha_i(w)$ is at least $\Gamma$ whenever $k_i(w) = \infty$.
By the invariant in Eqn.~\eqref{eqn:invariant-alpha}, it suffices to have:
\begin{equation}
    \label{eqn:feasibility-alpha-infty}
    \sum_{\ell = 0}^\infty a(\ell) \ge \Gamma
    ~.
\end{equation}

Next, we consider five different cases that depend on whether the round of $j$ 
is randomized, deterministic or unmatched, and if $i$ is chosen as a candidate.
We first analyze the cases when $j$ is in a randomized round,
and then we will show that the other cases only require weaker conditions.

\paragraph{Case 1: Round of $j$ is a randomized, $i$ is not chosen.}
%
By definition, $\beta_j = \Delta_{i_1}^R \beta_j + \Delta_{i_2}^R \beta_j$.
Since $i$ is not chosen, both terms on the RHS are at least $\Delta_i^R \beta_j$.
Using the definition of $\Delta_i^R \beta_j$ in Eqn.~\eqref{eqn:beta-increment-randomized}
and lower bounding $\alpha_i(w)$ by Eqn.~\eqref{eqn:invariant-alpha},
approximate dual feasibility in Eqn.~\eqref{eqn:approximate-dual-feasible} reduces to:
\[
    \int_0^{w_{ij}} \sum_{0 \le \ell < k_i(w)} a(\ell) dw + 2 \int_0^{w_{ij}} b\left(k_i(w)\right) dw \ge \Gamma \cdot w_{ij}
    ~.
\]

\noindent
We will again ensure this inequality at every weight-level.
Therefore, it suffices to have:
\begin{equation}
    \label{eqn:feasibility-randomized-not-chosen}
    \forall k \ge 0 \quad : \qquad \sum_{0 \le \ell < k} a(\ell) + 2 \cdot b(k) \ge \Gamma
    ~.
\end{equation}

\paragraph{Case 2: Round of $j$ is randomized, $i$ is chosen.}
By symmetry, suppose WLOG that $i \gets i_2$ and $i_1$ is the other candidate.
By definition, $\beta_j = \Delta_{i}^R \beta_j + \Delta_{i_1}^R \beta_j$.
Next, we derive a lower bound only in terms of $\Delta_i^R \beta_j$.
Since the algorithm does not choose a deterministic round with $i$ alone,
we have $\Delta_{i}^R \beta_j + \Delta_{i_1}^R \beta_j \ge \Delta_i^D \beta_j$.
Further, we have $\Delta_i^D \beta_j = \kappa \cdot \Delta_i^R \beta_j$ by Eqn.~\eqref{eqn:beta-increment-deterministic}. 
Combining these, we have $\beta_j \ge \kappa \cdot \Delta_i^R \beta_j$.
Finally, by the definition of $\Delta_i^R \beta_j$ in Eqn.~\eqref{eqn:beta-increment-randomized}, $\beta_j$ is at least:
\[
    \kappa \cdot \bigg( \int_0^{w_{ij}} b\big(k_i(w)\big) dw - \frac{1}{2} \int_{w_{ij}}^\infty \sum_{0 \le \ell < k_i(w)} a(\ell) dw \bigg) 
    ~.
\]

Lower bounding the $\alpha_i(w)$'s is more subtle.
Recall that $k_i(w)$ denotes the value at the beginning of the round when $j$ arrives.
Thus, the value of $k_i(w)$ increases by $1$ for any weight-level
$0 < w \le w_{ij}$ and stays the same for any other weight-level $w > w_{ij}$.
Therefore, the contribution of the $\alpha_i(w)$'s
to approximate dual feasibility is at least:
\[
    \int_0^{w_{ij}} \sum_{0 \le \ell \le k_i(w)} a(\ell) dw + \int_{w_{ij}}^\infty \sum_{0 \le \ell < k_i(w)} a(\ell) dw
    ~.
\]

Finally, since $\kappa < 2$,
the net contribution from weight-levels $w > w_{ij}$ is nonnegative, so we can drop them.
Then approximate dual feasibility as in Eqn.~\eqref{eqn:approximate-dual-feasible} becomes:
\[
    \int_0^{w_{ij}} \left( \sum_{0 \le \ell \le k_i(w)} a(\ell) + \kappa \cdot b\left(k_i(w)\right) \right) dw
    \ge
    \Gamma \cdot w_{ij}
    ~.
\]

Thus, it suffices to ensure the inequality locally at every weight-level:
\begin{equation}
    \label{eqn:feasibility-randomized-chosen}
    \forall k \ge 0 \quad : \qquad \sum_{0 \le \ell \le k} a(\ell) + \kappa \cdot b(k) \ge \Gamma
    ~.
\end{equation}

\noindent
There are two differences between Eqn.~\eqref{eqn:feasibility-randomized-not-chosen}
and Eqn.~\eqref{eqn:feasibility-randomized-chosen}
First, the summation above includes $\ell = k$.
We can do this because $i$ is one of the two candidates and therefore, $k_i(w)$ increases by $1$ in the round of $j$ for any weight-level $w \le w_{ij}$.
Second, the $\kappa$ coefficient for the second term is smaller.

\paragraph{Case 3: Round of $j$ is deterministic, $i$ is not chosen.}
By definition, $\beta_j = \Delta_{i^*}^D \beta_j$.
Next, we derive a lower bound in terms of $\Delta_i^R \beta_j$.
Since the algorithm does not choose a randomized round with $i$ and $i^*$
as the two candidates,
we have $\Delta_{i^*}^D \beta_j > \Delta_{i^*}^R \beta_j + \Delta_i^R \beta_j$.
By Eqn.~\eqref{eqn:beta-increment-deterministic} and $\kappa < 2$,
we have $\Delta_{i^*}^R \beta_j > \frac{1}{2} \cdot \Delta_{i^*}^D \beta_j$.
Here, we use the fact that $\Delta_{i^*}^D \beta_j \ge 0$, because $i^*$ is chosen in a deterministic round.
Putting this together gives us
$\beta_j = \Delta_{i^*}^D \beta_j > 2 \cdot \Delta_i^R \beta_j$, which is
identical to the lower bound in the first case.
Therefore, approximate dual feasibility is guaranteed by Eqn.~\eqref{eqn:feasibility-randomized-not-chosen}.

\paragraph{Case 4: Round of $j$ is deterministic, $i$ is chosen.}
For any $0 < w \le w_{ij}$, we have $k_i(w) = \infty$ after this round.
Therefore, approximate dual feasibility follows from the contribution
of the $\alpha_i(w)$'s alone due to the invariant in Eqn.~\eqref{eqn:invariant-alpha}
and the boundary condition in Eqn.~\eqref{eqn:feasibility-alpha-infty}.

\paragraph{Case 5: Round of $j$ is unmatched.}
By definition, $\beta_j = 0$.
Moreover, $\Delta_i^D \beta_j < 0$ because the algorithm chooses to leave $j$ unmatched, which further implies $\Delta_i^R \beta_j < 0$ by Eqn.~\eqref{eqn:beta-increment-deterministic}.
Therefore, we have $\beta_j \ge 2 \cdot \Delta_i^R \beta_j$,
identical to the lower bound in the first case.
Thus, approximate dual feasibility
is guaranteed by Eqn.~\eqref{eqn:feasibility-randomized-not-chosen}.

\subsection{Optimizing the Gain-Sharing Parameters}
\label{sec:gain-sharing}

To optimize the competitive ratio $\Gamma$ in the above online primal-dual analysis,
it remains to solve for the gain sharing parameters $a(k)$ and $b(k)$ using
the following LP:
\begin{align*}
    \textrm{maximize} \quad & 
    \Gamma \\
    \textrm{subject to} \quad & \text{Eqn.~\eqref{eqn:gain-split-deterministic}, \eqref{eqn:gain-split-randomized}, \eqref{eqn:gain-split-prepaid}, \eqref{eqn:feasibility-alpha-infty}, \eqref{eqn:feasibility-randomized-not-chosen}, and \eqref{eqn:feasibility-randomized-chosen}}
\end{align*}

\noindent
We obtain a lower bound on the competitive ratio by solving
a more restricted LP, which is finite.
In particular,
we set $a(k) = b(k) = 0$ for all $k > \kmax$ for some sufficiently large integer $\kmax$,
so that it becomes:
\begin{align*}
    \textrm{maximize} \quad & 
    \Gamma \\[1ex]
    \textrm{subject to} \quad & 
    \sum_{k \le \ell \le \kmax} a(\ell) + \kappa \cdot b(k) \le 2^{-k} \left(1 - \gamma\right)^{\max\{k-1,0\}}  && \forall 0 \le k \le \kmax \\
    &
    a(0) + b(0) \le \frac{1}{2} \notag \\[1ex]
    & 
    a(k) + b(k) \le 2^{-k-1} \left(1 - \gamma\right)^{k-1} \left(1 + \gamma\right) && \forall 1 \le k \le \kmax \\[2ex]
    &
    a(0) \ge \frac{\gamma}{2} \\[1ex]
    &
    \sum_{0 \le \ell \le \kmax} a(\ell) \ge \Gamma \\
    &
    \sum_{0 \le \ell < k} a(\ell) + 2 \cdot b(k) \ge \Gamma && \forall 0 \le k \le \kmax \\
    &
    \sum_{0 \le \ell \le k} a(\ell) + \kappa \cdot b(k) \ge \Gamma && \forall 0 \le k \le \kmax \\
    & 
    a(k), b(k) \ge 0 && \forall 0 \le k \le \kmax
\end{align*}

We present an approximately optimal solution to the finite LP in
Table~\ref{tab:lp-solution-weak} 
with $\gamma = \sfrac{1}{16}$, $\kappa = \sfrac{3}{2}$,
and $\kmax = 8$, which gives $\Gamma > 0.505$.
We also tried different values of $\kappa = 1 + \sfrac{\ell}{16}$,
for $0 \le \ell \le 16$.
If $\kappa = 1$ or $\kappa = 2$, then $\Gamma = 0.5$;
if $\kappa = 1 + \sfrac{15}{16}$, then $\Gamma \approx 0.5026$;
for all other values of $\kappa$, $\Gamma > 0.505$.
Hence, the analysis is robust to the choice of
$\kappa$, so long as it is neither too close to $1$ nor to $2$.
In Table~\ref{tab:lp-solution-main} we present
an approximately optimal solution under the same setting except we use a larger
$\gamma = \frac{13\sqrt{13}-35}{108} > 0.1099$ as in Theorem~\ref{thm:ocs-main},
which leads to an improved competitive ratio $\Gamma > 0.5086$.%
\footnote{All of our source code is available at
\href{https://github.com/fahrbach/edge-weighted-online-bipartite-matching}{https://github.com/fahrbach/edge-weighted-online-bipartite-matching}.}

\begin{table}[t]
    \centering
    \renewcommand{\arraystretch}{1.2}
    \begin{subtable}{.4\textwidth}
        \centering
        \begin{tabular}{c|cc}
            \hline
            $k$ & $a(k)$ & $b(k)$ \\
            \hline
            $0$ & $0.24748256$ & $0.25251744$ \\
            $1$ & $0.13684883$ & $0.12877617$ \\
            $2$ & $0.06415997$ & $0.06035174$ \\
            $3$ & $0.03009310$ & $0.02827176$ \\
            $4$ & $0.01413332$ & $0.01322521$ \\
            $5$ & $0.00666576$ & $0.00615855$ \\
            $6$ & $0.00318572$ & $0.00282566$ \\
            $7$ & $0.00158503$ & $0.00123280$ \\
            $8$ & $0.00088057$ & $0.00044028$ \\
            \hline
        \end{tabular}
        \caption{$\gamma = \sfrac{1}{16}$, $\Gamma = 0.50503484$}
        \label{tab:lp-solution-weak}
    \end{subtable}
    \begin{subtable}{.4\textwidth}
        \centering
        \begin{tabular}{c|cc}
            \hline
            $k$ & $a(k)$ & $b(k)$ \\
            \hline
            $0$ & $0.24566361$ & $0.25433639$ \\
            $1$ & $0.14597716$ & $0.13150459$ \\
            $2$ & $0.06497349$ & $0.05851601$ \\
            $3$ & $0.02892807$ & $0.02602926$ \\
            $4$ & $0.01289279$ & $0.01156523$ \\
            $5$ & $0.00576587$ & $0.00511883$ \\
            $6$ & $0.00260819$ & $0.00223589$ \\
            $7$ & $0.00122399$ & $0.00093180$ \\
            $8$ & $0.00063960$ & $0.00031980$ \\
            \hline
        \end{tabular}
        \caption{$\gamma = \frac{13\sqrt{13}-35}{108} \approx 0.109927$, $\Gamma = 0.508672$}
        \label{tab:lp-solution-main}
    \end{subtable}
    \caption{Approximately optimal solutions to the factor-revealing LP
    with $\kappa = \sfrac{3}{2}$ and $\kmax = 8$.}
    \label{tab:lp-solution}
\end{table}

\section{Online Correlated Selection: In Detail}
\label{sec:ocs}

This section provides the formal description and analysis of the OCS used in Section~\ref{sec:edge-weighted}.
Section~\ref{sec:ocs-warmup} introduces the basics of OCS with
the proof of a $\sfrac{1}{16}$-OCS, substantiating the sketch in Section~\ref{sec:ocs-intro}.
Section~\ref{sec:ocs-main} then shows how to improve the design and analysis of
the OCS to prove Theorem~\ref{thm:ocs-main}.

\subsection{Warmup: Constructing a \texorpdfstring{$\sfrac{1}{16}$-OCS}{}}
\label{sec:ocs-warmup}

\begin{algorithm}[t]
    \caption{Online Correlated Selection (OCS)}
    \label{alg:OCS-warmup}
    \begin{algorithmic}
        \medskip
        \STATEx \textbf{State variables:}
        \begin{itemize}
            \item $\type_i \in \big\{ \chosen, \notchosen, \nonadapt \big\}$ for each ground element $i$; 
            initially, let $\type_i = \nonadapt$.
        \end{itemize}
        \STATEx \textbf{On receiving a pair of elements $i_1$ and $i_2$:}
        \begin{enumerate}
            \item With probability $\sfrac{1}{2}$, let it be a \emph{sender}:
            \begin{enumerate}
                \item Draw $\ell, m \in \{1, 2\}$ uniformly at random.
                \item Let $\type_{i_{-m}} = \nonadapt$.
                \item If $m = \ell$, let $\type_{i_m} = \chosen$; otherwise, let $\type_{i_m} = \notchosen$.
                %
                %
            \end{enumerate}
            \item Otherwise (i.e., with probability $\sfrac{1}{2}$), let it be a \emph{receiver}:
            \begin{enumerate}
                \item Draw $m \in \{1, 2\}$ uniformly at random.
                \item If $\type_{i_m} = \chosen$, let $\ell = -m$; \newline
                if $\type_{i_m} = \notchosen$, let $\ell = m$; \newline
                if $\type_{i_m} = \nonadapt$, draw $\ell \in \{1, 2\}$ uniformly at random.
                \item Let $\type_{i_1} = \type_{i_2} = \nonadapt$.
            \end{enumerate}
            \item Select $i_\ell$.
        \end{enumerate}
        \smallskip
    \end{algorithmic}
\end{algorithm}

Algorithm~\ref{alg:OCS-warmup} presents the $\sfrac{1}{16}$-OCS.
It maintains a state variable $\type_i$ for each element $i$.
If the state $\type_i$ equals $\chosen$ or $\notchosen$, it reflects the selection in the last pair involving $i$
and indicates that this information can be used in the next pair involving $i$.
If the state $\type_i$ is $\nonadapt$, it means that the past selection result
of element $i$ cannot be used to determine the selections in future pairs.

For each pair of elements $i_1$ and $i_2$ in the sequence,
the OCS first decides whether this is a \emph{sender} or a \emph{receiver} uniformly at random.
If it is a sender,
use a fresh random bit to select $i_\ell$, $\ell \in \{1, 2\}$, for this pair.
Then, draw $m \in \{1, 2\}$ uniformly at random and set $\type_{i_m}$ to reflect the selection in this round;
set $\type_{i_{-m}}$ to be $\nonadapt$, where $-m$ is an abbreviation for $3 - m$.
That is, the OCS forwards the random selection in this round to subsequent rounds for only one of the two elements in the current pair, chosen uniformly at random.

If it is a receiver, on the other hand, the OCS seeks to use the previous selection result of the elements to determine its choice of $i_\ell$.
First, it draws $m \in \{1, 2\}$ uniformly at random and checks the state variable of $i_m$.
To achieve negative correlation, the OCS makes the opposite selection in this round whenever possible.
If the state is $\chosen$, indicating that $i_m$ is selected in the last pair involving it, the OCS selects $i_{-m}$ this time, and vice versa;
if the state variable equals $\nonadapt$, the OCS uses a fresh random bit to select $i_\ell$.
In either case, reset the states of $i_1$ and $i_2$ to be $\nonadapt$.

In fact, we will show a result stronger than the definition of $\sfrac{1}{16}$-OCS.

\begin{lemma}
    \label{lem:OCS}
    For any fixed sequence of pairs of elements,
    any fixed element $i$, and any integer $k \ge 0$,
    Algorithm~\ref{alg:OCS-warmup} ensures that after appearing
    in a collection of $k$ consecutive pairs, $i$ is selected at least once with probability at least $1 - 2^{-k} \cdot f_k$, where $f_k$ is defined recursively as:
    \begin{equation}
        \label{eqn:OCS-recurrence}
        f_k = 
        \begin{cases}
          1 & \text{if $k = 0, 1$ ;} \\
            f_{k-1} - \frac{1}{16} f_{k-2} & \text{if $k \ge 2$ .}
        \end{cases}
    \end{equation}
\end{lemma}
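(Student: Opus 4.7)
The plan is to decouple the analysis into two independent parts. From $i$'s perspective, label each pair $t_j$ containing $i$ by a type $T_j \in \{\text{SI}, \text{SO}, \text{RI}, \text{RO}\}$, where S/R records whether the pair is a sender or receiver and I/O records whether the sampled $m$-value points at $i$ or at the other element. Since Algorithm~\ref{alg:OCS-warmup} draws the sender/receiver bit and the $m$-value from independent fair coins, the $T_j$'s are i.i.d.\ uniform over the four labels. Call $(t_j, t_{j+1})$ a \emph{matched pair} if $T_j = \text{SI}$ and $T_{j+1} = \text{RI}$, which occurs with probability $\sfrac{1}{16}$; when this happens, the sender at $t_j$ writes into $\type_i$ whether $i$ was just selected, and the receiver at $t_{j+1}$ reads that bit and makes the opposite selection, so $i$ is selected in exactly one of the two pairs. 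The proof then reduces to bounding (a) the probability $r_k$ that no matched pair occurs among the $k$ consecutive $i$-pairs, and (b) the conditional probability that $i$ is never selected given no matched pair.

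For (a) I would derive a recurrence by conditioning on $T_k$. If $T_k \neq \text{RI}$ (probability $\sfrac{3}{4}$), there is no match at position $(k-1, k)$ regardless of $T_{k-1}$, contributing $\sfrac{3}{4}\,r_{k-1}$. If $T_k = \text{RI}$, one additionally needs $T_{k-1} \neq \text{SI}$; a further split on $T_{k-1}$ gives $\sfrac{1}{4}\bigl(r_{k-1} - \sfrac{1}{4}\,r_{k-2}\bigr)$, where the subtracted term counts sequences with $T_{k-1} = \text{SI}$ (which automatically forbids a match at $(k-2, k-1)$) and no match in $T_1,\ldots,T_{k-2}$. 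Summing the two contributions yields $r_k = r_{k-1} - \sfrac{1}{16}\,r_{k-2}$. With the base cases $r_0 = r_1 = 1$, this matches the recurrence \eqref{eqn:OCS-recurrence} defining $f_k$, so $r_k = f_k$.

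For (b) I would go case by case on $T_j$ to argue that, given no matched pair and given the entire type sequence, each indicator $X_j \defeq \mathbb{1}[i \text{ selected in } t_j]$ behaves like an independent Bernoulli$(\sfrac{1}{2})$. If $T_j \in \{\text{SI}, \text{SO}\}$, the selection is a fresh bit $\ell_j$. If $T_j = \text{RI}$ and $(t_{j-1}, t_j)$ is not matched, then $T_{j-1} \neq \text{SI}$, so the previous $i$-pair left $\type_i = \nonadapt$ (every non-SI pair does), and $X_j$ is again a fresh bit. If $T_j = \text{RO}$, the selection depends on $\type_{i'}$ for the other element, but the symmetry of the algorithm between the $\chosen$ and $\notchosen$ labels gives marginal $\sfrac{1}{2}$; moreover, any correlation with an earlier $X_{j'}$ must arise from a sender-to-$i'$ event at $t_{j'}$ that set $\type_{i'}$, forcing $X_j = 1 - X_{j'}$---a perfect anti-correlation that only decreases the probability of the all-zeros event. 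Hence $\Pr[X_1 = \cdots = X_k = 0 \mid \text{no match}] \le 2^{-k}$, and combining with (a) gives $\Pr[i \text{ never selected}] \le f_k \cdot 2^{-k}$ since a matched pair forces $i$ to be selected.

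The hardest step will be the RO case in part (b), because the dependence of $X_j$ on $\type_{i'}$ requires tracking the evolution of $\type_{i'}$ through pairs that contain $i'$ but not $i$. The cleanest route is a coupling argument: replace each RO selection with a fresh independent fair bit and show that the true algorithm's probability of the all-zero event does not exceed the coupled one, by verifying that every dependence introduced in the true algorithm is of the anti-correlated kind identified above. An induction on the number of RO pairs, unrolling the $\type_{i'}$ chain to its most recent sender-to-$i'$ event, should make this formal.
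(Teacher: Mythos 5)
Your proposal is correct and follows essentially the same strategy as the paper: you identify the sender-to-receiver forwarding events as the sole source of negative correlation, condition on none occurring, and arrive at the bound $f_k \cdot 2^{-k}$ via the same second-order recurrence. Two small comparisons are worth making. First, your derivation of the recurrence for $r_k$ by conditioning on $T_k$ and then peeling off the sub-case $T_{k-1} = \text{SI}$ is valid and arguably more direct than the paper's bookkeeping with the auxiliary events $A_k$ and $B_k$; both recover $r_k = r_{k-1} - \tfrac{1}{16}r_{k-2}$. Second, and more substantively, your part (b) is harder than it needs to be because you condition only on $E_1 = \{\text{no match among the consecutive-}i\text{ arcs}\}$. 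Under $E_1$ the selections can still be anti-correlated through cross-element arcs (your RO sub-case with the sender $t_{j'}$ inside the collection), which is precisely what forces you into the coupling and ``unrolling the $\type_{i'}$ chain'' argument you flag as the hardest step. The paper instead conditions on the finer event $E_2 = \{\text{no ex-post arc of any kind among the } k \text{ nodes}\}$. Given any realization of the sender/receiver and $m$ coins, the ex-post matching is fixed; if some arc lands inside $\{j_1,\dots,j_k\}$ then $i$ is selected with certainty, and if none does then each of the $k$ selections is a deterministic function of a distinct node's fresh $\ell$-bit, so they are exactly i.i.d.\ Bernoulli$(\sfrac{1}{2})$. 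This gives $\Pr[\text{never selected}] = \Pr[E_2]\cdot 2^{-k}$ with equality, after which one bounds $\Pr[E_2] \le \Pr[E_1] = f_k$. Adopting this conditioning dissolves your RO case entirely: inside $E_2$, an RO receiver either receives from a sender outside the collection (an independent fresh bit) or would be an endpoint of an in-collection arc, contradicting $E_2$. That replaces the coupling sketch with a one-line application of the law of total probability.
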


Lemma~\ref{lem:OCS} implies that Algorithm~\ref{alg:OCS-warmup}
is a $\sfrac{1}{16}$-semi-OCS by considering the subsequence
of all pairs involving element $i$ because:
\[
  f_k = f_{k-1} - \frac{1}{16} f_{k-2}
  \le \left( 1-\frac{1}{16} \right) f_{k-1}
  \le \left( 1 - \frac{1}{16} \right)^{k-1} f_1
  = \left( 1 - \frac{1}{16} \right)^{k-1}
    ~.
\]

Let $P^1 = \{i_1^1, i_2^1\}, P^2 = \{i_1^2, i_2^2\}, \dots, P^n = \{i_1^n, i_2^n\}$
be the sequence of pairs of ground elements.
We start with a graph-theoretic interpretation of the OCS algorithm.

\paragraph{Ex-ante Dependence Graph.}
Consider a graph $G^{\textrm{\em ex-ante}} = (V, E^{\textrm{\em ex-ante}})$ as follows,
which we shall refer to as the \emph{ex-ante dependence graph}. 
To make a distinction with the vertices and edges in the online matching problem,
we shall refer to the vertices and edges in the dependence graph as \emph{nodes} and \emph{arcs}, respectively.
Let there be a node for each pair of elements in the collection.
We will refer to them as $1 \le j \le n$, i.e.:
\[
  V = \big\{ j \in \mathbb{Z} : 1 \le j \le n \big\}
    ~.
\]
Further, for any fixed element $i$ in the ground set,
let there be a directed arc from $j_1$ to $j_2$ for any two consecutive pairs $j_1 < j_2$ involving $i$, i.e.:
\[
    E^{\textrm{\em ex-ante}} = \big\{ (j_1, j_2)_i : j_1 < j_2 \textrm{ s.t.\ } i \in P^{j_1}, i \in P^{j_2}, \textrm{ and } \forall j_1 < t < j_2, i \notin P^t \big\}
    ~.
\]
The subscript $i$ helps to distinguish parallel arcs
when the pairs $j_1$ and $j_2$ have the same two elements.
See Figure~\ref{fig:dependence-graphs-ex-ante} for an illustrative example
of the ex-ante dependence graph.

\begin{figure}[t]
    \centering
    \begin{subfigure}{.85\textwidth}
        \includegraphics[width=\textwidth]{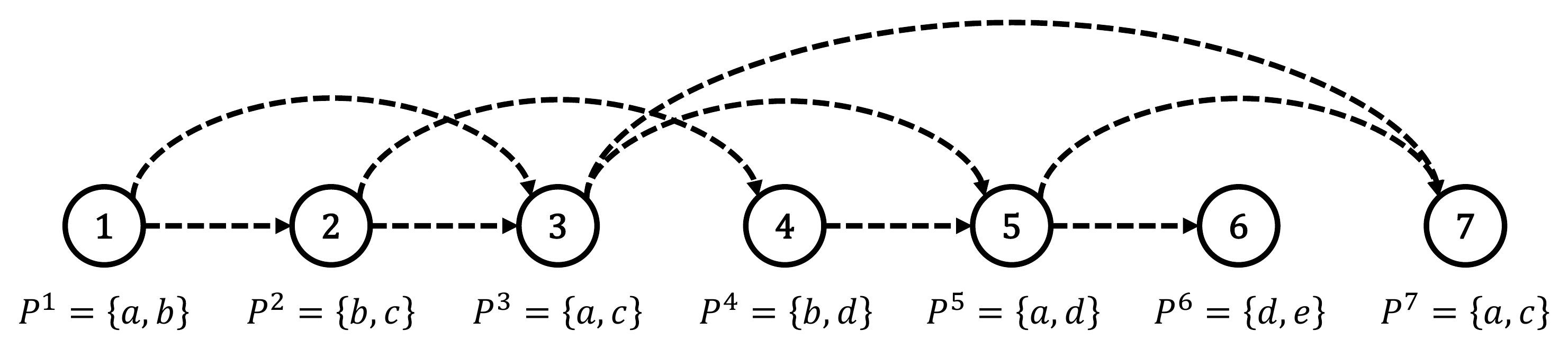}
        \caption{\emph{Ex-ante} dependence graph.}
        \label{fig:dependence-graphs-ex-ante}
    \end{subfigure}
    
    \bigskip
    
    \begin{subfigure}{.85\textwidth}
        \includegraphics[width=\textwidth]{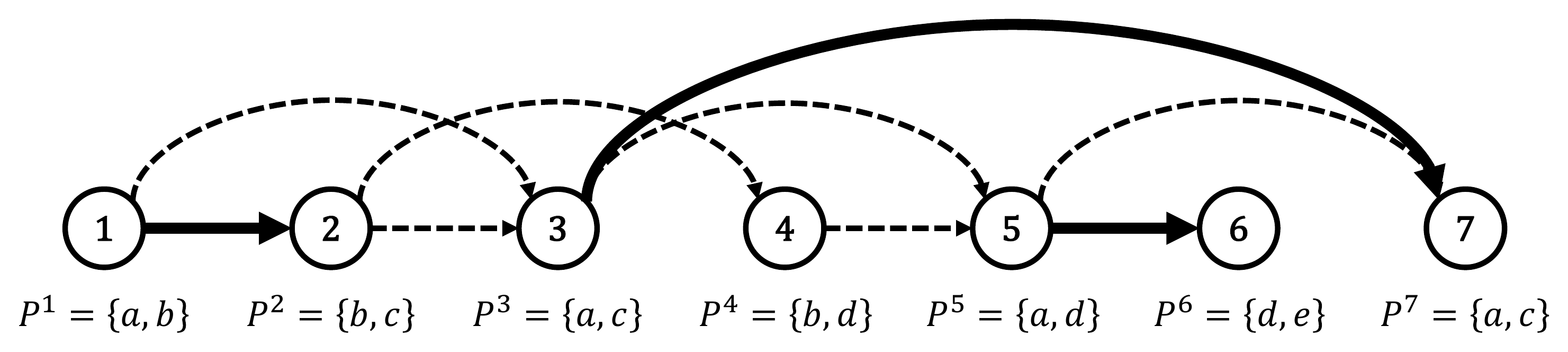}
        \caption{\emph{Ex-post} dependence graph (bold and solid edges).}
        \label{fig:dependence-graphs-ex-post}
    \end{subfigure}
    
    \bigskip
    
    \begin{subfigure}{.85\textwidth}
        \includegraphics[width=\textwidth]{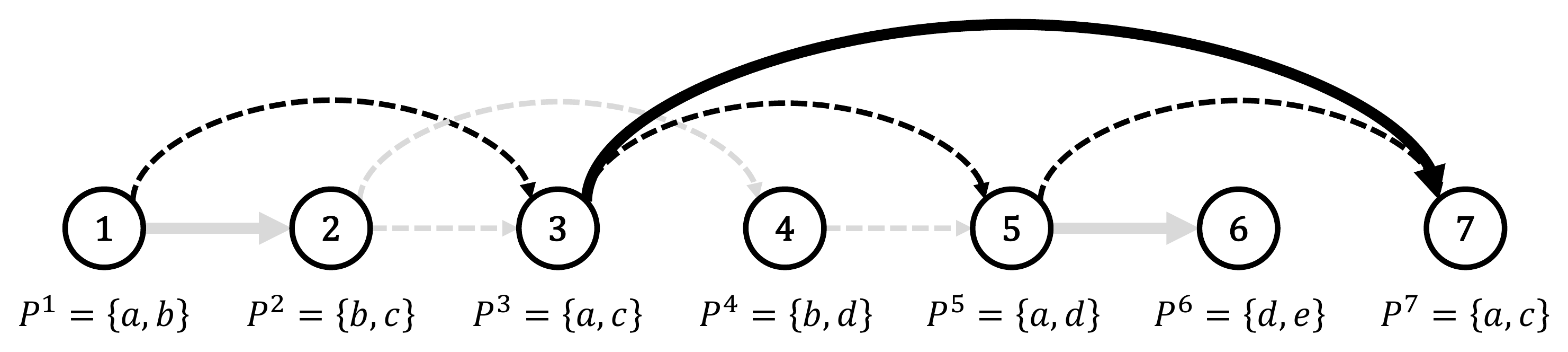}
        \caption{Dependence subgraph associated with a fixed element $a$.}
        \label{fig:dependence-graphs-fixed-candidate}
    \end{subfigure}
    \caption{Example of dependence graphs with five ground elements and a sequence of seven pairs.}
    \label{fig:dependence-graphs}
\end{figure}

Each arc in the \emph{ex-ante} dependence graph represents two pairs in the sequence
in which the OCS could use the same random bit to select oppositely.
By construction,
there are at most two outgoing arcs and at most two incoming arcs for each node.

In particular, consider any arc $(j_1, j_2)_i$ in the \emph{ex-ante} dependence graph, with $i$ being the common element.
If the randomness used by the OCS satisfies
(1) pair $j_1$ is a sender, (2) $i_m = i$ in pair $j_1$, (3) pair $j_2$ is a receiver, and (4) $i_m = i$ in pair $j_2$, the selections in the two pairs would be perfectly negatively correlated in the sense that $i$ is selected in exactly one of the two pairs.
Each of these four events happens independently with probability $\sfrac{1}{2}$.
Hence, we achieve the above perfect negative correlation with probability $\sfrac{1}{16}$.

\paragraph{Ex-post Dependence Graph.}
The \emph{ex-post} dependence graph $G^{\textrm{\em ex-post}} = (V, E^{\textrm{\em ex-post}})$
is a subgraph of the \emph{ex-ante} dependence graph that keeps the arcs corresponding
to pairs that are perfectly negatively correlated,
given the realization of whether each step is a sender or a receiver,
and the value of $m$ therein. 
Equivalently, the \emph{ex-post} dependence graph is realized as follows.
Over the randomness with which the OCS decides whether each step is a sender or a receiver,
and the values of $m$, each node in the \emph{ex-ante} dependence graph
effectively picks at most one of its incident arcs,
each with probability $\sfrac{1}{4}$;
an arc is realized in the \emph{ex-post} graph if both incident nodes choose it.
With this interpretation, we get that the \emph{ex-post} graph is a matching.
The OCS may be viewed as a randomized online algorithm that picks a matching
in the \emph{ex-ante} graph such that each arc in the \emph{ex-ante}
graph is chosen with probability lower bounded by a constant.
See Figure~\ref{fig:dependence-graphs-ex-post} for an example.

\begin{proof}[Proof of Lemma~\ref{lem:OCS}]
Let $j_1 < j_2 < \dots$ be the pairs involving a ground element $i$.
We will use the element $a$ and $k = 4$
in Figure~\ref{fig:dependence-graphs-fixed-candidate} as a running example,
where $j_1 = 1, j_2 = 3, j_3 = 5, j_4 = 7$
and the relevant arcs in the dependence graphs are
$(1, 3)_a$, $(3, 5)_a$, $(3, 7)_c$, and $(5, 7)_a$.

If at least one of the arcs among $j_1 < j_2 < \dots < j_k$ are realized in the \emph{ex-post} dependence graph, element $i$ must be selected at least once.
This is because the randomness
(related to the choice of~$\ell$ in the OCS) is perfectly negatively correlated
in the two incident nodes of the arc and thus,
$i$ is selected exactly once in these two steps.
Importantly, this is true even if the arc is not due to element $i$.
For example, given that the arc $(3, 7)_c$ is realized in
Figure~\ref{fig:dependence-graphs-fixed-candidate},
element $a$ must be selected at least once after step $7$.

On the other hand, if none of these arcs are realized,
then the random bits used in the $k$ steps $j_1 < j_2 < \dots < j_k$ are independent.
For example, consider element $a$ and $k = 3$ in Figure~\ref{fig:dependence-graphs-fixed-candidate}.
Element $a$ is selected independently with probability $\sfrac{1}{2}$
in steps $j_1 = 1$, $j_2 = 3$, and $j_3 = 5$,
given that neither $(1, 3)_a$ nor $(3, 5)_a$ is realized.

Importantly, even if some of these pairs are receivers in that the selections therein are based on the random bits realized earlier by some senders, from $i$'s point of view,
they are still independent of the randomness in the other rounds that $i$ is involved in.
For example, from $c$'s point of view
in Figure~\ref{fig:dependence-graphs-fixed-candidate},
even though the selection in step $2$ is determined by the selection
in step $1$, it is independent of the selections
in steps $3$ and $7$, which involve $c$.

Putting this together, the probability that $i$ is \emph{never selected}
after steps $j_1 < j_2 < \dots < j_k$ is equal to the probability that
(1) none of the arcs among these steps is realized,
times the probability that
(2) none of the $k$ independent random selections pick $i$.
This follows from the law of total probability.
The latter quantity equals $2^{-k}$, so
it remains to analyze the former.
We shall upper bound it by the probability that none of the arcs
$(j_1, j_2)_i, (j_2, j_3)_i, \dots, (j_{k-1}, j_k)_i$ is realized.
We shall omit the subscript $i$ in the rest of the proof for brevity.
Denote this event as $F_k$ and its probability as $f_k$.
    
Trivially, we have $f_0 = f_1 = 1$.
To prove that the recurrence in Eqn.~\eqref{eqn:OCS-recurrence} governs $f_k$, further divide event $F_k$ into two subevents.
Let $A_k$ the event that none of the arcs $(j_1, j_2), (j_2, j_3), \dots, (j_{k-1}, j_k)$ is realized, and node $j_k$ picks arc $(j_k, j_{k+1})$ in realizing the \emph{ex-post} dependence graph.
Let $B_k$ be the event that none of the arcs is realized and node $j_k$ does not pick arc $(j_k, j_{k+1})$.
Let $a_k$ and $b_k$ be the probability of $A_k$ and $B_k$, respectively.
We have that $A_k$ and $B_k$ form a partition of $F_k$, and thus:
\[
    f_k = a_k + b_k
    ~.
\]
    
    If node $j_k$ picks arc $(j_k, j_{k+1})$, which happens with probability $\sfrac{1}{4}$, arc $(j_{k-1}, j_k)$ is not realized by definition, regardless of the remaining randomness.
    Therefore, conditioned on the choice of $j_k$, subevent $A_k$ happens if and only if the choices made by steps $j_1, j_2, \dots j_{k-1}$ is such that none of $(j_1, j_2), \dots, (j_{k-2}, j_{k-1})$ is realized, i.e., when $F_{k-1}$ happens.
    That is:
    \[
        a_k = \frac{1}{4} f_{k-1}
        ~.
    \]
    
    On the other hand, if pair $j_k$ does not pick $(j_k, j_{k+1})$, there are two possibilities.
    The first case is when $j_k$ picks $(j_{k-1}, j_k)$, which happens with probability $\sfrac{1}{4}$.
    In this case, the choices made by $j_1, \dots, j_{k-1}$ must be such that none of $(j_1, j_2), \dots, (j_{k-2}, j_{k-1})$ is realized, and $j_{k-1}$ does \emph{not} pick $(j_{k-1}, j_k)$, i.e., $B_{k-1}$ happens.
    The second case is when $j_k$ picks neither $(j_{k-1}, j_k)$ nor $(j_k, j_{k+1})$, which happens with probability $\sfrac{1}{2}$.
    In this case, the choices made by $j_1, \dots, j_{k-1}$ must be such that none of $(j_1, j_2), \dots, (j_{k-2}, j_{k-1})$ is realized, i.e., $F_{k-1}$ happens.
    Putting this together, we have:
    \[
        b_k = \frac{1}{4} b_{k-1} + \frac{1}{2} f_{k-1}
        ~.
    \]
    
    \noindent
    Eliminating $a_k$'s and $b_k$'s with the above three equations, we get the recurrence in Eqn.~\eqref{eqn:OCS-recurrence}.
\end{proof}

The same analysis generalizes to prove a stronger result, which implies a $\sfrac{1}{16}$-OCS.

\begin{lemma}
    \label{lem:OCS-generalized}
    For any fixed sequence of pairs of elements, any fixed element $i$,
    and any disjoint collections of $k_1, k_2, \dots, k_m$ consecutive pairs involving $i$,
    Algorithm~\ref{alg:OCS-warmup} ensures that $i$ is selected in at least one of these pairs with probability at least:
    \[
        1 - \prod_{\ell=1}^m 2^{-k_\ell} \cdot f_{k_\ell} 
        ~.
    \]
\end{lemma}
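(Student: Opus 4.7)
The plan is to generalize the single-subsequence argument of Lemma~\ref{lem:OCS} by decomposing the event "$i$ is never selected" across the $m$ subsequences and showing that its probability factorizes. Let $F$ denote the event that $i$ is not selected in any of the $\sum_\ell k_\ell$ pairs. I will show $P(F) \le \prod_{\ell=1}^m 2^{-k_\ell} f_{k_\ell}$ by introducing the intermediate event $C$ that no ex-post arc with both endpoints inside the union of our subsequences is realized. The final bound on $i$ being selected then follows by taking the complement.

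First I would argue that $F$ implies $C$. As in the proof of Lemma~\ref{lem:OCS}, any realized ex-post arc $(p_1,p_2)_j$ (for any common element $j$) between two pairs both containing $i$ forces $i$ to be selected in exactly one of $p_1$ or $p_2$: if $j = i$ this is immediate, and if $j\neq i$ then the two pairs must share the second element as well, so negative correlation in $j$'s selections makes $i$ the selection in the remaining pair. Hence $P(F) = P(F \mid C)\,P(C)$. Next I would argue $P(F \mid C) = 2^{-\sum_\ell k_\ell}$: conditioned on $C$, no ex-post arc connects any two of our pairs, so in the ex-post matching each of our pairs is either unmatched or matched to an outside pair. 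In every case the random bit selecting the element in that pair is independent of the bits used by the other pairs of our subsequences (because the ex-post graph is a matching, and correlation between two pairs can only be transmitted by a direct ex-post arc between them), and each such bit selects $i$ with marginal probability $\sfrac{1}{2}$.

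It then remains to bound $P(C)$. I would drop the condition about arcs between different subsequences and use $C \subseteq \bigcap_\ell C_\ell$, where $C_\ell$ is the event that none of the $k_\ell-1$ arcs $(j_1^\ell, j_2^\ell)_i, \ldots, (j_{k_\ell-1}^\ell, j_{k_\ell}^\ell)_i$ strictly inside subsequence $\ell$ is realized in the ex-post graph. Because $C_\ell$ depends only on the arc-choices made by the nodes $j_1^\ell,\ldots,j_{k_\ell}^\ell$, and these node sets are pairwise disjoint across $\ell$ (the subsequences are disjoint subsets of pairs), the events $C_1,\ldots,C_m$ are mutually independent. Finally, $P(C_\ell) \le f_{k_\ell}$ by exactly the recurrence analysis of Lemma~\ref{lem:OCS}, since that analysis only uses arc-choices of nodes $j_1^\ell,\ldots,j_{k_\ell}^\ell$ and already absorbs the possibility that $j_{k_\ell}^\ell$ picks an outgoing arc to the next $i$-pair (which may lie in another subsequence or outside all of them). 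Multiplying the pieces yields $P(F)\le \prod_\ell 2^{-k_\ell}f_{k_\ell}$.

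The main obstacle I expect is making the conditional independence claim airtight in the second paragraph, because the algorithm's state is propagated through a complex chain of senders and receivers and it is tempting to worry that arcs leaving our pairs to the outside world could indirectly correlate two of our pairs. The clean way to dispatch this is the observation that the ex-post dependence graph is a matching and therefore each of our pairs is tied to at most one other pair; combined with the fact that a fresh random bit is consumed along each ex-post matched edge or at each unmatched node, the $\sum_\ell k_\ell$ selection bits associated with our pairs are truly independent given $C$. Once this is nailed down, the rest of the argument is a direct rerun of the single-subsequence recurrence and a product bound.
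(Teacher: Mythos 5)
Your proposal is correct and takes essentially the same approach as the paper's own proof: decompose the failure probability into (graph-realization randomness) $\times$ (selection-bit randomness), upper bound the graph event by dropping cross-subsequence arcs, and then use the disjointness of the node sets to factor the probability into the $f_{k_\ell}$ terms. Your second and fourth paragraphs spell out in more detail the conditional-independence step that the paper dispatches with a one-line appeal to the law of total probability, but the decomposition, the key lemma invoked, and the structure of the bound are identical.
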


\begin{proof}
Let $j_1^\ell < j_2^\ell < \dots < j_{k_\ell}^\ell$ be the $\ell$-th subsequence
of consecutive pairs involving element $i$, for any $1 \le \ell \le m$. 
The probability that $i$ is never selected is equal to (1) the probability
that none of the arcs among the steps in these collections is realized, times (2) the probability that all $\sum_{\ell=1}^m k_\ell$ random bits are against $i$.
The latter is $\prod_{\ell=1}^m 2^{-k_\ell}$.
We upper bound the former with the probability that for any $1 \le \ell \le m$,
none of the arcs $(j_1^\ell, j_2^\ell), \dots, (j_{k_\ell-1}^\ell j_{k_\ell}^\ell)$ is realized.
Finally, observe that the events are independent for different collections
$\ell$ because the event concerning each collection only relies on the
randomness of the nodes in the collection.
Hence, it is at most $\prod_{\ell=1}^m f_{k_\ell}$.
\end{proof}

\subsection{Optimizing the OCS: Proof of Theorem~\ref{thm:ocs-main}}
\label{sec:ocs-main}


Similar to the warmup algorithm,
we will realize the \emph{ex-post} dependence graph by letting each node
be either a \emph{sender} or a \emph{receiver} independently and randomly.
The probability of letting a node be a sender, denoted as $p$,
will be optimized later.

A sender uses a fresh random bit to select an element from the corresponding pair.
Further, it randomly picks an out-arc in the \emph{ex-post} graph and sends its selection along the out-arc.
Although the out-neighbors and out-arcs have yet to arrive,
we can refer to them as the one due to the first and second element in the current pair, respectively.
This is identical to the warmup case.

A receiver, on the other hand, adapts to the information it receives and makes the opposite selection.
The improved OCS proactively checks both in-arcs of a receiver;
in contrast, the warmup algorithm checks only one randomly chosen in-arc.
Concretely, check both in-arcs in the \emph{ex-ante} graph to see if any in-neighbor is a sender who picks the arc between them.
If both in-neighbors are senders and both pick the corresponding arcs, choose one randomly.
Add the arc to the \emph{ex-post} dependence graph.
Suppose a receiver $j$ receives the selection by a sender $j'$ sent along arc $(j', j)_i$.
Then, select $i$ in round $j$ if it is not selected in round $j'$, and vice versa.

See Algorithm~\ref{alg:improved-OCS} for a formal definition of the improved OCS.

\begin{algorithm}[t!]
    \caption{Improved Online Correlated Selection}
    \label{alg:improved-OCS}
    \begin{algorithmic}
        \medskip
        \STATEx \textbf{Parameter:}
        \begin{itemize}
            \item $p$ : probability that a node is a $\sender$.
        \end{itemize}
        \STATEx \textbf{State variables:}
        \begin{itemize}
            \item $G^{\text{\em ex-ante}} = (V, E^{\exante})$ : \emph{ex-ante} dependence graph; initially, $V = E^{\exante} = \emptyset$.
            \item $G^{\text{\em ex-post}} = (V, E^{\expost})$ : \emph{ex-post} dependence graph; initially, $V = E^{\expost} = \emptyset$.
            \item $\tau_j \in \big\{ \sender, \receiver \big\}$ for any $j \in V$.
        \end{itemize}
        \STATEx \textbf{On receiving a pair $j$ of elements $i_1$ and $i_2$:}
        \begin{enumerate}
            \item Add $j$ to $V$.
            \item For $k \in \{1, 2\}$, let $j_k$ be the last pair which involves $i_k$;
                add an arc $(j_k, j)_{i_k}$ to $E^{\exante}$.
            %
            %
            \item With probability $p$, let it be a \sender:
            \begin{enumerate}
                \item Let $i^* = i_1$ or $i_2$, each with probability $\sfrac{1}{2}$.
                \item Pick an out-arc randomly.
            \end{enumerate}
        \item Otherwise, i.e., with probability $1-p$, let it be a \receiver:
            \begin{enumerate}
                \item Pick a $j_m$, $m \in \{ 1, 2 \}$, that is a $\sender$ who picks arc $(j_m, j)_{i_m}$ (break ties randomly):
                    \label{step:improved-OCS-tie}
                \begin{enumerate}
                    \item Add arc $(j_m, j)_{i_m}$ to $E^{\expost}$.
                    \item Let $i^* = i_m$ if $i_m$ is not selected in round $j_m$, let $i^* = i_{-m}$ otherwise.
                \end{enumerate}
                \item Otherwise, let $i^* = i_1$ or $i_2$, each with probability $\sfrac{1}{2}$.
                    \label{step:improved-OCS-receiver-choice}
            \end{enumerate}
            \item Select $i^*$.
        \end{enumerate}
        \smallskip
    \end{algorithmic}
\end{algorithm}

\begin{lemma}
    \label{lem:improved-OCS}
    For any fixed sequence of pairs of elements,
    any fixed element $i$, and any disjoint subsequences of
    $k_1, k_2, \dots, k_m$ consecutive pairs involving $i$,
    Algorithm~\ref{alg:improved-OCS} ensures that $i$ is selected in at
    least one of these pairs with probability at least:
    \[
        1 - \prod_{\ell = 1}^m 2^{-k_\ell} \cdot g_{k_\ell} 
        ~,
    \]
    where $g_k$ is defined recursively as follows:
    \begin{equation}
        \label{eqn:improved-OCS-recurrence}
        g_k = \begin{cases}
          1 & \text{if $k = 0, 1$} ~; \\
            g_{k-1} - \frac{1}{8} p \left( 1-p \right) \left( 4 - p \right) \cdot g_{k-2} & \text{if $k \ge 2$} ~.
        \end{cases}
    \end{equation}
\end{lemma}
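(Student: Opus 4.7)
The plan is to mirror the warmup proof of Lemma~\ref{lem:OCS-generalized} and adapt it to Algorithm~\ref{alg:improved-OCS}. The same \emph{ex-ante}/\emph{ex-post} dependence graph construction applies: each node is independently a sender with probability $p$ (and then picks one of its out-arcs uniformly) or a receiver with probability $1-p$; an arc $(j_1, j_2)_i$ enters $E^{\expost}$ precisely when $j_1$ sends along it and $j_2$ selects that in-arc in Step~\ref{step:improved-OCS-tie}. Two qualitative facts carry over unchanged. First, whenever any arc of the \emph{ex-post} graph connects two pairs that both contain $i$ (regardless of which element labels the arc), element $i$ is selected in exactly one of those two pairs, since the receiver's selection is the perfect negation of the sender's. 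Second, conditional on no such arc being realized among the $m$ subsequences, the selection bits at those $\sum_\ell k_\ell$ pairs are mutually independent uniform bits with respect to $i$, so $i$ is never selected with conditional probability exactly $\prod_\ell 2^{-k_\ell}$.

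Combining these observations, it remains to show that the probability that no such arc is realized is at most $\prod_\ell g_{k_\ell}$. As in the warmup, I coarsen the event by discarding all arcs except the $i$-direction consecutive ones $(j_{s}^\ell, j_{s+1}^\ell)_i$ within each subsequence; write $F_{k_\ell}$ for the event that none of these $k_\ell - 1$ arcs is realized in the $\ell$-th subsequence, and let $g_{k_\ell} := P(F_{k_\ell})$. Since disjoint subsequences involve disjoint relevant nodes (and their external in-neighbors can be taken independently), the events $F_{k_\ell}$ factorize, and it suffices to verify the recurrence in~\eqref{eqn:improved-OCS-recurrence} for each $g_k$.

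The heart of the proof is establishing this recurrence. I decompose $F_k$ by the role of the terminal node $j_k$: (i) $j_k$ is a sender picking the out-arc toward $j_{k+1}$ (probability $p/2$), (ii) $j_k$ is a sender picking its other out-arc (probability $p/2$), or (iii) $j_k$ is a receiver (probability $1-p$). In cases (i) and (ii) the arc $(j_{k-1}, j_k)_i$ cannot enter $E^{\expost}$ because its head $j_k$ is not a receiver, and $F_k$ collapses to $F_{k-1}$ on the remaining independent randomness, contributing $\tfrac{p}{2}\, g_{k-1}$ each. In case (iii), the arc $(j_{k-1}, j_k)_i$ is realized iff $j_{k-1}$ is a sender picking its out-arc to $j_k$ (probability $p/2$) and $j_k$ picks that in-arc in Step~\ref{step:improved-OCS-tie}; conditional on the former, $j_{k-1}$ is a sender so $(j_{k-2}, j_{k-1})_i$ is automatically absent and $F_{k-1}$ reduces to $F_{k-2}$, while the conditional probability that $j_k$ picks the in-arc from $j_{k-1}$ equals $1 - p/4$, since $j_k$'s other in-neighbor $j_k'$ independently sends to $j_k$ with probability $p/2$ and then wins the uniform tie-break with probability $1/2$. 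Aggregating yields
\[
    g_k \;=\; p\, g_{k-1} \;+\; (1-p)\Bigl[\, g_{k-1} \;-\; \tfrac{p}{2}\bigl(1 - \tfrac{p}{4}\bigr) g_{k-2} \Bigr] \;=\; g_{k-1} \;-\; \tfrac{p(1-p)(4-p)}{8}\, g_{k-2},
\]
matching~\eqref{eqn:improved-OCS-recurrence}.

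The most delicate point I anticipate is justifying the independence of $j_k$'s ``other in-neighbor'' $j_k'$ from $F_{k-1}$ when evaluating the tie-break probability. Generically $j_k'$ is a pair not involving $i$ and therefore disjoint from the nodes defining $F_{k-1}$, so its state is independent and the $1 - p/4$ factor is exact. In the pathological configurations where $j_k'$ happens to coincide with a node among $j_1, \dots, j_{k-1}$, a short coupling argument that pessimistically replaces $j_k'$ by an independent copy preserves the required direction of the inequality, so the recurrence continues to hold as an upper bound. Given the recurrence and the independence across subsequences, Lemma~\ref{lem:improved-OCS} then follows by a straightforward induction on $k$.
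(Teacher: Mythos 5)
Your derivation of the recurrence in the \emph{binding} (regular) case---all $j_s^\ell$'s and $\hat{j}_s^\ell$'s distinct---is correct, and is actually slightly cleaner than the paper's: by conditioning on whether $j_k$ is a sender or a receiver and then directly computing $\Pr\bigl(F_{k-1} \cap \{(j_{k-1}, j_k)_i \text{ realized}\}\bigr) = \tfrac{p}{2}\bigl(1-\tfrac{p}{4}\bigr)g_{k-2}$ in the receiver branch, you sidestep the auxiliary event $A_k$ that the paper's Lemma~\ref{lem:improved-OCS-regular} introduces. The arithmetic checks out and yields the same recurrence.

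There is, however, a genuine gap in your treatment of the irregular configurations, which you confine to a single sentence (``a short coupling argument that pessimistically replaces $j_k'$ by an independent copy preserves the required direction''). This is precisely the content of the paper's Lemma~\ref{lem:improved-OCS-irregular}, and it is not short. There are three distinct ways the binding case fails, and you only gesture at one of them: (1) $\hat{j}_s^\ell$ may not exist at all (first appearance of the non-$i$ element); (2) $\hat{j}_s^\ell$ may equal some $j_{s'}^{\ell'}$; (3) two external in-neighbors may collide, $\hat{j}_s^\ell = \hat{j}_{s'}^{\ell'}$. Type~3 is the subtle one, and it is not handled by ``drawing fresh bits'' for a new copy: the content is that a single shared sender $\hat{j}$ can rescue at most one of $j_s^\ell, j_{s'}^{\ell'}$ via the tie-break, so one has to couple by a case analysis on which of the two nodes \emph{needs help} (i.e., $j_{s-1}^\ell$ sends along the $i$-arc and $j_s^\ell$, a receiver, breaks ties against it); when at most one needs help the two instances agree, and when both need help $F$ is impossible in the original instance. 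Without this argument, the claim that the binding case is worst-case---and hence that $\Pr(F) \le \prod_\ell g_{k_\ell}$ in general---is unproven. Note also that the same collisions can occur \emph{across} subsequences, which undermines the factorization ``disjoint subsequences involve disjoint relevant nodes'' that you invoke; that statement is only automatic in the binding case and otherwise requires the same couplings. So the recurrence you derive is correct, but the reduction to it is missing its main technical step.
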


\begin{corollary}
Algorithm~\ref{alg:improved-OCS} is a $\frac{1}{8} p ( 1-p ) ( 4-p )$-OCS.
\end{corollary}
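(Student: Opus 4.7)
My plan is to adapt the graph-theoretic framework of Lemma~\ref{lem:OCS} to Algorithm~\ref{alg:improved-OCS}. As before, I would construct the ex-ante dependence graph on the pairs and then realize the ex-post graph through the algorithm's randomness: each node independently becomes a \emph{sender} with probability $p$ (then uniformly picks one of its out-arcs) or a \emph{receiver} with probability $1-p$ (then, at step 4a, picks among its sender-active in-arcs, breaking ties uniformly). An arc $(j_1, j_2)_i$ is realized in the ex-post graph iff $j_1$ is a sender picking this out-arc and $j_2$ is a receiver picking this in-arc. Once this is set up, the same independence argument from the proof of Lemma~\ref{lem:OCS} applies: conditional on no arc of the chain $(j_1^\ell, j_2^\ell)_i, \dots, (j_{k_\ell-1}^\ell, j_{k_\ell}^\ell)_i$ being realized, the $i$-selection bits in the $k_\ell$ relevant pairs are independent fair coins, which contributes the $2^{-k_\ell}$ factor; and disjoint subsequences involve disjoint sets of nodes, so they factor independently.

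The heart of the proof is to bound the probability that no arc of a chain of $k$ consecutive $i$-pairs is realized by $g_k$. First, I would verify the per-arc realization probability: it equals $\frac{p}{2}\cdot(1-p)\cdot\bigl((1-\tfrac{p}{2})\cdot 1+\tfrac{p}{2}\cdot\tfrac{1}{2}\bigr)=\frac{p(1-p)(4-p)}{8}=q$. Here $p/2$ is the probability that $j_s$ is a sender picking this out-arc, $1-p$ is the probability that $j_{s+1}$ is a receiver, and the last bracket is the conditional probability that step 4a of $j_{s+1}$ picks this particular in-arc (either the other in-arc is not sender-active, or a favorable tie-break). Edge cases with fewer in-/out-arcs only increase the realization probability, so the bound holds in all cases.

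Next, I would recover the recurrence $g_k=g_{k-1}-q\,g_{k-2}$ by mimicking the $A_k/B_k$ decomposition in the proof of Lemma~\ref{lem:OCS}, but now based on the state of the terminal chain node $j_k$. Concretely, let $A_k$ be the event that no arc of the chain is realized and $j_k$ is a sender picking its $i$-out-arc to $j_{k+1}$, and let $B_k$ be its complement within $F_k$. Because this state of $j_k$ is independent of all earlier chain events and forces $(j_{k-1},j_k)_i$ to not be realized, we get $a_k=\frac{p}{2}\,f_{k-1}$. For $b_k$, I would condition on whether $j_k$ is a sender with the other out-arc (which again forces the arc unrealized and contributes $\frac{p}{2}\,f_{k-1}$) or a receiver. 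In the receiver subcase, I would further condition on the state of $j_{k-1}$: when $j_{k-1}$ is in the ``$S^+$'' state the failure of step 4a contributes a factor $p/4$, otherwise $(j_{k-1},j_k)_i$ is automatically unrealized and we fall back to $b_{k-1}$ or $f_{k-1}$. Combining terms yields $f_k=f_{k-1}-q\,f_{k-2}$, i.e.\ $f_k\le g_k$.

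The main obstacle will be the last step, and specifically the potential correlation introduced through receivers' tie-breaking. A receiver's step 4a decision depends on whether its \emph{other} in-arc is sender-active, and that other in-neighbor might itself be a chain node $j_{s'}$, coupling the per-arc failure events to the chain's joint state distribution. I expect to handle this by a careful Markov-style argument: because the marginal probability of the other in-arc being sender-active is always $p/2$, and the tie-break bits $B_{j_{s+1}}$ are fresh and independent across pairs, the conditional failure probability still averages to $p/4$, and the recurrence survives as an upper bound even in the pathological overlapping case. Once the single-chain bound is established, independence across disjoint subsequences follows verbatim from the argument of Lemma~\ref{lem:OCS-generalized}.
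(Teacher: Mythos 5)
Your overall plan mirrors the paper's: set up the ex-ante/ex-post dependence graphs, compute the per-arc realization probability $q = \frac{1}{8}p(1-p)(4-p)$, derive the recurrence $g_k = g_{k-1} - q\,g_{k-2}$, and conclude via $g_k \le (1-q)^{k-1}$. Your arithmetic for $q$ is correct, and the $A_k/B_k$ decomposition is essentially the paper's argument for what it calls the ``binding case'' (all the auxiliary in-neighbors $\hat{j}^\ell_s$ exist and are distinct from each other and from all chain nodes $j^\ell_s$). For that case, the recurrence is exact and the disjoint subsequences do factor independently, as you say.

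The genuine gap is in your last paragraph. Event $F$ requires that \emph{no} arc of $G^{\exante}$ between any two chain nodes is realized, not merely the $i$-arcs $(j^\ell_s, j^\ell_{s+1})_i$ within a subsequence. When $\hat{j}^\ell_s$ coincides with some chain node $j^{\ell'}_{s'}$ (possibly in a different subsequence), there is an additional arc $(j^{\ell'}_{s'}, j^\ell_s)$ due to the \emph{other} element whose realization also defeats $F$, and the sender/receiver state of $j^{\ell'}_{s'}$ is simultaneously constrained by its own chain and by the tie-break at $j^\ell_s$. A ``marginal probability $p/2$'' argument does not address this: the marginal per-arc failure probability being $p/4$ does not imply the joint recurrence still upper-bounds $\Pr(F)$, because the failure events for different arcs share the randomness of overlapping nodes and the cross-subsequence arcs are not even in your ledger. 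The paper resolves this with a coupling argument (Lemma~\ref{lem:improved-OCS-irregular}): it compares the given instance to a modified one in which the problematic $\hat{j}^\ell_s$ is replaced by a fresh dummy node, and shows event $F$ is \emph{more} likely in the dummy-node instance, so the binding case is the worst case. There are actually three violation types handled there (missing $\hat{j}$, a $\hat{j}$ that is a chain node, and two chain nodes sharing the same $\hat{j}$), and a ``Markov-style averaging'' does not obviously cover any of them. Your proposal needs a concrete reduction-to-worst-case argument of this kind; as written, it asserts the conclusion without a mechanism.

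Two smaller points: your final step should state explicitly that the recurrence gives $g_k$ non-increasing, hence $g_k \le (1-q)\,g_{k-1} \le (1-q)^{k-1}$, which is what matches Definition~\ref{def:ocs}. Also, your $A_k/B_k$ split is organized slightly differently from the paper's (the paper defines only $A_k$ and writes $g_k = g_{k-1} - (1-p)(1-\frac{p}{4})\,a_{k-1}$), but both lead to the same recurrence, so this is a cosmetic difference.
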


To prove Theorem~\ref{thm:ocs-main}, let $p = \frac{5-\sqrt{13}}{3}$ to maximize $\frac{1}{8} p \big( 1-p \big) \big( 4-p \big) = \frac{13\sqrt{13}-35}{108} > 0.1099$.

\subsubsection*{Proof of Lemma~\ref{lem:improved-OCS}}

Let $j^\ell_1 < j^\ell_2 < \dots < j^\ell_{k_\ell}$, $1 \le \ell \le m$,
be the subsequences of consecutive pairs that involve element~$i$.
The algorithm uses two kinds of independent random bits.
The first kind is used to realized the \emph{ex-post} dependence graph, i.e., the random type of each pair, the random out-arc chosen by each sender, and the random in-neighbor of each receiver in the tie-breaking case.
The second kind is the random selections by senders,
and by receivers that fail to receive the selection of any sender.
Importantly, the two kinds of randomness are independent.

Similar to the warmup case, we are interested in the event that there is no arc
among these pairs in the \emph{ex-post} dependence graph:
\[
    F = \left\{ \textrm{nodes $j^\ell_s$,
        for $1 \le \ell \le m$ and $1 \le s \le k_\ell$,
        are disjoint in $G^{\expost}$} \right\}
    ~.
\]

If there is an arc between two pairs in the collections in the \emph{ex-post} dependence graph, $i$ is selected in exactly one of the two pairs.
Otherwise, the selections in these pairs are independent.
Hence, the probability that $i$ is never selected is equal to the product of
(1) the probability that the $\sum_{\ell = 1}^m k_\ell$ nodes in the collections are disjoint in the \emph{ex-post} dependence graph, and (2) none of the $\sum_{\ell = 1}^m k_\ell$ independent random selections picks $i$.
This follows from the law of total probability.
The former quantity is $\Pr(F)$, and
the latter is equal to $2^{-\sum_{\ell = 1}^m k_\ell}$.
Putting this together, it equals:
\[
  2^{- \sum_{1 \le \ell \le m} k_\ell} \cdot \Pr(F)
    ~.
\]

Therefore, it remains to show that:
\begin{equation}
    \label{eqn:improved-OCS-cc-bound}
    \Pr(F) \le \prod_{\ell = 1}^m g_{k_\ell} 
    ~.
\end{equation}

Which arcs are we concerned about in the event $F$?
Since these are subsequences of consecutive pairs involving element $i$,
the arcs of the form $(j^\ell_s, j^\ell_{s+1})_i$
always exist in the \emph{ex-ante} dependence graph.
To characterize whether some of these arcs are realized in the
\emph{ex-post} graphs, we need to further consider another set of arcs as follows.

For any $1 \le \ell \le m$, consider the in-arcs of nodes
$j^\ell_1 < \dots < j^\ell_{k_\ell}$ in the \emph{ex-ante} dependence graph
due to the element other than $i$.
Let them be $(\hat{j}^\ell_s, j^\ell_s)$, for $1 \le s \le k_\ell$.
We omit the subscript that denotes the common element in the two nodes,
with the understanding that they are due to the element other than $i$ in the round of $j^\ell_s$.
Then, an arc $(j^\ell_s, j^\ell_{s+1})_i$ is realized in the \emph{ex-post} graph if:
\begin{enumerate}
    \item Node $j^\ell_s$ is a sender that picks arc $(j^\ell_s, j^\ell_{s+1})_i$;
    \item Node $j^\ell_{s+1}$ is a receiver;
    \item Either node $\hat{j}^\ell_{s+1}$ is a receiver,
      or it is a sender but does not choose arc
      $(\hat{j}^\ell_{s+1}, j^\ell_{s+1})$,
      or the tie-breaking by node $j^\ell_{s+1}$ is in favor of $j^\ell_s$.
\end{enumerate}

\paragraph{Binding Case.}
First, suppose all $\hat{j}^\ell_s$'s exist, and the $j^\ell_s$'s
and $\hat{j}^\ell_s$'s are all distinct.
It is relatively easy to analyze because in this case it suffices to
consider arcs of the form $(j^\ell_s, j^\ell_{s+1})_i$,
and different subsequences of consecutive pairs depend on disjoint sets of
random bits and therefore may be analyze separately.
This turns out to be the binding case of the analysis.
We will analyze the binding case in
Lemma~\ref{lem:improved-OCS-regular} and show in
Lemma~\ref{lem:improved-OCS-irregular} that this is the worst-case scenario
that maximizes $\Pr(F)$.

\begin{lemma}
    \label{lem:improved-OCS-regular}
    In the binding case, the probability of event $F$ is: 
    \[
      \Pr(F) = \prod_{\ell = 1}^m g_{k_\ell} 
        ~.
    \]
\end{lemma}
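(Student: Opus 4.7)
The plan is to decompose the event $F$ across subsequences using the binding-case disjointness of the relevant randomness, then prove the single-subsequence claim by induction matching the recurrence defining $g_k$.

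First, I would argue the product structure. Each arc of the form $(j_s^\ell, j_{s+1}^\ell)_i$ is realized in $G^{\expost}$ iff a specific pattern holds among the random choices at exactly the three nodes $j_s^\ell$, $j_{s+1}^\ell$, and $\hat{j}_{s+1}^\ell$ (the type of each, the out-arc chosen by a sender, and the tie-break at a receiver). In the binding case all of these nodes are distinct across all subsequences, so the event $F_\ell \defeq \{\text{no arc } (j_s^\ell, j_{s+1}^\ell)_i \text{ is in } G^{\expost}\}$ depends on a set of random bits disjoint from those of $F_{\ell'}$ for $\ell' \ne \ell$. Hence $\Pr(F) = \prod_\ell \Pr(F_\ell)$, reducing the claim to showing $\Pr(F_\ell) = g_{k_\ell}$ for a single subsequence of length~$k$.

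Next, I would compute $q \defeq \Pr(E_s)$, where $E_s$ is the event that arc $(j_s, j_{s+1})_i$ lies in $G^{\expost}$. Walking through Algorithm~\ref{alg:improved-OCS}: $j_s$ must be a sender (probability $p$) and pick the out-arc toward $j_{s+1}$ (probability $\sfrac{1}{2}$); $j_{s+1}$ must be a receiver (probability $1-p$); and $j_{s+1}$ must resolve its incoming arc selection in favor of $j_s$. The competing in-arc $(\hat{j}_{s+1}, j_{s+1})$ is an active sender-arc with probability $p/2$, in which case tie-breaking favors $j_s$ with probability $\sfrac{1}{2}$; otherwise our arc is the unique active in-arc. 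This yields conditional probability $(1-p/2) + (p/2)(1/2) = (4-p)/4$, so $q = \frac{1}{8} p(1-p)(4-p)$, matching the coefficient in Eqn.~\eqref{eqn:improved-OCS-recurrence}.

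The key observation for the recurrence is the local dependence structure of the $E_s$: events $E_s$ and $E_{s+1}$ are mutually exclusive, because $E_s$ forces $j_{s+1}$ to be a receiver while $E_{s+1}$ forces it to be a sender; and for $|s - s'| \ge 2$, in the binding case $E_s$ and $E_{s'}$ depend on disjoint sets of random bits and are therefore independent. With this in hand I induct on $k$, with $g_0 = g_1 = 1$ being trivial (no arcs to avoid). For the inductive step, I write
\[
\Pr\!\left(\bigcap_{s=1}^{k-1} \overline{E}_s\right) = \Pr\!\left(\bigcap_{s=1}^{k-2} \overline{E}_s\right) - \Pr\!\left(\bigcap_{s=1}^{k-2} \overline{E}_s \cap E_{k-1}\right).
\]
Mutual exclusion gives $E_{k-1} \subseteq \overline{E}_{k-2}$, so the constraint $\overline{E}_{k-2}$ drops out of the second term, leaving $\Pr(\bigcap_{s=1}^{k-3} \overline{E}_s \cap E_{k-1})$. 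Independence of $E_{k-1}$ from $E_1, \dots, E_{k-3}$ then factors this as $g_{k-2} \cdot q$, giving $g_k = g_{k-1} - q\, g_{k-2}$ as desired.

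The main obstacle I anticipate is bookkeeping: precisely verifying which random bits each $E_s$ consults (particularly those governing the tie-breaking at the receiver and the out-arc choice at $\hat{j}_{s+1}$) so that the binding-case distinctness really does yield the claimed independence. Once that dependence picture is fixed, the mutual exclusion of adjacent $E_s$ and the independence of non-adjacent ones are immediate, and both the per-arc probability computation of $q$ and the inductive recurrence fall into place.
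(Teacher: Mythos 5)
Your proof is correct, but it takes a genuinely different route from the paper's. The paper derives the recurrence by defining an auxiliary event $A_k$ (that none of the arcs $(j_1,j_2)_i,\dots,(j_{k-1},j_k)_i$ is realized \emph{and} $j_k$ is a sender picking the out-arc due to $i$), setting up the coupled system $a_k = \tfrac{p}{2} g_{k-1}$ and $g_k = g_{k-1} - (1-p)\left(1 - \tfrac{p}{4}\right) a_{k-1}$, and eliminating $a_k$. You instead isolate the per-arc realization probability $q = \tfrac{1}{8}p(1-p)(4-p)$ as a self-contained calculation, then exploit two structural facts --- adjacent arcs $E_s, E_{s+1}$ are mutually exclusive because $j_{s+1}$ cannot be both sender and receiver, and non-adjacent arcs depend on disjoint nodes in the binding case --- to get $g_k = g_{k-1} - q\, g_{k-2}$ by inclusion--exclusion. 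This is cleaner and more modular: the probability $q$ and the dependence structure are decoupled, and the recurrence is seen to be exactly the ``no two adjacent arcs'' count on a path. The paper's auxiliary-event formulation is more operational but obscures this structure; on the other hand, it transfers somewhat more directly to the irregular cases treated in Lemma~\ref{lem:improved-OCS-irregular}, since the auxiliary event is local to a single subsequence rather than relying on global node-disjointness. Your per-arc probability derivation and the verification that $E_{k-1}$ is independent of $\overline{E}_1 \cap \cdots \cap \overline{E}_{k-3}$ (via disjointness of $\{j_{k-1}, j_k, \hat{j}_k\}$ from $\{j_1,\dots,j_{k-2},\hat{j}_2,\dots,\hat{j}_{k-2}\}$ in the binding case) both check out, so the ``bookkeeping obstacle'' you flagged at the end is not actually a problem.
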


\begin{proof}
    We start by formalizing the aforementioned implications of the assumption that all $j^\ell_s$'s and $\hat{j}^\ell_s$'s are distinct.
    First, two pairs in the collections are connected if and only if they are consecutive pairs in the same collection, e.g., $j^\ell_{s-1}$ and $j^\ell_s$, and arc $(j^\ell_{s-1}, j^\ell_s)_i$ is realized.
    A pair $j^\ell_s$ with $s > 1$ cannot be the receiver of a sender other than $j^\ell_{s-1}$ in the collections because $\hat{j}^\ell_s$'s are not in the collections by the assumption. 
    Second, the realization of these arcs in different collections are independent.
    The realization of arcs of the form $(j^\ell_{s-1}, j^\ell_s)_i$, for any fixed collection $\ell$, depends only on the realization of first kind of randomness related to nodes with superscript $\ell$, i.e., $j^\ell_s$'s and $\hat{j}^\ell_s$'s.

    Next, we focus on a fixed subsequence $\ell$
    and analyze the probability that no arc of the form $(j^\ell_s, j^\ell_{s+1})_i$, for $1 \le s < k_\ell$, is realized.
    To simplify notation,
    we omit the superscripts and subscripts~$\ell$ and write $j_1 < j_2 < \dots < j_k$ and $\hat{j}_2, \hat{j}_3, \dots, \hat{j}_k$.
    Let $G_k$ denote this event and $g_k$ be its probability.
    Trivially, we have $g_0 = g_1 = 1$.
    It remains to show that $g_k$ follows the recurrence in Eqn.~\eqref{eqn:improved-OCS-recurrence}.
    
    We will do so by further considering an auxiliary subevent $A_k$, which requires not only $G_k$ to happen, but also $j_k$ to be a sender who picks the out-arc due to $i$.
    Let $a_k$ denote its probability.

    \paragraph{Auxiliary Event.}
    If $j_k$ is a sender who picks the out-arc due to $i$, which happens with probability~$\sfrac{p}{2}$, arc $(j_{k-1}, j_k)_i$ would not be realized regardless of the randomness of the other nodes in the collection.
    Therefore, under this condition, event $A_k$ reduces to event $G_{k-1}$. 
    \[
        a_k = \frac{p}{2} \cdot g_{k-1}
        ~.
    \]

    \paragraph{Main Event.}
    If $j_k$ is a sender, which happens with probability $p$, arc $(j_{k-1}, j_k)_i$ would not be realized regardless of the randomness of the other nodes in the collection.
    Therefore, under this condition, event $G_k$ reduces to event $G_{k-1}$. 
    The contribution of this part to the probability of $G_k$ is:
    \[
        p \cdot g_{k-1}
        ~.
    \]

    If $j_k$ is a receiver (probability $1-p$), but $\hat{j}_k$ is a sender who picks arc $(\hat{j}_k, j_k)$ (probability $\sfrac{p}{2}$), and the tie-breaking at $j_k$ is in favor of $\hat{j}_k$ (probability $\sfrac{1}{2}$), we still have that arc $(j_{k-1}, j_k)_i$ cannot be realized regardless of the randomness of the other nodes.
    The contribution of this part to the probability of $G_k$ is:
    \[
        \frac{p(1-p)}{4} g_{k-1}
        ~.
    \]

    Otherwise, $j_{k-1}$ must not be a sender who picks arc $(j_{k-1}, j_k)_i$, or else arc $(j_{k-1}, j_k)_i$ would be realized.
    Therefore, conditioned on being in this case, events $G_k$ reduces to event $G_{k-1} \setminus A_{k-1}$. 
    The contribution of this part to the probability of $G_k$ is:
    \[
        (1-p)(1-\frac{p}{4}) \left( g_{k-1} - a_{k-1} \right)
        ~.
    \]

    Putting everything together, we have:
    \[
        g_k 
        = g_{k-1} - (1-p) \left(1-\frac{p}{4}\right) a_{k-1}
        ~.
    \]

    \noindent
    Eliminating $a_k$'s by combining the two equations, we get the recurrence in Eqn.~\eqref{eqn:improved-OCS-recurrence}.
\end{proof}

\begin{lemma}
    \label{lem:improved-OCS-irregular}
    The probability of event $F$ is maximized in the binding case.
\end{lemma}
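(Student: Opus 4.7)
The plan is to argue that any irregular configuration is dominated by a binding configuration in terms of $\Pr(F)$, so that Lemma~\ref{lem:improved-OCS-regular} furnishes the bound. I would proceed by a sequence of local modifications to the input pair sequence, each of which weakly increases $\Pr(F)$ and brings the configuration closer to being binding; after finitely many steps we reach a binding configuration.

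There are two sources of irregularity to eliminate. First, some $\hat{j}^\ell_s$ may fail to exist, i.e., the partner of $i$ at $j^\ell_s$ has not appeared in any earlier pair. To handle this, I would insert a phantom pair immediately before $j^\ell_s$ containing this partner and a fresh ground element, and give the phantom fresh independent randomness. Coupling all other random bits identically to the original run, the only structural change that affects the collection nodes is the addition of an in-arc at $j^\ell_s$ from the phantom, which now competes with $(j^\ell_{s-1}, j^\ell_s)_i$. Since this competition weakly reduces the chance of the latter being in $G^{\expost}$, and since the new arc cannot itself contribute a bad ex-post arc between collection nodes (the phantom lies outside every collection), $F$ holding in the original input implies $F$ holds in the modified input under the coupling. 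Thus $\Pr(F)$ weakly increases, and iterating over all missing $\hat{j}^\ell_s$'s ensures that every one of them exists.

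The second source of irregularity is coincidence among nodes in $\{j^\ell_s\} \cup \{\hat{j}^\ell_s\}$. I would split any shared node by replacing one of its appearances with a fresh copy that has independent randomness. Splitting has two possible effects: it may remove an extra ex-ante arc between collection nodes (when $\hat{j}^\ell_s$ coincides with some $j^{\ell'}_{s'}$), and it decouples randomness across subsequences (when $\hat{j}^\ell_s$ coincides with $\hat{j}^{\ell'}_{s'}$). Both effects weakly increase $\Pr(F)$: the former removes a direct way for an arc between collection nodes to land in $G^{\expost}$, while the latter weakens positive correlation among the bad events inside different subsequences, so the joint probability of all of them failing simultaneously can only go up. A coupling argument analogous to the one in the previous paragraph formalizes each case. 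Once all irregularities have been removed, Lemma~\ref{lem:improved-OCS-regular} gives $\Pr(F) \le \prod_{\ell=1}^m g_{k_\ell}$, as required.

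The main obstacle is the case where $\hat{j}^\ell_s$ is itself a collection node $j^{\ell'}_{s'}$, because both effects apply simultaneously: the shared node both creates an additional ex-ante arc threatening $F$ directly and entangles the randomness driving the two subsequences. Here the coupling must be carefully designed so that, after the split, the marginal distribution of the original sender/receiver types and out-arc choices at $j^{\ell'}_{s'}$ is preserved while the fresh copy receives independent bits; verifying that the event $F$ in the unsplit input is contained in $F$ in the split input under this coupling is the delicate step.
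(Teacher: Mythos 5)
Your overall strategy is the same as the paper's: eliminate irregularities one at a time by inserting dummy nodes, with a coupling that shows each surgery weakly increases $\Pr(F)$, then invoke Lemma~\ref{lem:improved-OCS-regular}. Your handling of the missing-$\hat{j}^\ell_s$ case (the paper's type-1 violation) matches the paper and is correct: insert a phantom pair with a fresh dummy element, couple shared random bits identically, and note that the only effect of the phantom is to possibly preclude $(j^\ell_{s-1}, j^\ell_s)_i$ from being realized, which can only help $F$.

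Where the proposal falls short is in the two coincidence cases, which the paper treats separately and which each require a different idea. For $\hat{j}^\ell_s = j^{\ell'}_{s'}$ (type-2), you correctly flag this as delicate, but you miss the observation that makes it tractable: conditioned on $F$ holding in the original instance, the ex-post arc $(j^{\ell'}_{s'}, j^\ell_s)$ between the two collection nodes is already \emph{not} realized, so the ``extra arc threatening $F$ directly'' that you worry about is controlled for free by the conditioning. Once you insert the dummy node between $j^{\ell'}_{s'}$ and $j^\ell_s$ and couple, the argument reduces to essentially the same local comparison as type-1. For $\hat{j}^\ell_s = \hat{j}^{\ell'}_{s'}$ (type-3), your argument is the gap: you assert that splitting a shared non-collection node ``weakens positive correlation among the bad events'' and that therefore the probability of $F$ can only go up. That correlation claim is not established and is not obviously true --- the events are not products of independent coordinates in any clean way, and splitting a node need not monotonically move a joint probability. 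The paper replaces this with an explicit coupling: define ``$j^\ell_s$ needs help'' to mean that $j^\ell_{s-1}$ is a sender picking $(j^\ell_{s-1},j^\ell_s)_i$ and $j^\ell_s$ is a receiver tie-breaking against it, so that $F$ can only survive if $\hat{j}^\ell_s$ is a sender picking the arc into $j^\ell_s$. If exactly one of $j^\ell_s, j^{\ell'}_{s'}$ needs help, couple the shared node $\hat{j}$ with the corresponding fresh copy and give the other copy independent bits; if neither needs help the coupling is irrelevant; and if both need help, $F$ already fails in the original instance because the single shared node cannot simultaneously rescue both. This case analysis is what actually closes the argument, and it is missing from your proposal.
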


\begin{proof}
    Here are the possible violations of the conditions of the regular case: 
    \begin{enumerate}
        \item Some arc $(\hat{j}^\ell_s, j^\ell_s)$ may not exist, i.e., the element other than $i$ in pair $j^\ell_s$ has its first appearance in pair $j^\ell_s$.
        \item There may be $\ell, \ell', s, s'$ such that $\hat{j}^\ell_s = j^{\ell'}_{s'}$, i.e., the element other than $i$ in pair $j^\ell_s$ is also a element in pair $j^{\ell'}_{s'}$, and in no other pairs in between.
        \item There may be $\ell, \ell', s, s'$ such that $\hat{j}^\ell_s = \hat{j}^{\ell'}_{s'}$.
    \end{enumerate}

    We use a coupling argument to compare the probability of event $F$ in a general case, potentially with some of the above violations, with the probability in the binding case.

    \paragraph{Type-1 Violation.}
    Consider an instance almost identical to the one at hand, except we introduce a new node $\hat{j}_s^\ell$ for such a violation.
    For example, let pair $\hat{j}_s^\ell$ be at the beginning of the sequence, and let it contain the element other than $i$ in pair $j_s^\ell$ and a new dummy element that does not appear elsewhere.
    Further, couple the two instances by letting the common nodes realize identical random bits, and letting the new node draw fresh random bits.
    We claim that whenever event~$F$ happens in the original instance, it also happens in the new instance.
    If arc $(\hat{j}_s^\ell, j_s^\ell)$ is not realized, the rest of the arcs are realized identically in the two cases.
    Otherwise, having arc $(\hat{j}_s^\ell, j_s^\ell)$ may preclude arc $(j_{s-1}^\ell, j_s^\ell)_i$ from being realized, making event $F$ more likely to happen in the new instance.

    \paragraph{Type-2 Violation.}
    Consider an instance almost identical to the one at hand, except we introduce a new node $\hat{j}_s^\ell \ne j_{s'}^{\ell'}$ for such a violation.
    For example, let $\hat{j}_s^\ell$ be a pair arriving after $j_{s'}^{\ell'}$ and before $j_s^\ell$ which involves the element other than $i$ in these two pairs and a new dummy element that does not appear elsewhere.
    Further, couple the two instances by letting the common nodes realize identical random bits, and letting the new nodes draw fresh random bits.
    We claim that whenever event~$F$ happens in the original instance, it also happens in the new instance.
    Since $F$ happens in the original instance, arc $(j_{s'}^{\ell'}, j_s^\ell)$ is \emph{not} realized.
    If further arc $(\hat{j}_s^\ell, j_s^\ell)$ is not realized, the rest of the arcs are realized identically in the two cases.
    Otherwise, having arc $(\hat{j}_s^\ell, j_s^\ell)$ may preclude arc $(j_{s-1}^\ell, j_s^\ell)_i$ from being realized, making event $F$ more likely to happen in the new instance.

    \paragraph{Type-3 Violation.}
    Consider an instance almost identical to the one at hand, except we introduce a new node $\hat{j}_s^\ell \ne \hat{j}_{s'}^{\ell'}$ for such a violation.
    For example, let $\hat{j}_s^\ell$ be a pair arriving right before $j_s^\ell$ which involves the element other than $i$ in pair $j_s^\ell$ and a new dummy element that does not appear elsewhere.
    To avoid confusion in the following discussion, let $\hat{j}$ be the node in the type-3 violation the original instance, and let $\hat{j}_s^\ell \ne \hat{j}_{s'}^{\ell'}$ be the nodes in the new instance.
    Further, couple the two instances by letting nodes other than $\hat{j}$, $\hat{j}_s^\ell$, and $\hat{j}_{s'}^{\ell'}$ realize identical random bits.
    To define the coupling for these three nodes, we need some notations.
    We say that node $j_s^\ell$ \emph{needs help} if node $j_{s-1}^\ell$ is a sender who picks arc $(j_{s-1}^\ell, j_s^\ell)_i$, and if node $j_s^\ell$ is a receiver who breaks tie against $j_{s-1}^\ell$.
    For $F$ to happens in this case, $\hat{j}_s^\ell$ must be a sender who picks arc $(\hat{j}_s^\ell, j_s^\ell)$.
    Define similarly for node $j_{s'}^{\ell'}$.
    If $j_s^\ell$ needs help but $j_{s'}^{\ell'}$ does not, let $j^*$ and $j_s^\ell$ realize identical random bits, and let $\hat{j}_{s'}^{\ell'}$ draw fresh random bits;
    and vice versa.
    Otherwise, i.e., if none or both of them need help, let them have independent random bits.
    Then, when at most one of them needs help, event $F$ happens in the original instance if and only if it happens in the new instance, since the realization of the relevant arcs are identical.
    If both need help, on the other hand, $F$ cannot happen in the original instance because at least one of $(j_{s-1}^\ell, j_{s}^\ell)_i$ or $(j_{s'-1}^{\ell'}, j_{s'}^{\ell'})_i$ would be realized.
\end{proof}

\section{Conclusion}
\label{sec:conclusion}

This paper presents an online primal-dual algorithm for the edge-weighted
bipartite matching problem 
that is $0.5086$-competitive,
resolving a long-standing open problem in the study of online algorithms.
In particular, this work merges and refines the results of
Fahrbach and Zadimoghaddam~\cite{zadimoghaddam2017online}
and Huang and Tao~\cite{huang2019unweighted,huang2019weighted}
to give a simpler algorithm under the online primal-dual framework.
Our work initiates the study of \emph{online correlated selection}, a key
algorithmic ingredient that quantifies the level of negative correlation in
online assignment problems, and
we believe this technique will find further applications in
other online problems.

Using independent random bits to make selections yields a $0$-OCS (no negative
correlation),
and using an imaginary $1$-OCS with perfect negative correlation results in
inconsistent assignments.
Therefore, we aim to design an online matching algorithm
that uses partial negative correlation.
We start by constructing a $\sfrac{1}{16}$-OCS,
and then we optimize this subroutine to obtain a $0.109927$-OCS.
Designing a $\gamma$-OCS with the largest possible $\gamma \in [0,1]$ is an
interesting open problem on its own,
and would directly lead to an improved competitive
ratio for the edge-weighted online bipartite matching problem.
However, even if a 1-OCS existed, the best competitive ratio that can
be achieved using this approach is at most $\sfrac{5}{9}$, as
shown in Appendix~\ref{app:hardness}.
Thus, we need fundamentally new ideas in order
to come closer to the optimal
$1-\sfrac{1}{e}$ ratio in the unweighted and vertex-weighted cases.
One potential approach is to consider an OCS that allows for more than two
candidates in each round, which we call a multiway OCS.
We leave this as another interesting open problem for future works.

\bibliographystyle{alpha}
\bibliography{reference}

\newpage
\appendix

\section{Unweighted Online Matching}
\label{app:unweighted}

This section shows that the two-choice greedy algorithm
is strictly better than
$\sfrac{1}{2}$-competitive when combined with an
OCS in the randomized rounds to ensure partial negative correlation.

\begin{theorem}
\label{thm:main-unweighted}
The two-choice greedy algorithm
with the randomized rounds that use a $0.1099$-OCS
is at least $0.508$-competitive for unweighted online bipartite matching.
\end{theorem}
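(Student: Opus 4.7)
The approach is to specialize the online primal-dual analysis of Section~\ref{sec:edge-weighted} to the unweighted setting, where it simplifies substantially. Since every edge weight lies in $\{0,1\}$, the CCDF viewpoint collapses to a single weight-level, so the dual functions $\alpha_i(w)$ and counters $k_i(w)$ reduce to scalars $\alpha_i$ and $k_i$. The algorithm also does not need to trade off across weight-levels, so the auxiliary parameter $\kappa$ is unnecessary: each online vertex is handled by pairing its two least-matched available neighbors through the OCS, or matching deterministically to the unique least-matched neighbor when no such pair exists.

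First, I would introduce gain-sharing parameters $a(k), b(k)$ for $k \ge 0$ and maintain the invariant $\alpha_i \ge \sum_{0 \le \ell < k_i} a(\ell)$. The dual update rules mirror Section~\ref{sec:edge-weighted-algorithm}: a randomized round with candidates $i_1, i_2$ raises each $\alpha_{i_\ell}$ by $a(k_{i_\ell})$ and sets $\beta_j = b(k_{i_1}) + b(k_{i_2})$, while a deterministic round matching $j$ to $i$ with $k_i = k$ raises $\alpha_i$ by $\sum_{\ell \ge k} a(\ell)$ (saturating its invariant) and sets $\beta_j$ from the remaining primal slack. Reverse weak duality then follows pointwise from Lemmas~\ref{lem:primal-increment-deterministic} and~\ref{lem:primal-increment-randomized}, yielding the scalar analogues of Eqn.~\eqref{eqn:gain-split-deterministic} and~\eqref{eqn:gain-split-randomized}: $\sum_{\ell \ge k} a(\ell) + b(k) \le 2^{-k}(1-\gamma)^{\max\{k-1, 0\}}$ for deterministic rounds, together with $a(0) + b(0) \le \sfrac{1}{2}$ and $a(k) + b(k) \le 2^{-k-1}(1-\gamma)^{k-1}(1+\gamma)$ for $k \ge 1$ for randomized rounds.

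For approximate dual feasibility $\alpha_i + \beta_j \ge \Gamma$ on any edge $(i,j)$, I would case on what happens in $j$'s round. If $i$ is already deterministically matched, then $\alpha_i = \sum_{\ell \ge 0} a(\ell)$ and it suffices to impose $\sum_{\ell \ge 0} a(\ell) \ge \Gamma$. Otherwise, the two-choice greedy rule implies that the chosen candidates are no more matched than $i$; enforcing monotonicity of $b(\cdot)$ in the LP gives $\beta_j \ge 2 b(k_i)$ whenever $j$ is matched in a randomized round, leading to the constraint $\sum_{0 \le \ell < k} a(\ell) + 2 b(k) \ge \Gamma$ for all $k$, with an analogous constraint when $i$ itself is chosen. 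Deterministic and unmatched rounds reduce to weaker constraints handled in parallel with Section~\ref{sec:approximate-dual-feasible}.

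The final step is to solve the resulting finite factor-revealing LP with $\gamma = \frac{13\sqrt{13}-35}{108} > 0.1099$ and a truncation $\kmax$ chosen large enough that the tail is negligible, following Section~\ref{sec:gain-sharing}; numerical computation then certifies $\Gamma \ge 0.508$. The main obstacle will be tightly handling feasibility in the case that $j$'s round is randomized but $i$ is neither chosen nor already deterministically matched: here one must leverage the least-matched rule to compare $k_i$ with $k_{i_1}, k_{i_2}$ and argue $b(k_{i_1}) + b(k_{i_2}) \ge 2 b(k_i)$, which is the key step exploiting the OCS's negative correlation to break past $\sfrac{1}{2}$.
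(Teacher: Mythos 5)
Your high-level plan — specialize the primal--dual machinery of Section~\ref{sec:edge-weighted} to a single weight level, maintain the invariant $\alpha_i = \sum_{\ell < k_i} a(\ell)$, and solve a factor-revealing LP — is indeed the route the paper takes in Appendix~\ref{app:unweighted}. However, your handling of \emph{deterministic rounds} has a genuine gap, and as stated your constraint set cannot certify any $\Gamma > \sfrac{1}{2}$.

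Here is the problem. You drop $\kappa$ and posit the scalar gain-split constraint $\sum_{\ell \ge k} a(\ell) + b(k) \le 2^{-k}(1-\gamma)^{\max\{k-1,0\}}$ for deterministic rounds, which corresponds to setting $\beta_j = b(\kmin)$ (or, in the ``remaining slack'' reading, to a $\beta_j$ that could be as small as $b(\kmin)$ when this constraint is tight). But in a deterministic round the losing neighbors $i$ have $k_i \ge \kmin + 1$, so approximate dual feasibility for them requires, in the worst case, $\sum_{\ell \le \kmin} a(\ell) + b(\kmin) \ge \Gamma$. At $\kmin = 0$ this is $a(0) + b(0) \ge \Gamma$, which combined with your randomized gain-split constraint $a(0) + b(0) \le \sfrac{1}{2}$ forces $\Gamma \le \sfrac{1}{2}$. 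This is precisely the degeneracy the paper flags in Section~\ref{sec:gain-sharing}: taking $\kappa \to 1$ collapses the competitive ratio to exactly $\sfrac{1}{2}$, and your ``drop $\kappa$'' simplification reproduces that collapse at the endpoint.

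The paper's Appendix~\ref{app:unweighted} avoids this by a different choice of dual update in the deterministic round: it sets $\beta_j = 2\,\Delta\beta(\kmin+1)$ rather than (implicitly) $\Delta\beta(\kmin)$. Approximate dual feasibility for $i$ with $k_i \ge \kmin+1$ then follows directly from the monotonicity constraint $\Delta\beta(k) \ge \Delta\beta(k+1)$ together with the single family $\sum_{\ell<k}\Delta\alpha(\ell) + 2\,\Delta\beta(k) \ge \Gamma$. Reverse weak duality for the deterministic round is then \emph{derived}, not imposed, via the telescoping chain
\[
\sum_{\ell \ge k}\Delta\alpha(\ell) + 2\,\Delta\beta(k+1)
\;\le\;
\sum_{\ell \ge k}\Delta\alpha(\ell) + \Delta\beta(k) + \Delta\beta(k+1)
\;\le\;
\sum_{\ell \ge k}\bigl(\Delta\alpha(\ell) + \Delta\beta(\ell)\bigr)
\;\le\;
2^{-k} g_k\,,
\]
using only the randomized gain-split constraint and monotonicity. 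Your proposal correctly intuits that monotonicity of $b$ is essential (``leverage the least-matched rule ... argue $b(k_{i_1}) + b(k_{i_2}) \ge 2b(k_i)$''), but you apply it only in the randomized case and do not realize that the deterministic round needs a shifted index and a telescoping argument rather than a standalone constraint. A secondary, less consequential difference is that you use the semi-OCS bound $(1-\gamma)^{k-1}$ rather than the tighter $g_k$ from Lemma~\ref{lem:improved-OCS} that the paper feeds into its LP, which is why the paper's numerical ratio ($\approx 0.509$) is slightly better than the $0.5086$ one would get from the looser bound; this would not by itself invalidate a claim of $\Gamma \ge 0.508$, but the deterministic-round gap above would.
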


\begin{proof}
  In the unweighted case, it suffices to consider a single weight-level $w = 1$.
  Thus, for each offline vertex $i$,
  we write $k_i = k_i(1)$ for brevity.
  We will maintain $\bar{x}_i = 1 - 2^{-k_i} \cdot g_{k_i}$
  for each offline vertex~$i$,
  which according to Lemma~\ref{lem:improved-OCS}
  lower bounds the probability that $i$ is matched.
  Correspondingly, we maintain the following
  lower bound on the primal objective:
  \[
      \bar{\primal} = \sum_{i \in L} \bar{x}_i
      ~.
  \]

  To prove the stated competitive ratio, it suffices to explain how to maintain
  a dual assignment such that (1) the dual objective equals the lower
  bound of the primal objective, i.e., $\dual = \bar{\primal}$, and (2) it is
  approximately feasible up to a $\Gamma$ factor,
  i.e., $\alpha_i + \beta_j \ge \Gamma$ for every edge $(i, j) \in E$.

    \subsubsection*{Dual Updates}
    The dual updates are based on a solution to a finite version
    of the following LP.
    All of the solution values are presented in Table~\ref{tab:lp-solution-unweighted}
    at the end of this section.
    The constraints below are simpler than
    in the more general edge-weighted case,
    but the competitive ratio we achieve is almost the same.
    Note in this LP that $\Delta \alpha(k) = \alpha(k+1) - \alpha(k)$ denotes
    the forward difference operator.
    
    \begin{lemma}
        \label{lem:ratio-lp}
        The optimal value of the LP below is at least $0.508$:
        \begin{align}
            \text{\rm maximize} \quad & \Gamma \notag \\[1ex]
            \text{\rm subject to} \quad & \Delta \alpha(k) + \Delta \beta(k) \le 2^{-k} \cdot g_k - 2^{-k-1} \cdot g_{k+1} & \forall k \ge 0 \label{eqn:lp-gain-split} \\
            & \sum_{\ell = 0}^{k-1} \Delta\alpha(\ell) + 2 \cdot \Delta \beta(k) \ge \Gamma & \forall k \ge 0 \label{eqn:lp-dual-feasibility} \\
            & \Delta \beta(k) \ge \Delta \beta(k+1) & \forall k \ge 0 \label{eqn:lp-monotone-beta} \\[2ex]
            & \Delta\alpha(k), \Delta \beta(k) \ge 0 & \forall k \ge 0 \notag
        \end{align}

    \end{lemma}

    \noindent
    
    Consider an online vertex $j \in R$, and
    recall that $\kmin = \min_{i \in N(j)} k_i$ denotes the minimum value of~$k_i$
    among offline neighbors $i$ of vertex $j$.
    First suppose it is a randomized round.
    Recall that $i_1$ and~$i_2$ denote the two candidate offline
    vertices shortlisted in round $j$.
    Then, we have $k_i = \kmin$ for both $i \in \{i_1, i_2\}$.
    Note in the unweighted case that the algorithm would enter a deterministic round
    if there is a unique offline vertex with minimum $k_i$.
    In the primal, $\bar{x}_i$
    increases by $2^{-\kmin} \cdot g_{\kmin} - 2^{-\kmin-1} \cdot g_{\kmin+1}$ for both $i \in \{i_1, i_2\}$. 
    In the dual, increase $\alpha_i$ by $\Delta \alpha(\kmin)$ for both $i \in \{i_1, i_2\}$,
    and
    let $\beta_j = 2 \cdot \Delta \beta(\kmin)$
    where each $i \in \{i_1, i_2\}$ contributes $\Delta \beta(\kmin)$. 
    
    Next, suppose it is a deterministic round.
    Recall that $i^*$ denotes the offline vertex to which vertex $j$ is matched deterministically.
    Then, $x_{i^*}$ increases by $2^{-\kmin} \cdot g_{\kmin}$ in the primal.
    In the dual, increase $\alpha_{i^*}$ by $\sum_{\ell \ge \kmin} \Delta \alpha(\ell)$, and let $\beta_j = 2 \cdot \Delta \beta(\kmin+1)$.
    No update is needed in an unmatched round, as $\bar{\primal}$ remains the same. 
    
    \subsubsection*{Objective Comparisons}
    
    Next, we show that the increment in the dual objective $\dual$ is at most that in the lower bound of primal objective, i.e., $\bar{\primal}$.
    In a randomized round, it follows by Eqn.~\eqref{eqn:lp-gain-split}.
    In a deterministic round, it follows by a sequence of inequalities below:
    \begin{align*}
        \sum_{\ell \ge k} \Delta \alpha(\ell) + 2 \cdot \Delta \beta(k+1) 
        & 
        \le \sum_{\ell \ge k} \Delta \alpha(\ell) + \Delta \beta(k) + \Delta \beta(k+1) && \textrm{(Eqn.~\eqref{eqn:lp-monotone-beta})} \\
        & 
        \le \sum_{\ell \ge k} \big( \Delta \alpha(\ell) + \Delta \beta(\ell) \big) \\
        & 
        \le \sum_{\ell \ge k} \big( 2^{-\ell} \cdot g_\ell - 2^{-\ell-1} \cdot g_{\ell+1} \big) && \textrm{(Eqn.~\eqref{eqn:lp-gain-split})} \\[1ex]
        &
        = 2^{-k} \cdot g_{k}
        ~.
    \end{align*}

    \subsubsection*{Approximate Dual Feasibility.}
    
    We first summarize the following invariants, which follow by the definition of the dual updates.
    \begin{itemize}
        \item For any offline vertex $i \in L$, $\alpha_i = \sum_{\ell=0}^{k_i-1} \Delta \alpha(\ell)$.
        \item For any online vertex $j$, $\beta_j = 2 \cdot \Delta \beta(k)$
          if it is matched either in a randomized round to neighbors with $k_i = k$, or in a deterministic round to a neighbor with $k_i = k-1$.
    \end{itemize}
    
    For any edge $(i, j) \in E$, consider the value of $k_i$ at the time when $j$ arrives.
    If $k_i = \infty$, the value of $\alpha_i$ alone ensures approximately dual feasibility because:
    \begin{align*}
        \alpha_i 
        & 
        = \sum_{\ell \ge 0} \Delta \alpha(\ell) \\
        &
        = \lim_{k \to \infty} \sum_{\ell = 0}^{k-1} \Delta \alpha(\ell) \\[1ex]
        &
        \ge \Gamma - \lim_{k \to \infty} \beta(k) && \textrm{(Eqn.~\eqref{eqn:lp-dual-feasibility})} \\[2ex]
        &
        = \Gamma 
        ~.
        && \textrm{(Eqn.~\eqref{eqn:lp-gain-split}, whose RHS tends to $0$)}
    \end{align*}
    
    Otherwise, by the definition of the two-choice greedy algorithm,
    $j$ is either matched in a randomized round to two vertices
    with $k_{i'} \le k_i$, or in a deterministic round to a vertex with $k_{i'} < k_i$.
    In both cases, we have:
    \[
        \beta_j \ge 2 \cdot \Delta \beta(k_i)
        ~.
    \]
    
    \noindent
    Approximate dual feasibility now follows by $\alpha_i = \sum_{\ell = 0}^{k_i-1} \Delta \alpha(\ell)$ and Eqn.~\eqref{eqn:lp-dual-feasibility}.
\end{proof}

\begin{proof}[Proof of Lemma~\ref{lem:ratio-lp}]
    Consider a restricted version of the LP which is finite.
    For some positive $\kmax$, let $\Delta \alpha(k) = \Delta \beta(k) = 0$ for all $k > \kmax$.
    Then, the linear program becomes:
    \begin{align*}
        \text{\rm maximize} \quad & \Gamma \notag \\[2ex]
        \text{\rm subject to} \quad & \Delta \alpha(k) + \Delta \beta(k) \le 2^{-k} \cdot g_k - 2^{-k-1} \cdot g_{k+1} & 0 \le k \le \kmax \\[1ex]
        & \sum_{\ell = 0}^{k-1} \Delta\alpha(\ell) + 2 \cdot \Delta \beta(k) \ge \Gamma & 0 \le k \le \kmax \\
        & \sum_{\ell = 0}^{\kmax} \Delta \alpha(\ell) \ge \Gamma \\[1.5ex]
        & \beta(k) \ge \beta(k+1) & 0 \le k < \kmax \\[4ex]
        & \Delta\alpha(k), \beta(k) \ge 0 & 0 \le k \le \kmax \notag
    \end{align*}
 
    See Table~\ref{tab:lp-solution-unweighted} for an approximately optimal solution for the restricted LP with $\kmax = 8$,
    which gives a competitive ratio of $\Gamma \approx 0.508986$.
\end{proof}

\begin{table}[t]
    \centering
    \begin{tabular}{c|ccc}
        \hline
        $k$     & $g_k$     & $\Delta \alpha(k)$    & $\beta(k)$    \\
        \hline
        $0$     & $1.00000000$     & $0.24550678$             & $0.25449322$     \\
        $1$     & $1.00000000$     & $0.14574204$             & $0.13173982$     \\
        $2$     & $0.89007253$     & $0.06613120$             & $0.05886880$     \\
        $3$     & $0.78014506$   & $0.02907108$             & $0.02580320$     \\
        $4$     & $0.68230164$ & $0.01273424$             & $0.01126766$     \\
        $5$     & $0.59654227$ & $0.00559236$             & $0.00490054$     \\
        $6$     & $0.52153858$ & $0.00248248$             & $0.00210436$     \\
        $7$     & $0.45596220$ & $0.00114193$             & $0.00086312$     \\
        $8$     & $0.39863078$ & $0.00058431$             & $0.00029216$     \\
        \hline
    \end{tabular}
    \caption{An approximately optimal solution to the factor-revealing linear program
    with $\kmax = 8$ in the unweighted case.
    The competitive ratio obtained is $\Gamma \approx 0.508986$.}
    \label{tab:lp-solution-unweighted}
\end{table}

\section{Hard Instances}
\label{app:hardness}


This section presents two families of \emph{unweighted} graphs that
demonstrate some hardness results for the online matching 
algorithms considered in this paper.

\subsection{Upper Triangular Graphs}

Consider a bipartite graph with $n$ vertices on each side.
Let each online vertex $1 \le j \le n$ be incident to the offline vertices $j \le i \le n$.
Thus, the adjacency matrix
(with online vertices as rows and offline vertices as columns)
is an upper triangular matrix.
This is a standard instance for showing hardness that
dates back to Karp et al.~\cite{karp1990optimal}.

\begin{theorem}
    The two-choice greedy algorithm using independent random bits in different
    randomized rounds is only $(\sfrac{1}{2} + o(1))$-competitive.
\end{theorem}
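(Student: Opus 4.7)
The plan is to execute the two-choice greedy algorithm on the upper triangular graph under a worst-case tie-breaking rule and bound its expected matching size against the optimum of $n$. Since fresh independent random bits are used in distinct randomized rounds, an offline vertex chosen as a candidate in exactly $k$ randomized rounds is matched with probability $1 - 2^{-k}$, so the expected matching size equals $\sum_i (1 - 2^{-k_i})$ plus one for each deterministically matched vertex.

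First I would trace the algorithm phase by phase, with the adversary breaking ties among least-matched neighbors by selecting the two with the \emph{largest} indices --- a choice compatible with the algorithm's ``choose any two'' specification. Because online vertex $j$'s neighborhood $\{j, j+1, \ldots, n\}$ shrinks from below as $j$ advances, the adversarial tie-breaking repeatedly picks top-indexed vertices, while each low-indexed vertex eventually falls out of every subsequent online vertex's neighborhood and is permanently abandoned. A careful count shows that if $A_s$ denotes the size of the active set at the start of phase $s$ (vertices currently at $k = s-1$), then $A_1 = n$ and $A_{s+1} \approx \tfrac{2}{3} A_s$: approximately one-third of $A_s$ is abandoned at $k = s-1$, two-thirds are promoted to $k = s$, and at most one vertex is deterministically matched at the phase boundary. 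Consequently the number of vertices with final $k = t$ is approximately $\tfrac{n}{3} (2/3)^t$ for $t = 0, 1, \ldots, O(\log n)$, with a further $O(\log n)$ vertices deterministically matched across all phases.

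Summing the matching probabilities then yields
\[
    \sum_{t \ge 0} \frac{n}{3} \left( \frac{2}{3} \right)^{t} \left( 1 - 2^{-t} \right) + O(\log n) = \frac{n}{3} \left( 3 - \frac{3}{2} \right) + O(\log n) = \frac{n}{2} + O(\log n),
\]
giving the claimed competitive ratio $\tfrac{1}{2} + o(1)$.

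The main obstacle will be the phase-structure bookkeeping: verifying precisely that each phase abandons $|A_s|/3$ vertices up to lower-order boundary effects, and carefully tracking the $O(\log n)$ deterministic matches so that they contribute only a lower-order term. A secondary subtlety is that the adversarial tie-breaking is essential; under the naive lexicographically-smallest rule the same instance would yield a larger ratio (closer to $\tfrac{2}{3}$), so the proof exploits the freedom in the algorithm's ``choose any two'' specification, interpreting the competitive ratio as the worst case over all consistent tie-breaking rules.
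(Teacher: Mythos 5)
Your proposal is correct and takes essentially the same approach as the paper: the same upper-triangular instance, the same reverse-lexicographic (largest-index) tie-breaking, the same recursive phase decomposition in which each phase abandons one third of the active set at the current $k$-value and promotes two thirds, and the same geometric sum $\sum_{t \ge 0} \frac{n}{3}(\sfrac{2}{3})^t (1 - 2^{-t}) = \sfrac{n}{2}$. The only cosmetic difference is that the paper sets $n = 3^k$ and waves away an $n^{\log_3 2}$ tail, whereas you phrase the boundary effects as $O(\log n)$ deterministic rounds; both are $o(n)$ and immaterial to the conclusion.
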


\begin{proof}
    For ease of presentation, suppose the algorithm chooses candidates in \emph{reverse} lexicographical order.
    Consider an upper triangular graph with $n = 3^k$ for some large positive integer $k$.
    First, observe that there is a perfect matching
    where the $i$-th online vertex is matched to the $i$-th offline vertex.
    Hence, the optimal value is $n$.

    Next, consider the performance of the online algorithm.
    The first $\sfrac{n}{3} = 3^{k-1}$
    vertices are matched to the last $\sfrac{2}{3}$ fraction of the offline vertices in randomized rounds.
    That is, their correct neighbors in the perfect matching are left unmatched,
    while the other offline vertices are only half matched.
    Then, the first one third of the remaining online vertices
    (i.e., $\sfrac{1}{3} \cdot \sfrac{2n}{3} = 2 \cdot 3^{k-2}$ of them in total)
    are matched to the last $(\sfrac{2}{3})^2$ fraction of the offline vertices in randomized rounds.
    That is, their correct neighbors in the perfect matching are left matched by only half,
    while the correct neighbors of subsequent online vertices are now matched by three quarters.
    The argument goes on recursively.

    Therefore, omitting a lower order term due to the last
    $2^k = n^{\log_3 2}$ vertices on both sides, the expected size of the matching is:
    \begin{align*}
        \left( 1 \cdot \frac{1}{3} + \frac{1}{2} \cdot \frac{2}{9} + \cdots + \left(\frac{1}{2}\right)^k \cdot \frac{2^k}{3^{k+1}} + \cdots \right) n 
        & 
        = \left( \frac{1}{3} + \frac{1}{9} + \cdots + \frac{1}{3^{k+1}} + \cdots \right) n \\
        & 
        = \frac{n}{2}
        ~.
    \end{align*}

    \noindent
    Hence, the two-choice greedy algorithm is at best $(\sfrac{1}{2} + o(1))$-competitive.
\end{proof}

\begin{theorem}
    The imaginary two-choice greedy algorithm with perfect negative correlation across
    different randomized rounds is only $\sfrac{5}{9}$-competitive.
\end{theorem}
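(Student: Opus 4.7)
My plan is to exhibit an upper triangular instance of size $n$ divisible by $9$ on which the imaginary algorithm achieves expected matching value exactly $5n/9$. The offline optimum is $n$ via the perfect matching $j \mapsto j$, so the competitive ratio on this instance is $5/9$ and the theorem follows. Because the randomized rounds are perfectly negatively correlated, an offline vertex $i$ that is a candidate in $k_i \ge 2$ randomized rounds is matched with probability $1$, while one with $k_i = 1$ is matched with probability $1/2$; hence the expected matching value equals the number of offline vertices with $k_i \ge 2$ plus half the number with $k_i = 1$.

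I would use reverse-lexicographic tie-breaking, as in the proof of the previous theorem, and show by induction on $j$ that the algorithm decomposes into two nontrivial phases. In Phase~$1$, comprising online vertices $j = 1, \ldots, n/3$, each round finds at least $n - 3j + 3 \ge 3$ level-$0$ neighbors in its neighborhood $\{j, \ldots, n\}$, and by reverse-lex picks the two largest-indexed ones, promoting them to level~$1$; after this phase the top $2n/3$ offline vertices sit at level~$1$ and the bottom $n/3$ remain at level~$0$. In Phase~$2$, comprising the next $2n/9$ online vertices, each round finds at least $2n/3 - 3j' + 3 \ge 3$ level-$1$ neighbors in its neighborhood (where $j'$ indexes the round within the phase), and since level-$2$ vertices are strictly more matched, the algorithm picks the two largest-indexed level-$1$ neighbors and promotes them to level~$2$; after this phase the top $4n/9$ offline vertices are at level~$2$, the next $2n/9$ remain at level~$1$, and the bottom $n/3$ remain at level~$0$.

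The remaining online vertices $j = 5n/9 + 1, \ldots, n$ then have neighborhood $\{j, \ldots, n\}$ entirely contained in the level-$2$ region $\{5n/9 + 1, \ldots, n\}$, so every candidate is already matched with probability $1$ and these rounds contribute nothing to the expected matching value. Summing yields expected matching size $\frac{4n}{9} + \frac{1}{2} \cdot \frac{2n}{9} = \frac{5n}{9}$, as claimed.

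The main obstacle is the bookkeeping needed to justify the two-phase decomposition: at every step of both phases one must verify that at least two neighbors lie at the current minimum level (so the round is genuinely randomized rather than deterministic), and one must argue that reverse-lex tie-breaking preserves the contiguous-interval structure of the level-$0$, level-$1$, and level-$2$ regions. Neither is conceptually deep, but both require careful induction tracking the precise boundaries. Unlike the analysis for independent random bits, no recursive further phases arise, because perfect negative correlation saturates each offline vertex after just two candidate appearances, and it is precisely this early saturation that limits the algorithm to $5/9$.
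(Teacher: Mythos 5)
Your proposal is correct and uses essentially the same approach as the paper: an upper-triangular instance with reverse-lexicographic tie-breaking, analyzed by tracking how reverse-lex promotes contiguous blocks of offline vertices from level $0$ to level $1$ to level $2$. The paper simply instantiates at $n = 9$ and traces the nine rounds by hand, which sidesteps the inductive bookkeeping you flag as the remaining obstacle; your parametrized version with $n$ divisible by $9$ proves the same bound but requires that (routine) induction to confirm the two-phase structure and that each round stays randomized.
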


\begin{proof}
    For ease of presentation, suppose the algorithm chooses candidates in \emph{reverse} lexicographical order.
    Consider an upper triangular graph with $n = 9$.
    There are nine vertices on each side, denoted as
    $i_1, i_2, \dots, i_9$ and $j_1, j_2, \dots, j_9$, and a perfect matching with $i_k$ matched to $j_k$ for $k = 1, 2, \dots, 9$.
    The first three online vertices, $j_1$, $j_2$, and $j_3$, are connected to all offline vertices.
    After their arrivals, $i_1$, $i_2$, and $i_3$ are unmatched while the remaining six offline vertices are matched by half.
    Then, the next two online vertices, $j_4$ and $j_5$, are connected to the last six
    offline vertices, i.e., $j_4$ to $j_9$.
    After their arrival, $i_4$ and $i_5$ remain matched by half, while $i_6$ to $i_9$ are fully matched. 
    Therefore, the algorithm finds a matching of size $\sfrac{1}{2} \cdot 2 + 1 \cdot 4 = 5$ 
    in expectation, but the optimal matching has size $9$. 
    The competitive ratio is $\sfrac{5}{9}$, which matches the lower
    bound that we want to show.
\end{proof}

\subsection{Erdös--Rényi Upper Triangular Graphs}

Consider the following random bipartite graph that has $n$ vertices on each side.
Each online vertex $1 \le j \le n$ is incident to the offline vertex $i = j$ with certainty,
and each offline vertex $j < i \le n$ is adjacent to $j$
independently with probability $p$, where $0 < p < 1$ is a parameter to
be determined.

By considering the Erdös--Rényi variant of upper triangular graphs
instead of the original ones,
we ensure that with high probability any fixed online vertex is paired
with different offline vertices in its randomized round.
This is effectively the worst-case scenario in the analysis of the OCS algorithm
in Section~\ref{sec:ocs}.
Letting $n = 2^{13}$ and $p = 2^{-6}$,
an empirical evaluation shows that our analysis for the combination
of a two-choice greedy algorithm and with an OCS is nearly optimal.

\begin{theorem}
    The competitive ratio of the two-choice greedy algorithm with the OCS in Algorithm~\ref{alg:OCS-warmup} is at most $0.5057$-competitive.
\end{theorem}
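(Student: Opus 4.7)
The plan is to certify the upper bound by direct computation on the specified random instance. The optimal offline matching clearly has size $n$, since every online vertex $j$ is guaranteed to be adjacent to offline vertex $j$ and so the diagonal $\{(j,j)\}_{j=1}^{n}$ is a perfect matching. Hence the competitive ratio on this family equals $\E |M| / n$, where $M$ is the algorithm's random matching, and it suffices to show $\E |M| \le 0.5057 \cdot n$.

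My approach is a Monte-Carlo evaluation followed by a standard concentration argument. First, I would draw many independent realizations of the Erdős--Rényi upper triangular graph with $n = 2^{13}$ and $p = 2^{-6}$. For each realization I would simulate the two-choice greedy algorithm paired with Algorithm~\ref{alg:OCS-warmup}: the deterministic part of the algorithm (choice of candidates, deterministic/randomized/unmatched classification) is determined by the graph alone, so only the OCS's internal coin flips (sender/receiver bit, the indices $\ell$ and $m$ in each round) need to be sampled. Averaging $|M|$ over the sampled runs produces an empirical estimate $\widehat{\E}|M|$. A bounded-difference argument --- flipping any single edge of the graph or any single OCS coin changes $|M|$ by $O(1)$ --- together with Hoeffding/Azuma-McDiarmid, converts this into a rigorous upper bound on $\E|M|/n$ that holds with arbitrarily high probability as the number of trials grows.

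The main obstacle is a quantitative one: the actual gap between $0.5057$ and the true ratio on this instance is small (on the order of $10^{-3}$), so the number of trials and the precision of the concentration bound must be large enough to certify this margin. The parameters $n = 2^{13}$ and $p = 2^{-6}$ are tuned precisely for this purpose. Choosing $n$ large ensures that boundary effects at the beginning and end of the arrival sequence contribute a negligible $o(n)$ term, and ensures that the typical offline vertex $i$ participates in enough randomized rounds ($\Theta(np) = \Theta(2^7)$) for the OCS's asymptotic matching probability $1 - 2^{-k_i} f_{k_i}$ to govern its contribution. Choosing $p$ small ensures that in any two consecutive randomized rounds involving the same offline vertex $i$, the other elements in those two pairs are (with overwhelming probability) distinct, placing the instance in the binding regime of Lemma~\ref{lem:OCS} so that the OCS's negative-correlation guarantee is essentially tight --- and in particular the algorithm cannot exploit accidental coincidences of pairs to outperform the recurrence $f_k$. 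Together, these choices force $\E|M|/n$ to track the surrogate primal lower bound from Section~\ref{sec:edge-weighted}, which numerically evaluates to just under $0.5057$.
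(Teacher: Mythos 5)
The paper supports this theorem only by the remark that ``an empirical evaluation shows'' the claim, with no written proof, so your Monte-Carlo plan is essentially the paper's own approach; your observations that $\text{OPT} = n$ (the diagonal is a perfect matching with certainty) and that a small $p$ pushes the instance into the binding regime of Lemma~\ref{lem:OCS}, so that $1 - 2^{-k_i}f_{k_i}$ is a nearly exact expression for the probability $i$ is matched, are both correct and are exactly why $n=2^{13}, p=2^{-6}$ are chosen. One caveat: the bounded-difference claim you invoke is shaky for edge flips, since removing one edge can cascade through the deterministic candidate-selection logic and change $|M|$ by more than $O(1)$; this is easily repaired, though, because each i.i.d.\ trial already produces an observable $|M|/n \in [0,1]$, so ordinary Hoeffding over the trial means certifies the same empirical bound without any per-coordinate Lipschitz condition.
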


\begin{theorem}
    The competitive ratio of the two-choice greedy algorithm with the OCS in Algorithm~\ref{alg:improved-OCS} is at most $0.51$-competitive.
\end{theorem}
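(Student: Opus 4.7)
The plan is to verify, via simulation on the explicit family of Erd\H{o}s--R\'enyi upper triangular graphs with $n = 2^{13}$ and $p = 2^{-6}$ proposed in the preceding discussion, that the two-choice greedy algorithm equipped with Algorithm~\ref{alg:improved-OCS} attains expected matching value at most $0.51 n$. Since each online vertex $j$ is adjacent to the offline vertex $i = j$ with certainty, the offline optimum equals $n$ with probability one, so the competitive ratio on this distribution reduces to $\E[|M|]/n$, where the expectation is taken over the random graph, the OCS's internal randomness, and any tie-breaking in the two-choice rule.

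First I would simulate the algorithm faithfully: on each arrival, identify the neighbors minimizing $k_i(1)$, invoke the improved OCS (with sender probability $p^{\star} = (5-\sqrt{13})/3$) whenever at least two such neighbors exist, and match deterministically otherwise, updating all state variables exactly as prescribed by Algorithm~\ref{alg:improved-OCS}. Repeating this over $T$ independent trials---each drawing a fresh random graph together with fresh OCS randomness---yields an empirical mean $\hat\mu$ of $|M|/n$. Because $|M|/n \in [0,1]$, Hoeffding's inequality gives $|\hat\mu - \E[|M|/n]| \le \sqrt{\ln(2/\delta)/(2T)}$ with probability $1 - \delta$; taking, say, $T = 10^6$ and $\delta = 10^{-6}$ drives the additive error to the order of $10^{-3}$, comfortably below the slack to the $0.51$ threshold.

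The specific choice $n = 2^{13}$ and $p = 2^{-6}$ is tuned so that, with high probability, every online round in which the two-choice rule fires sees a \emph{fresh} pair of offline candidates---i.e.\ one whose history of appearances is disjoint from the pairs already processed. This is precisely the worst-case configuration matched by the OCS upper-bound recurrence, and it eliminates finite-size correlations that would otherwise spuriously inflate the algorithm's performance. The main obstacle is therefore not conceptual---no probabilistic argument beyond Hoeffding concentration is required---but rather engineering a simulation careful enough that the reported confidence interval rigorously sits below $0.51$, which is what the source code accompanying the paper is designed to carry out.
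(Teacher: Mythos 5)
Your proposal matches the paper's approach: the paper offers no analytical proof for this theorem, instead relying on the same empirical evaluation of the Erd\H{o}s--R\'enyi upper-triangular instance with $n = 2^{13}$ and $p = 2^{-6}$ that you describe, with the simulation delegated to the accompanying source code. Both arguments are therefore simulation-based---the Hoeffding confidence interval you invoke yields statistical rather than mathematical certainty, which is exactly the level of rigor the paper settles for here.
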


\section{Connections to the Original Algorithm}
\label{sec:origin}


\newcommand{\Color}{\textsf{active}}
\newcommand{\Index}{\textsf{index}}
\newcommand{\Partner}{\textsf{partner}}
\newcommand{\Mark}{\textsf{priority}}
\newcommand{\green}{\textsf{false}}
\newcommand{\blue}{\textsf{true}}

\newcommand{\Gain}{\textsf{gain}}
\newcommand{\AdaptiveGain}{\textsf{adaptive\_gain}}

This section explains the connections between
the online primal-dual algorithm in this paper
to the original algorithm by Fahrbach and Zadimoghaddam~\cite{zadimoghaddam2017online}.
We start by briefly describing the algorithm in Algorithm~\ref{alg:origin},
with minor modifications to make it consistent with the notations in this paper.
Next, we simplify the algorithm by considering the special case of
unweighted online matching and present the result as Algorithm~\ref{alg:origin-unweighted}.
Finally, we explain how
Algorithm~\ref{alg:origin-unweighted} in the unweighted case
is effectively a two-choice greedy algorithm that implicitly uses the
warmup $\sfrac{1}{16}$-OCS from Section~\ref{sec:ocs}.

\subsection{Original Algorithm}

The algorithm uses two parameters $\varepsilon, \delta > 0$
that are later optimized in the analysis in~\cite{zadimoghaddam2017online}.
These parameters are not necessary for explaining the connections between the
two algorithms, so we omit their values.
For each offline vertex $i \in L$, 
it maintains the following state variables that express the behavior of 
the last randomized round involving $i$.
First, it maintains a boolean variable $\Color(i)$ that indicates
whether the realization of the last randomized selection involving vertex $i$
can be adaptively used in the next randomized round in which $i$ is involved.
The goal here is to introduce negative correlation
in the same way that the $\sfrac{1}{16}$-OCS does.
Each offline vertex also maintains two state variables about
the last randomized selection involving $i$:
the corresponding online vertex $\Index(i)$,
and the other offline vertex $\Partner(i)$ in the last randomized round.
Finally, the realization of the last randomized selection
is stored as $\Mark(i)$.
Informally in the notation of this paper,
$\Mark(i) = 0$ corresponds to the case when the
last randomized round involving $i$ is a receiver;
otherwise (i.e., the sender case)
$\Mark(i)$ is $1$ if $i$ is selected the last time, and is $2$ if $i$ is not selected.

\begin{algorithm}
    \caption{Original $0.501$-competitive algorithm for edge-weighted bipartite matching in
    \cite{zadimoghaddam2017online}.}
    \label{alg:origin}
    \begin{algorithmic}
        \STATE \textbf{parameters:~} $\varepsilon$, $\delta$
        \STATE \textbf{state variables:~}
            $\forall i \in L$,
            $\Color(i) \gets \green$,
            $\Index(i) \gets 0$,
            $\Partner(i)\gets 0$,
            $\Mark(i) \gets 0$,
            $S(i) \gets 0$
        \FORALL{online vertex $j$}
        \STATE $M_j \gets \max_{i \in L} \E[\Gain_{ij}] = \int_0^{w_{ij}} \big( 1 - y_i(w) \big) dw$
            %
            \FORALL{offline vertices $j$}
            \IF{$\Color(i) = \blue$ \textbf{and} $w_{ij} \ge w_{i,\Index(i)} - \delta M_j$}
            \STATE $\AdaptiveGain_{ij} \gets (\E[\Gain_{i,\Index(i)}]/3 - (w_{i,\Index(i)} - w_{ij})^+/3 - S(i))^+/12$\;
                \ELSE
                    \STATE $\AdaptiveGain_{ij} \gets 0$
                \ENDIF
            \ENDFOR 
            \STATE $B \gets \{i \in L : w_{ij} \ge w_{i,\Index(i)} - \delta M_j
            \text{ and }
            \E[\Gain_{ij}] + \sfrac{2}{3} \cdot \AdaptiveGain_{ij} \ge (1-\varepsilon) M_j \}$\;
            %
            \IF{$|B| \ge 2$} 
            \hspace*{\fill}
            \textbf{(case 1: there are enough candidates to exploit adaptivity)}
                \STATE pick $i_1, i_2 \in B$ with the largest $\E[\Gain_{ij}] + 2/3 \cdot \AdaptiveGain_{ij}$\;
                \STATE set $\Color(\Partner(i)) \gets \green$ for $i = i_1, i_2$\;
                \STATE set $\Color(i) \gets \blue$, $S(i) \gets 0$, $\Index(i) \gets j$ for $i = i_1, i_2$\;
                \STATE set $\Partner(i_1) \gets i_2$ and $\Partner(i_2) \gets i_1$\;
                \STATE pick $\ell \in \{1, 2\}$ with the larger $\AdaptiveGain_{i_\ell j}$ and let $-\ell$ denote $3 - \ell$\;
                \STATE draw $R \in [0, 1)$ uniformly at random\;
                \IF{$R \in [0,\sfrac{1}{3})$ or $\AdaptiveGain_{i_\ell j} = 0$}
                    \STATE \textbf{if} $\Mark_{i_\ell} = 2$ \textbf{and} $\AdaptiveGain_{i_\ell j} > 0$ \textbf{then} match $j$ to $i_\ell$\;
                    \STATE \textbf{if} $\Mark_{i_\ell} = 1$ \textbf{and} $\AdaptiveGain_{i_\ell j} > 0$ \textbf{then} match $j$ to $i_{-\ell}$\;
                    \STATE \textbf{if} $\Mark_{i_\ell} = 0$ \textbf{or} $\AdaptiveGain_{i_\ell j} = 0$ \textbf{then} match $j$ to $i_1$ or $i_2$ with equal probability
                    \STATE set $\Mark_{i_1} \gets 0$ and $\Mark_{i_2} \gets 0$\;
                \ELSE
                    \STATE \textbf{if} $R \in [\sfrac{1}{3}, \sfrac{2}{3})$ \textbf{then} assign $i$ to $i_1$ and set $\Mark_{i_1} \gets 1$, $\Mark_{i_2} \gets 2$
                    \STATE \textbf{if} $R \in [\sfrac{2}{3}, 1)$ \textbf{then} assign $i$ to $i_2$ and set $\Mark_{i_1} \gets 2$, $\Mark_{i_2} \gets 1$
                \ENDIF
            \ELSE \hspace*{\fill} \textbf{(case 2: there is no adaptivity)}
                \STATE $B' \gets \{i \in L : (w_{ij} \ge w_{i,\Index(i)} - \delta M_j) \textnormal{ and } (\E[\Gain_{ij}] \ge (1-\varepsilon)M_j) \}$\;
                \STATE $C \gets \{i \in L : (w_{ij} < w_{i,\Index(i)} - \delta M_j) \textnormal{ and } (\E[\Gain_{ij}] \ge (1-\varepsilon)M_j)
            \}$\;
                \IF{$|B' \cup C| = 1$}
                    \STATE match $j$ to $i_1 \gets \argmax_{i \in L} \E[\Gain_{ij}]$\;
                    \STATE set $S(i_1) \gets S(i_1) + M_j$\;
                \ELSE
                    \IF{$B' \ne \emptyset$}
                        \STATE $i_1 \gets$ the only advertiser in $B'$
                    \ELSE
                        \STATE $i_1 \gets \argmax_{i \in C}\E[\Gain_{ij}]$
                    \ENDIF
                    \STATE $i_2 \gets \argmax_{i \in C \setminus\{i_1\}} \E[\Gain_{ij}]$\;
                    \STATE match $j$ to $i_1$ or $i_2$ with equal probability\;
                    \STATE set $S(i_1) \gets S(i_1) + M_j/2$ and $S(i_2) \gets S(i_2) + M_j/2$
                \ENDIF
            \ENDIF 
        \ENDFOR
    \end{algorithmic}
\end{algorithm}

For each online vertex $j \in R$, the matching decision is made using
two different quality measures for the offline vertices.
For each offline vertex $i$,
$\Gain_{ij}$ denotes how much $i$'s heaviest edge weight would increase,
should $j$ be matched to $i$.
The first measure is the expectation of $\Gain_{ij}$,
which equals $\int_0^{w_{ij}} ( 1 - y_i(w) ) dw$
in the CCDF viewpoint of this paper.
It also defines $\AdaptiveGain_{ij}$,
which captures the extra value of matching $j$ to $i$
due to the ability to make adaptive decisions based on the realization of
the last randomized round involving $i$.
Informally, this corresponds to the benefit of using the OCS for
negative correlation in our primal-dual algorithm.
The formula of $\AdaptiveGain_{ij}$ is
derived from the analysis in \cite{zadimoghaddam2017online}
and its interpretation is not necessary
for understanding the connections between the two algorithms.
We refer the reader to~\cite{zadimoghaddam2017online} for a more detailed
explanation of Algorithm~\ref{alg:origin} in the general edge-weighted online
matching problem.

\subsection{Simplified and Symmetrized Algorithm for Unweighted Online Matching}

We now focus on a simplified algorithm in the special case of unweighted
online matching in order to better explain
the connections to our primal-dual algorithm in this paper.
In this setting, $w_{ij} \in \{0,1\}$ for any $i \in L$ and any $j \in R$.

\paragraph{Simplifying Case 2.}
The case when $|B| \le 1$ in Algorithm~\ref{alg:origin} can be significantly simplified in the unweighted case.
Observe that for any offline vertex $i \in L$, either $w_{ij} = 0$,
in which case we have $\E \big[ \Gain_{ij} \big] = 0$,
or $w_{ij} = 1$, in which case we have $w_{ij} > w_{i,\Index(i)}-\delta M_j$.
In other words, $C$ is always an empty set.
On the other hand,
$|B'|$ is nonempty because the offline neighbor
with the maximum value of $\E \big[ \Gain_{ij} \big]$ is always in the set.
Further observe that $B'$ must be a singleton because $B'$
is a subset of $B$, which has at most one element
since we are in the second case of the algorithm.
Putting this all together,
the algorithm always matches $i$ to the unique element in $B'$.
This corresponds to a deterministic round in our primal-dual
algorithm in this paper.


\paragraph{Simplifying Gains and Adaptive Gains.}
Recall that $x_i$ denotes the probability that an offline vertex $i$ is matched.
Thus, the expected gain $\E \big[ \Gain_{ij} \big]$
in the unweighted case equals $1 - x_i$.
Observe that the expected gain is $0$ if an offline vertex $i$ is involved in case 2
(i.e., a deterministic round).

Next, we consider the adaptive gain values.
In the unweighted case, the second term in the formula for computing
adaptive gains is always $0$ for any offline neighbor $i$ of the online vertex $j$,
because both $w_{i,\Index(i)}$ and $w_{ij}$ are $1$.
The third term, on the other hand,
equals $0$ if $i$ has never been in case 2,
and otherwise equals $\E[ \Gain_{i,\Index(i)}]$
due to the discussion above on the simplification of case~2 in the unweighted case.
In other words, the adaptive gain can be simplified as
$\E[ \Gain_{i, \Index(i)} ]/36$ for any $i$ that has not yet been matched deterministically,
and is $0$ otherwise.

\paragraph{Simplifying the Candidate Set.}
Since both the gain and the adaptive gain values
are $0$ for any offline vertex that has been deterministically matched,
they cannot appear in the candidate set $B$.
Therefore, it suffices to consider $j$'s offline neighbors that have not
yet been matched deterministically.
For such vertices, the first condition of the candidate set $B$
holds trivially because $w_{ij} = 1$ and
$w_{i, \Index(i)} - \delta M_j = 1 - \delta M_j < 1$.
In conclusion, it suffices to keep only the second condition.

\begin{algorithm}
    \caption{Simplified and symmetrized version of the original algorithm in the unweighted case.}
    \label{alg:origin-unweighted}
    \begin{algorithmic}
        \STATE \textbf{parameters:~} $\varepsilon$, $\delta$
        \STATE \textbf{state variables:~}
            $\forall i \in L$,
            $\Color(i) \gets \green$,
            $\Index(i) \gets 0$,
            $\Partner(i)\gets 0$,
            $\Mark(i) \gets 0$,
            $S(i) \gets 0$
        \FORALL{online vertices $j$}
            \STATE let $U_j$ be the set of neighbors of $j$ that have \emph{not} been matched deterministically\;
            \STATE $M_j \gets \max_{i \in U_j} \E[\Gain_{ij}] = 1 - x_i$
            %
            \FORALL{$i \in U_j$}
                \IF{$\Color(i) = \blue$}
                    \STATE $\AdaptiveGain_{ij} \gets \E[\Gain_{i,\Index(i)}] / 36$\;
                \ELSE
                    \STATE $\AdaptiveGain_{ij} \gets 0$\;
                \ENDIF
            \ENDFOR 
            \STATE $B \gets \{i \in U_j : \E[\Gain_{ij}] + \sfrac{2}{3} \cdot \AdaptiveGain_{ij} \ge (1-\varepsilon) M_j \}$\;
            %
            \IF{$|B| \ge 2$} 
            \hspace*{\fill}
            \textbf{(randomized round)}
                \STATE pick $i_1, i_2 \in B$ with the largest $\E[\Gain_{ij}] + \sfrac{2}{3} \cdot \AdaptiveGain_{ij}$\;
                \STATE set $\Color(\Partner(i)) \gets \green$ for $i = i_1, i_2$\;
                \STATE set $\Color(i) \gets \blue$, $S(i) \gets 0$, $\Index(i) \gets j$ for $i = i_1, i_2$\;
                \STATE set $\Partner(i_1) \gets i_2$ and $\Partner(i_2) \gets i_1$\;
                \STATE draw $\ell \in \{1, 2\}$ uniformly at random and let $-\ell$ denote $3 - \ell$\;
                \STATE draw $R \in [0, 1)$ uniformly at random\;
                \IF{$R \in [0,\sfrac{1}{3})$}
                    \STATE \textbf{if} $\Mark_{i_\ell} = 2$ \textbf{and} $\AdaptiveGain_{i_\ell j} > 0$ \textbf{then} match $j$ to $i_\ell$\;
                    \STATE \textbf{if} $\Mark_{i_\ell} = 1$ \textbf{and} $\AdaptiveGain_{i_\ell j} > 0$ \textbf{then} match $j$ to $i_{-\ell}$\;
                    \STATE \textbf{if} $\Mark_{i_\ell} = 0$ \textbf{or} $\AdaptiveGain_{i_\ell j} = 0$ \textbf{then} match $j$ to $i_1$ or $i_2$ with equal probability
                    \STATE set $\Mark_{i_1} \gets 0$ and $\Mark_{i_2} \gets 0$\;
                \ELSE
                    \STATE \textbf{if} $R \in [\sfrac{1}{3}, \sfrac{2}{3})$ \textbf{then} assign $i$ to $i_1$ and set $\Mark_{i_1} \gets 1$, $\Mark_{i_2} \gets 2$
                    \STATE \textbf{if} $R \in [\sfrac{2}{3}, 1)$ \textbf{then} assign $i$ to $i_2$ and set $\Mark_{i_1} \gets 2$, $\Mark_{i_2} \gets 1$
                \ENDIF
            \ELSE \hspace*{\fill} \textbf{(deterministic round)}
                \STATE match $j$ to the $i$ with the largest $\E[\Gain_{ij}]$\;
            \ENDIF 
        \ENDFOR
    \end{algorithmic}
\end{algorithm}

\paragraph{Symmetrizing the Choice of $\ell$.}
We observe that choosing $\ell$ to maximize the adaptive gain
has no significance in the analysis by~\cite{zadimoghaddam2017online}.
The analysis therein distributes the benefit of making
adaptive decision w.r.t.\
to $i_\ell$ equally between $i_1$ and $i_2$,
and for $i_{-\ell}$ it merely needs its share to be at
least half the benefit of making adaptive decision w.r.t.\ $i_{-\ell}$.
To this end, we symmetrize the choice of $\ell \in \{1, 2\}$
to be uniformly at random.
Doing so makes the connection to the algorithm in this paper more apparent.

\paragraph{Optimizing the Efficiency of Adaptivity.}
Finally, we remove the condition on the value of adaptive gain being $0$
in the if statement in the first case of the algorithm.
This is driven by the observation that the online vertex $j$
is matched randomly to $i_1$ and $i_2$ with equal probability
whenever it holds, which is identical to the else case of the if statement.
Moreover, the latter case allows us to store the realization of the
random selection to be exploited adaptively later, while the former does not.
Hence, other than making the algorithm closer to the OCS introduced
in this paper,
this technical change also improves the efficiency 
of adaptivity in the original algorithm.

\bigskip

This simplified and symmetrized version of the original algorithm
in the unweighted case
is summarized as Algorithm~\ref{alg:origin-unweighted}.

\subsection{Connections Between the Unweighted Algorithms}
%

We focus on the randomized rounds to explain the connections
to the warmup $\sfrac{1}{16}$-OCS in Section~\ref{sec:ocs}.
If $R \in [\sfrac{1}{3}, \sfrac{2}{3})$, it corresponds to a sender round in the OCS where $i_1$ is selected.
If $R \in [\sfrac{2}{3}, 1)$, it corresponds to a sender round where $i_2$ is selected.
If $R \in [0, \sfrac{1}{3})$, it corresponds to a receiver round.
The choice of $\ell$ corresponds to the choice of a random
in-arc by a receiver in the OCS.
The state variable $\Color(i)$ ensures that each sender's
selection is adaptively used by at most one receiver.
In the OCS, the sender randomly picks an out-arc as the potential receiver.
In contrast, Algorithm~\ref{alg:origin-unweighted}
deterministically picks the out-neighbor which arrives earlier;
\cite{zadimoghaddam2017online} effectively uses an amortization in the analysis to distribute the benefit between the two out-arcs.

In conclusion, aside from the different choices of constants and
the use of an amortization in the analysis instead of a symmetrized algorithm,
the $0.501-$-competitive algorithm in~\cite{zadimoghaddam2017online}
implicitly contains the ideas behind
the warmup $\sfrac{1}{16}$-OCS presented
as Algorithm~\ref{alg:OCS-warmup} in this paper.

\end{document}